\def\Secref#1{Section~\ref{#1}}
\def\1{\bm{1}}
\DeclareMathAlphabet{\mathsfit}{\encodingdefault}{\sfdefault}{m}{sl}
\SetMathAlphabet{\mathsfit}{bold}{\encodingdefault}{\sfdefault}{bx}{n}
\def\gF{{\mathcal{F}}}
\def\gG{{\mathcal{G}}}
\def\gH{{\mathcal{H}}}
\def\gN{{\mathcal{N}}}
\def\gP{{\mathcal{P}}}
\def\sC{{\mathbb{C}}}
\def\sN{{\mathbb{N}}}
\def\sR{{\mathbb{R}}}
\def\sV{{\mathbb{V}}}
\newcommand{\E}{\mathbb{E}}
\DeclareMathOperator*{\argmin}{arg\,min}
\newtheorem{definition}{Definition}
\newtheorem{lemma}{Lemma}
\newtheorem{theorem}{Theorem}
\newtheorem{corollary}{Corollary}
\newtheorem{assumption}{Assumption}
\newtheorem{proposition}{Proposition}
\title{Impact of Computation in Integral Reinforcement Learning for Continuous-Time Control}
\author{Wenhan Cao\textsuperscript{1,2} \quad Wei Pan\textsuperscript{1}
\\
\textsuperscript{1}Department of Computer Science, University of Manchester \\\textsuperscript{2}School of Vehicle and Mobility, Tsinghua University
\\
\texttt{cwh19@mails.tsinghua.edu.cn} \quad \texttt{wei.pan@manchester.ac.uk}
}
\begin{document}

\maketitle

\begin{abstract} 
 Integral reinforcement learning (IntRL) demands the precise computation of the utility function's integral at its policy evaluation (PEV) stage. This is achieved through quadrature rules, which are weighted sums of utility functions evaluated from state samples obtained in discrete time. Our research reveals a critical yet underexplored phenomenon: the choice of the computational method -- in this case, the quadrature rule -- can significantly impact control performance. This impact is traced back to the fact that computational errors introduced in the PEV stage can affect the policy iteration's convergence behavior, which in turn affects the learned controller. To elucidate how computation impacts control, we draw a parallel between IntRL's policy iteration and Newton's method applied to the Hamilton-Jacobi-Bellman equation. In this light, computational error in PEV manifests as an extra error term in each iteration of Newton's method, with its upper bound proportional to the computational error. Further, we demonstrate that when the utility function resides in a reproducing kernel Hilbert space (RKHS), the optimal quadrature is achievable by employing Bayesian quadrature with the RKHS-inducing kernel function. We prove that the local convergence rates for IntRL using the trapezoidal rule and Bayesian quadrature with a Matérn kernel to be $O(N^{-2})$ and $O(N^{-b})$, where $N$ is the number of evenly-spaced samples and $b$ is the Matérn kernel's smoothness parameter. These theoretical findings are finally validated by two canonical control tasks.
\end{abstract}


\section{Introduction}\label{sec.intro}
Recent advancements in reinforcement learning (RL) have prominently focused on discrete-time (DT) systems.  Notable applications range from Atari games \cite{schrittwieser2020mastering} to the game of Go \cite{silver2016mastering,silver2017mastering} as well as large language models  \cite{bubeck2023sparks}. However, most physical and biological systems are inherently continuous in time and driven by differential equation dynamics. This inherent discrepancy underscores the need for the evolution of continuous-time RL (CTRL) algorithms \cite{baird1994reinforcement,lewis1998neural,abu2005nearly,vrabie2009neural,vrabie2009adaptive,lewis2009reinforcement,vamvoudakis2010online,modares2014integral,lee2014integral,modares2014optimal,vamvoudakis2014online,yildiz2021continuous, holt2023neural,wallace2023continuous}. 

Unfortunately, adopting CTRL presents both conceptual and algorithmic challenges. First, Q-functions are known to vanish in CT systems \cite{baird1994reinforcement,abu2005nearly,lewis2009reinforcement},
which makes even simple RL algorithms such as Q-learning infeasible for CT systems. Second, the one-step transition model in DT system needs to be replaced with time derivatives, which leads to the CT Bellman equation (also called nonlinear Lyapunov equation or generalized Hamilton–Jacobi Bellman equation) governed by a complex partial differential equation (PDE) 
rather than a simple algebraic equation (AE) as in DT systems \cite{lewis2009reinforcement,vrabie2009neural,vamvoudakis2010online}: 
\begin{subequations}
\begin{alignat}{2}
    \textbf{CT Bellman Equation (PDE):} & \quad & \dot{V}^u(x(t)) &= -l(x(t), u(x(t)), 
    \label{eq.CT Bellman Equation}
    \\
    \textbf{DT Bellman Equation (AE):} & \quad & \Delta V^u(x(k)) &= -l(x(k), u(x(k)).
    \label{eq.DT Bellman Equation}
\end{alignat}
\end{subequations}
In \eqref{eq.CT Bellman Equation} and \eqref{eq.DT Bellman Equation}, the symbols $x, u, l, V$ represent the state, control policy, utility 
function and value function, respectively. While traditional DTRL focuses on maximizing a ``reward function", our work, in line with most CTRL literature, employs a ``utility function" to represent costs or penalties, aiming to minimize its associated value. Solving the Bellman equation refers to the policy evaluation (PEV), which is a vital step in policy iteration (PI) of RL \cite{sutton1998introduction,lewis2009reinforcement}. However, CT Bellman equation cannot be solved directly because the explicit form of $l(x(t),u(x(t)))$ hinges on the explicit form of the state trajectory $x(t)$, which is generally unknown.

The CT Bellman equation can be formulated as the interval reinforcement form \cite{vrabie2009neural,lewis2009reinforcement,modares2014integral,modares2014optimal,vamvoudakis2014online}, which computes the value function $V$ through the integration of the utility function:
\begin{equation}\label{eq.interval Bellman equation}
V^{u}(x(t)) = \xi(l) + V^{u}(x(t+\Delta{T})), \quad \xi(l) = \int_{t}^{t+\Delta{T}} l(x(s), u(s)) \, \mathrm{d}s.
\end{equation}
Here, $\xi(l)$ represents the integral of the utility function. When the system's internal dynamics is unknown, $\xi(l)$ can only be computed by a quadrature rule solely utilizing state samples in discrete time, i.e., 
\begin{equation}\label{eq.quadrature rule}
\xi(l) \approx \hat{\xi}(l) = \sum_{i=1}^N w_i l(x(t_i), u(x(t_i))). 
\end{equation}
Here, the time instants for collecting state samples are $t_1 = t < t_2 < \cdots < t_N = t+\Delta{T}$, with $N$ being the sample size. The quadrature rule is characterized by a set of weights $\{w_i\}_{i=1}^N$, which can be chosen from classical methods like the trapezoidal rule or advanced probabilistic methods such as Bayesian quadrature (BQ) \cite{o1991bayes,karvonen2017classical, cockayne2019bayesian, briol2019probabilistic, hennig2022probabilistic}. For simplicity, we denote the computational error for the quadrature rule as  $\text{Err}(\hat{\xi}(l)) := |\xi(l) - \hat{\xi}(l)|$. In real-world applications, sensors of autonomous systems are the primary source of state samples. For example, state samples are gathered from various sensors when the CTRL algorithm trains a drone navigation controller. These sensors include the gyroscope for orientation, the accelerometer for motion detection, the optical flow sensor for positional awareness, the barometer for altitude measurement, and GPS for global positioning. If these sensors operate at a sampling frequency of 10 Hz and the time interval
$\Delta T$ in \eqref{eq.interval Bellman equation} is set to 1 second, we would obtain
$N=11$ state samples within that duration.

\begin{figure}[ht]
    \begin{minipage}{0.55 \textwidth}
    The impact of computational methods on the solution of the CT Bellman equation is emphasized by \cite{yildiz2021continuous}.
Echoing this finding, our study further reveals that the choice of computational method can also impact the performance of the learned controller. As illustrated in Figure \ref{fig.motivation}, we investigate the performance of the Integral Reinforcement Learning (IntRL) algorithm \cite{vrabie2009neural}. We utilize both the trapezoidal rule and the BQ with a Matérn kernel to compute the PEV step, applying different sample sizes $N$. Our results for a canonical control task (Example 1 of \cite{vrabie2009neural}) indicate that larger sample sizes diminish accumulated costs. In addition, we observe notable differences in accumulated costs between these two quadrature methods. This trend highlights a crucial insight: \textbf{the computational method itself can be a determining factor in the control performance}. This phenomenon is not exclusive to the IntRL algorithm but applies to the CTRL algorithm when internal dynamics is known, as elaborated in Appendix \ref{appendix.motivating example}.
    \end{minipage}
    \hfill
    \begin{minipage}{0.40\textwidth}
        \centering
    \includegraphics[width=\linewidth]{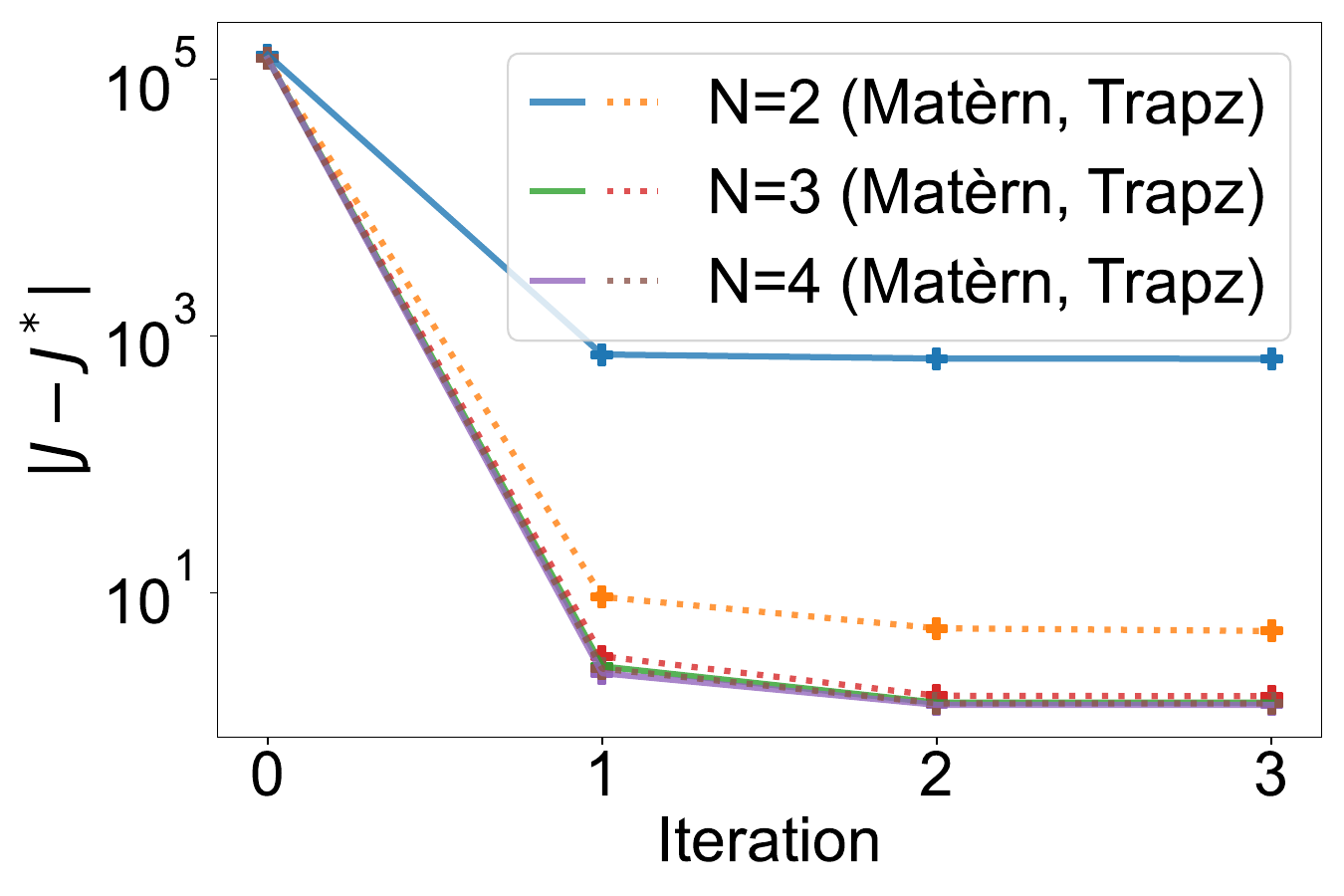}
        \caption{Evaluation of accumulated cost $J$ for controllers, computed through the trapezoidal rule and BQ with a Matérn kernel, compared to the optimal controller cost $J^*$. This simulation is performed on different sample sizes $N$ and relies on evenly-spaced samples within a canonical control task in \cite{vrabie2009neural}.}
        \label{fig.motivation}
    \end{minipage}
\end{figure}
    
The impact of computational methods — specifically, the quadrature rule as described in \eqref{eq.quadrature rule} for unknown internal dynamics scenarios — on control performance is a compelling yet underexplored topic. This gap prompts crucial questions: How does the computation impact control performance? How can we quantitatively describe this impact? Despite the importance of these questions, existing literature has yet to address them.
Our paper aims to bridge this gap by investigating the intricate relationship between computational methods and controller performance. We choose to focus our analysis on the IntRL algorithm \cite{vrabie2009neural}. This choice is motivated by IntRL's role as a foundational CTRL algorithm in unknown internal dynamics scenarios \cite{modares2014integral,lee2014integral,modares2014optimal,vamvoudakis2014online,wallace2023continuous}. As we will illustrate in Section \ref{sec.Problem Formulation}, understanding the impact of computation on control in scenarios with unknown internal dynamics is more crucial and merits deeper examination. 

The contributions of this paper can be summarized as follows: \textbf{(1)} We show that the PEV step in IntRL essentially computes the integral of the utility function. This computation relies solely on state samples obtained at discrete time intervals, introducing an inherent computational error.  This error is bounded by the product of the integrand's norm in the reproducing kernel Hilbert space (RKHS), and the worst-case error. When employed as the quadrature rule, BQ minimizes the worst-case error, which coincides precisely with BQ's posterior covariance. Additionally, we analyze the computational error's convergence rate concerning the sample size, focusing on both Wiener and Matérn kernels for evenly spaced samples.
\textbf{(2)}
We demonstrate that PI of IntRL can be interpreted as Newton's method applied to the Hamilton-Jacobi-Bellman (HJB) equation. Viewed from this perspective, the computational error in PEV acts as an extra error term in each Newton's method iteration, with its upper bound proportional to the computational error. 
\textbf{(3)} We present the local convergence rate of PI for IntRL, in terms of sample size, for both the trapezoidal rule and BQ using the Matérn kernel. Specifically, the convergence rate for PI is $O(N^{-2})$ when using the trapezoidal rule, and $O(N^{-b})$ when using BQ with a Matérn kernel, where $b$ is the smoothness parameter. Our paper is the first to explore the impact of computational errors on the controller learned by IntRL.

The remainder of this paper is organized as follows: Section II provides a detailed problem formulation. Our main results are presented in Section III, followed by simulations in Section IV. Section V draws conclusions and discussions on this new topic.

\section{Problem Formulation}\label{sec.Problem Formulation}
We assume that the governing system is control affine and the system's internal dynamics $f$ is not necessarily known throughout the paper:
\begin{equation}\label{eq.system equation}
\Dot{x} = f(x) + g(x)u,
\end{equation}
where $x \in \Omega \subseteq \mathbb{R}^{n_x}$ is the system state, $u \in \mathbb{R}^{n_u}$ is the control input, $f: \mathbb{R}^{n_x} \to \mathbb{R}^{n_x}$ and $g: \mathbb{R}^{n_x} \to \mathbb{R}^{{n_x} \times {n_u}}$ are locally Lispchitz mappings with $f(0) = 0$. The objective is to find a control policy $u$ that minimizes the performance index:
\begin{equation}\label{eq.cost function}
J\left(x_0, u\right) = \int_{0}^{\infty} l\left( x(s), u(s) \right)\,\mathrm{d}s, \quad x(0) = x_0,
\end{equation}
where $l(x, u) = q(x) + u^{\top} R(x) u$ is the utility function, with
$q(x)$ and $R(x)$ being the positive definite function and positive definite matrix, respectively. Furthermore, the system \eqref{eq.system equation} is assumed to be zero-state observable through $q$~\cite{beard1997galerkin}. 
The optimal control policy $u^*$ and its associated optimal value function $V^*$ are defined as: 
\begin{equation}\label{eq.optimal value function}
V^*(x_0) := \min_{u \in A(\Omega)} J(x_0, u), \quad
u^*(x_0) := \argmin_{u \in A(\Omega)} J(x_0, u).
\end{equation}
Here, $u \in A(\Omega)$ represents that $u$ is an admissible policy on $\Omega$. For more details of the definition of an admissible policy, please refer to Appendix \ref{appendix.Definition of Admissible Policies}. The optimal value function can be determined by the HJB equation, which is illustrated in the following theorem:
\begin{theorem}[HJB Equation Properties \cite{vrabie2009neural}]
A unique, positive definite, and continuous function $V^*$ serves as the solution to the HJB equation:
\begin{equation}\label{eq.HJB equation}
\emph{\textbf{HJB Equation:}} \quad G(V^*) = 0, \quad G(V) := \left(\nabla_x {V}\right)^{\top}f -\frac{1}{4} \left(\nabla_x {V}\right)^{\top}gR^{-1}g^{\top}\nabla_x {V} + q.
\end{equation}
In this case, $V^*$ is the optimal value function defined in \eqref{eq.optimal value function}. Consequently, the optimal control policy can be represented as $u^*(x) = -\frac{1}{2}R^{-1}g(x)^{\top}\nabla_x {V^*}$.
\end{theorem}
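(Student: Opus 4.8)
The plan is to establish the result in two movements: first derive the explicit form of $G$ and of the minimizing feedback by pointwise minimization over $u$, and then run a verification-plus-completion-of-squares argument to identify the HJB solution with the optimal value function defined in \eqref{eq.optimal value function}.

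First I would invoke the dynamic programming principle for the infinite-horizon problem \eqref{eq.cost function}, which yields the stationarity condition $\min_{u}\{\, l(x,u) + (\nabla_x V)^{\top}(f + gu)\,\} = 0$. Because $l(x,u)=q(x)+u^{\top}R(x)u$ is strictly convex in $u$ (as $R$ is positive definite) and the dynamics \eqref{eq.system equation} is affine in $u$, the inner minimization has a closed form: setting the $u$-gradient to zero gives $u^{*}(x) = -\tfrac{1}{2} R^{-1} g^{\top}\nabla_x V$. Substituting this minimizer back and collecting terms — the cross term and the quadratic term combine into $-\tfrac{1}{4}(\nabla_x V)^{\top} g R^{-1} g^{\top}\nabla_x V$ — reproduces exactly $G(V)=0$ as stated, while simultaneously delivering the claimed feedback form of $u^{*}$.

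The core of the argument is the verification step showing that any positive definite $C^{1}$ solution $V$ of $G(V)=0$ coincides with $V^{*}$. The key algebraic identity is a completion of squares: using $G(V)=0$ to eliminate $(\nabla_x V)^{\top} f$, one checks that for an arbitrary admissible $u \in A(\Omega)$,
\[
l(x,u) + \frac{\mathrm{d}}{\mathrm{d}t} V(x(t)) = \left(u - u^{*}\right)^{\top} R \left(u - u^{*}\right) \ge 0,
\]
with $u^{*} = -\tfrac{1}{2} R^{-1} g^{\top}\nabla_x V$. Integrating from $0$ to $T$ and letting $T \to \infty$ — here admissibility drives $x(T) \to 0$, and positive definiteness together with $V(0)=0$ forces $V(x(T)) \to 0$ — gives
\[
J(x_0,u) = V(x_0) + \int_{0}^{\infty} \left(u-u^{*}\right)^{\top} R \left(u-u^{*}\right)\,\mathrm{d}s \ge V(x_0),
\]
with equality precisely when $u \equiv u^{*}$. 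Hence $V(x_0) = \min_{u} J(x_0,u) = V^{*}(x_0)$ and the stated feedback is optimal; since $V^{*}$ is uniquely pinned down by \eqref{eq.optimal value function}, this also forces uniqueness of the HJB solution in this class. Positive definiteness of $V^{*}$ then follows from positive definiteness of $q$ together with zero-state observability, which prevents the accumulated cost from vanishing along any nontrivial trajectory.

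The main obstacle is not the verification step — which is essentially the completion of squares above — but the existence-and-regularity half: establishing a priori that $V^{*}$ is continuously differentiable, so that $\nabla_x V^{*}$ is well defined and the PDE holds in the classical sense. Here I would lean on the structural assumptions (control-affine dynamics, local Lipschitz continuity of $f,g$, positive definiteness of $q$ and $R$, and zero-state observability), which guarantee an admissible stabilizing policy and hence a finite, continuous $V^{*}$; the requisite smoothness then follows from the standard smooth-dependence-on-initial-conditions theory underlying the cited result. This is the technically heaviest ingredient and the part I expect to demand the most care to state rigorously.
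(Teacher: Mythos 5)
The paper does not actually prove this theorem: it is imported verbatim from the cited reference \cite{vrabie2009neural}, so there is no in-paper argument to compare against. On its own merits, your proposal is the standard verification-theorem route and the algebra is right: pointwise minimization of $l(x,u)+(\nabla_x V)^{\top}(f+gu)$ over $u$ gives $u^*=-\tfrac12 R^{-1}g^{\top}\nabla_x V$ and, after substitution, exactly the quadratic form $-\tfrac14(\nabla_x V)^{\top}gR^{-1}g^{\top}\nabla_x V$ appearing in $G$; the completion-of-squares identity $l(x,u)+\dot V=(u-u^*)^{\top}R(u-u^*)$ is correct and yields $J(x_0,u)\ge V(x_0)$ for every admissible $u$ once $V(x(T))\to 0$.

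There is one step you assert rather than establish, and it is load-bearing: the inequality $J(x_0,u)\ge V(x_0)$ over admissible $u$ only gives $V^*(x_0)\ge V(x_0)$. To conclude equality you must exhibit an admissible policy attaining the bound, i.e.\ you must show that the candidate feedback $u^*$ itself belongs to $A(\Omega)$ --- that it is continuous, vanishes at the origin, stabilizes \eqref{eq.system equation} on $\Omega$, and yields finite cost. Saying ``equality precisely when $u\equiv u^*$'' presupposes this. The fix is standard but should be stated: along the closed loop under $u^*$, $G(V)=0$ gives $\dot V=-q-{u^*}^{\top}Ru^*\le -q(x)<0$ for $x\ne 0$, so $V$ is a Lyapunov function and the origin is asymptotically stable; finiteness of $J(x_0,u^*)=V(x_0)-\lim_{T\to\infty}V(x(T))=V(x_0)$ follows, and continuity of $u^*$ with $u^*(0)=0$ follows from $V\in C^1$ positive definite (so $\nabla_x V(0)=0$) and $g$ locally Lipschitz. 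Beyond that, you correctly identify and honestly defer the genuinely hard half --- existence and $C^1$ regularity of $V^*$ so that the PDE holds classically --- which is also where the cited source carries the technical weight; your sketch proves uniqueness within the class of positive definite $C^1$ solutions, not existence.
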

While solving the HJB equation poses a challenge as it lacks an analytical solution, IntRL tackles this issue by utilizing PI to iteratively solve the infinite horizon optimal control problem. This approach is taken without necessitating any knowledge of the system's internal dynamics $f$ \cite{vrabie2009neural}.
Consider $u^{(0)}(x(t)) \in A(\Omega)$ be an admissible policy, PI of IntRL performs PEV and policy improvement (PIM) iteratively:
\begin{subequations}
\begin{alignat}{2}
    \textbf{PEV:} \quad & V^{(i)}\left(x(t)\right) &&= \int_{t}^{t+\Delta{T}} l(x(s), u^{(i)}(x(s))\,\mathrm{d}s + V^{(i)}\left(x(t+\Delta{T})\right), \;
    V^{(i)}(x) = 0,
    \label{eq.PEV}
    \\
    \textbf{PIM:} \quad & u^{(i+1)}(x) &&= -\frac{1}{2}R^{-1}g(x)^{\top}\nabla_x {V^{(i)}}.
    \label{eq.PIM}
\end{alignat}
\end{subequations}
In the PI process, the PEV step \eqref{eq.PEV} assesses the performance of the currently derived control policy, while the PIM step \eqref{eq.PIM} refines this policy based on the evaluated value function. According to \cite{vrabie2009neural}, these iterative steps converge to an optimal control policy and its corresponding optimal value function when computations, 
i.e., the integration of the utility function $\xi(l)$, are performed flawlessly. However, any computational error $\text{Err}(\hat{\xi}(l))$ of the quadrature rule in \eqref{eq.quadrature rule} can adversely affect the accuracy of PEV. This inaccuracy can subsequently influence the updated control policies during the PIM step. Such computational errors may accumulate over successive iterations of the PI process and finally impact the learned controller, leading to the phenomenon 
of ``computation impacts control" in IntRL.

The impact of computation requires more consideration for systems with unknown internal dynamics. When the system's internal dynamics is known, adaptive numerical solvers such as the Runge-Kutta series, offer an accurate solution to the CT Bellman equation (refer to Appendix \ref{appendix.CT Bellman equation with Known internal dynamics}). These solvers adjust their step sizes according to the local behavior of the equation they solve, promising high computational accuracy. However, this step-size adjustment mechanism becomes a limitation in unknown internal dynamics cases. For instance, the Runge-Kutta (4,5) method produces both fourth- and fifth-order approximations, and the difference between the two approximations refines the step size. In real-world autonomous systems, state samples are typically acquired from sensors at evenly spaced intervals, challenging alignment with adaptive solver steps. Hence, adaptive numerical solvers are not suitable for unknown internal dynamics. In such cases, approximation methods such as quadrature rules become essential, but might introduce significant computational errors in the PEV step. This fact underscores the importance of considering the computational impact on control performance when facing unknown internal dynamics.

Given the preceding discussion, it is evident that computational errors significantly influence control performance by affecting the convergence behavior of PI. With this understanding as the backdrop, this paper aims to examine the convergence rate of PI for IntRL in the presence of computational errors. Specifically, we seek to address two key questions:
\textbf{(1)} How does the computational error $\text{Err}(\hat{\xi}(l))$ in the PEV step of PI impact the IntRL algorithm's convergence? \textbf{(2)} How to quantify the computational error $\text{Err}(\hat{\xi}(l))$ in the PEV step and what is the optimal quadrature rule?

\section{Theoretical Analysis}

\subsection{Convergence Analysis of PI}\label{subsection.Convergence Analysis}

\begin{figure}[h]
    \centering    \includegraphics[width=0.84\linewidth]{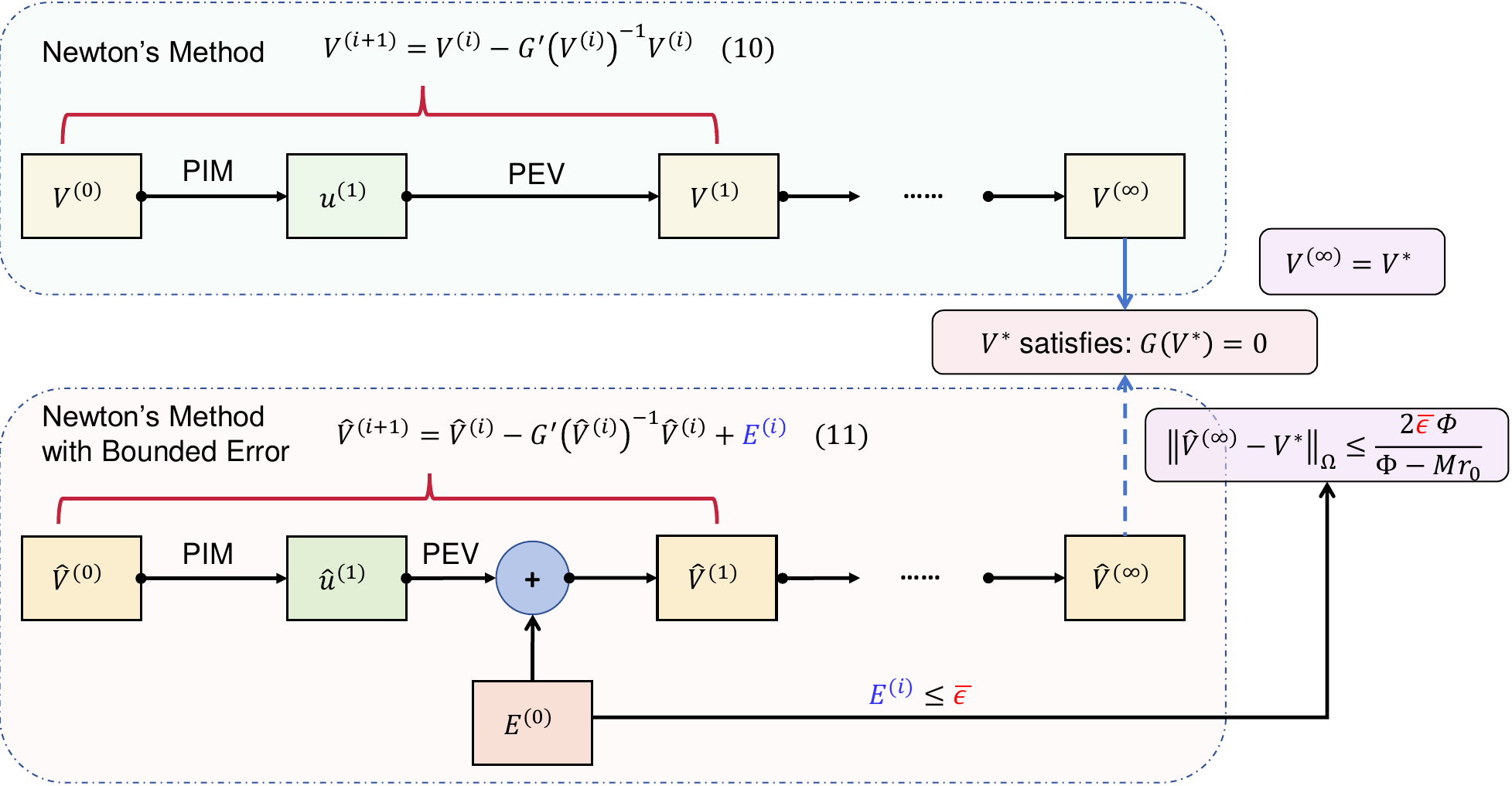}
    \caption{Relationship between PI and Newton's method. The standard PI can be regarded as performing Newton's method to solve the HJB equation, while PI incorporated by the computational error can be seen as Newton's method with bounded error. 
    }
    \label{fig.convergence analysis}
\end{figure}

In this subsection, we answer the first question, i.e., we examine the impact of computational error in the PEV step on the convergence behavior of IntRL.  Initially, we utilize the Fréchet differential to demonstrate the correspondence between the standard PI process in IntRL (without computational error) and Newton's method within the Banach space to which the value function belongs.
\begin{lemma}[PI as Newton’s Method in Banach Space]\label{lemma.Newton’s method}
Consider a Banach space
$\mathbb{V} \subset \left\{ V(x)| V(x): \Omega \to \mathbb{R}, V(0) = 0 \right\}$, equipped with the norm $\| \cdot \|_{\Omega}$.
For $G(V)$ as defined in \eqref{eq.HJB equation}, its Gâteaux and Fréchet derivatives at $V$ can be expressed as
\begin{equation}\label{eq.G'(V)}
G'(V)W = \frac{\partial{W}}{\partial{x^{\top}}}f - \frac{1}{2}\frac{\partial{W}}{\partial{x^{\top}}}gR^{-1}g^{\top}\frac{\partial{V}}{\partial{x}}.
\end{equation}
Consider the value function $V^{(i)}$ at the $i^{\text{th}}$ iteration of PI, executing PIM \eqref{eq.PIM} and PEV \eqref{eq.PEV}  iteratively corresponds to Newton's method applied to solve the HJB equation  \eqref{eq.HJB equation}. This Newton's method iteration is formulated as:
\begin{equation}\label{eq.Newton's method}
V^{(i+1)} = V^{(i)} - \left[ G'(V^{(i)})\right]^{-1}G(V^{(i)}).
\end{equation}
\end{lemma}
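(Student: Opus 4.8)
The plan is to prove the statement in two stages: first establish the derivative formula \eqref{eq.G'(V)}, and then verify that the coupled PEV/PIM recursion reproduces the Newton iterate \eqref{eq.Newton's method}.

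For the first stage, I would compute the directional derivative directly. Since $gR^{-1}g^{\top}$ is symmetric (as $R$ is symmetric positive definite), expanding $G(V+\epsilon W)$ and cancelling $G(V)$ gives
\begin{equation}
G(V+\epsilon W) - G(V) = \epsilon\left[(\nabla_x W)^{\top}f - \tfrac{1}{2}(\nabla_x W)^{\top}gR^{-1}g^{\top}\nabla_x V\right] - \tfrac{\epsilon^{2}}{4}(\nabla_x W)^{\top}gR^{-1}g^{\top}\nabla_x W.
\end{equation}
Dividing by $\epsilon$ and letting $\epsilon \to 0$ isolates the Gâteaux derivative, which coincides with \eqref{eq.G'(V)} after writing $(\nabla_x W)^{\top} = \partial W/\partial x^{\top}$. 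The leftover term is the purely quadratic $-\tfrac{\epsilon^{2}}{4}(\nabla_x W)^{\top}gR^{-1}g^{\top}\nabla_x W$, which is $o(\|W\|_{\Omega})$ whenever $gR^{-1}g^{\top}$ has bounded operator norm on $\Omega$; this upgrades Gâteaux to Fréchet differentiability and confirms that $G'(V)$ is exactly the bounded linear map in \eqref{eq.G'(V)}.

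For the second stage, I would first recast the PEV step \eqref{eq.PEV} in differential form. Dividing the interval identity by $\Delta T$ and sending $\Delta T \to 0$ (fundamental theorem of calculus) collapses it to the CT Bellman/Lyapunov equation \eqref{eq.CT Bellman Equation}, namely $\dot V^{(i+1)} = -l(x,u^{(i+1)})$, which upon substituting $\dot x = f + g u^{(i+1)}$ becomes
\begin{equation}\label{eq.plan.lyap}
(\nabla_x V^{(i+1)})^{\top}(f + g u^{(i+1)}) + q + (u^{(i+1)})^{\top} R\, u^{(i+1)} = 0 .
\end{equation}
I then insert the PIM policy \eqref{eq.PIM}, $u^{(i+1)} = -\tfrac{1}{2}R^{-1}g^{\top}\nabla_x V^{(i)}$, which turns the coupling terms into $(\nabla_x V^{(i+1)})^{\top}g u^{(i+1)} = -\tfrac{1}{2}(\nabla_x V^{(i+1)})^{\top}gR^{-1}g^{\top}\nabla_x V^{(i)}$ and $(u^{(i+1)})^{\top}Ru^{(i+1)} = \tfrac{1}{4}(\nabla_x V^{(i)})^{\top}gR^{-1}g^{\top}\nabla_x V^{(i)}$, so that \eqref{eq.plan.lyap} reads $(\nabla_x V^{(i+1)})^{\top}f - \tfrac{1}{2}(\nabla_x V^{(i+1)})^{\top}gR^{-1}g^{\top}\nabla_x V^{(i)} + \tfrac{1}{4}(\nabla_x V^{(i)})^{\top}gR^{-1}g^{\top}\nabla_x V^{(i)} + q = 0$.

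Finally, I would compare against Newton's method written in residual form, $G'(V^{(i)})V^{(i+1)} = G'(V^{(i)})V^{(i)} - G(V^{(i)})$. Using \eqref{eq.G'(V)} and \eqref{eq.HJB equation}, the right-hand side simplifies (the $(\nabla_x V^{(i)})^{\top}f$ terms cancel and the quadratic terms combine) to $-\tfrac{1}{4}(\nabla_x V^{(i)})^{\top}gR^{-1}g^{\top}\nabla_x V^{(i)} - q$, while the left-hand side is $(\nabla_x V^{(i+1)})^{\top}f - \tfrac{1}{2}(\nabla_x V^{(i+1)})^{\top}gR^{-1}g^{\top}\nabla_x V^{(i)}$; equating the two reproduces precisely the substituted Lyapunov equation from the previous paragraph, which establishes the correspondence. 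The main obstacle I anticipate is well-posedness rather than algebra: for the Newton iterate $[G'(V^{(i)})]^{-1}G(V^{(i)})$ to exist, $G'(V^{(i)})$ must be invertible on $\mathbb{V}$, and this invertibility is exactly the unique solvability of the linear Lyapunov operator $W \mapsto (\nabla_x W)^{\top}(f + g u^{(i+1)})$, which holds only when $u^{(i+1)}$ is admissible (stabilizing). Carrying admissibility of the iterates through the induction, together with the regularity needed to legitimately pass from the integral PEV form to its differential form, is the delicate part of the argument.
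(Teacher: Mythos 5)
Your proposal is correct and follows essentially the same route as the paper's proof: the same expansion of $G(V+\epsilon W)-G(V)$ with the symmetric cross term yielding \eqref{eq.G'(V)}, and the same algebraic identification of the Newton residual equation $G'(V^{(i)})V^{(i+1)} = G'(V^{(i)})V^{(i)} - G(V^{(i)})$ with the closed-loop Lyapunov equation $\dot V^{(i+1)} = -q - (u^{(i+1)})^{\top}Ru^{(i+1)}$, merely traversed in the opposite direction (PEV $\to$ differential form $\to$ Newton, rather than Newton $\to$ differential form $\to$ integrate to PEV). Your direct bound on the quadratic remainder is in fact a cleaner justification of Fréchet differentiability than the paper's continuity argument (both implicitly require the norm $\|\cdot\|_{\Omega}$ to control $\nabla_x W$), and your closing remarks on invertibility of $G'(V^{(i)})$ and admissibility of the iterates flag a genuine well-posedness issue the paper's proof leaves unaddressed.
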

The proof of the lemma is presented in Appendix \ref{appendix.Proof of Lemma Newton’s method}. In this lemma, we demonstrate that a single iteration consisting of both the PIM and PEV in IntRL is equivalent to one iteration of Newton's method for solving the HJB equation in a Banach space.  This parallel allows us to view the computational error introduced in the PEV step of the $i^{\text{th}}$ iteration as an extra error term, denoted $E^{(i)}$, in Newton's method iteration. We define the value function with the incorporated error term $E^{(i)}$ (highlighted in blue) as $\hat{V}^{(i)}$. Its iterative update is given by:
\begin{equation}\label{eq.Newton's method with extra error}
\hat{V}^{(i+1)} = \hat{V}^{(i)} - \left[ G'(\hat{V}^{(i)})\right]^{-1}G(\hat{V}^{(i)}) + \textcolor{blue}{E^{(i)}}, \quad \hat{V}^{(0)} = V^{(0)}. 
\end{equation}
Assume that the error term $E^{(i)}$ has a bounded norm, denoted as $\|E^{(i)}\|_{\Omega} \leq \Bar{\epsilon}$, where the bound $\Bar{\epsilon}$ is proportional to the computational error introduced during the PEV step, which will be shown later in \Secref{subsection.main result}. The subsequent theorem describes the convergence behavior of Newton's method when an extra bounded error term is included.
\begin{theorem}[Convergence of Newton's Method with Bounded Error]\label{theorem.Convergence of PI}
Define $B_0 = \{V \in \mathbb{V}: \| V - V^{(0)} \|_{\Omega} \leq r_0 \}$, $B = \{V \in \mathbb{V}: \| V - V^{(0)} \|_{\Omega} \leq r_0 + d \}$ with $d \geq 0$, $r_0 = \| \left[G'(V^{(0)})\right]^{-1} G(V^{(0)}) \|_{\Omega}$, and $L_0 = \sup_{V \in B_0} \|\left[G'(V^{(0)})\right]^{-1} G''(V) \|_{\Omega}$. If we assume the following conditions: 
(i) $G$ is twice Fréchet continuous differentiable on $B$; 
(ii) $r_0L_0 \leq \frac{1}{2}$; 
(iii) $G'(V)$ is nonsingular for $\forall V \in B$; 
(iv) There exists a number $\Phi, M>0$ such that $\frac{1}{\Phi} \geq \sup_{V\in B}\| [G'(V)]^{-1} \|_{\Omega}$, $M \geq \sup_{V\in B} \|G''(V) \|_{\Omega}$ and $\Phi - Mr_0 > \sqrt{2M\Bar{\epsilon}\Phi}$; 
(v) $d > \frac{2\bar{\epsilon}\Phi}{\Phi - Mr_0}$. 
Then we have $\hat{V}^{(i)} \in B$ and
\begin{equation}\label{eq.error bound in each PEV}
\| \hat{V}^{(i)} - V^*\|_{\Omega} \leq 
\frac{2\bar{\epsilon}\Phi}{\Phi - Mr_0} + \frac{2^{-i}(2r_0L_0)^{2^i}}{L_0},
\end{equation}
where $V^*$ is optimal value function.
\end{theorem}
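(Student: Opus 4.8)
The plan is to reduce the Banach-space iteration \eqref{eq.Newton's method with extra error} to a scalar recursion for $a_i := \|\hat{V}^{(i)} - V^*\|_{\Omega}$ and then majorize it sharply. First I would expand $G$ about $\hat{V}^{(i)}$ using the integral form of Taylor's theorem, which is available by condition (i):
\begin{equation*}
0 = G(V^*) = G(\hat{V}^{(i)}) + G'(\hat{V}^{(i)})(V^* - \hat{V}^{(i)}) + \int_0^1 (1-s)\, G''\!\big(\hat{V}^{(i)} + s(V^*-\hat{V}^{(i)})\big)(V^*-\hat{V}^{(i)})^2\,\mathrm{d}s .
\end{equation*}
Solving for $G(\hat{V}^{(i)})$, left-multiplying by $[G'(\hat{V}^{(i)})]^{-1}$ (nonsingular by (iii)), and substituting into \eqref{eq.Newton's method with extra error} cancels the linear term and leaves
\begin{equation*}
\hat{V}^{(i+1)} - V^* = [G'(\hat{V}^{(i)})]^{-1}\!\int_0^1 (1-s)\,G''\!\big(\hat{V}^{(i)}+s(V^*-\hat{V}^{(i)})\big)(\hat{V}^{(i)}-V^*)^2\,\mathrm{d}s + E^{(i)} .
\end{equation*}
Taking norms and using (iv) with $\|E^{(i)}\|_{\Omega}\le\bar{\epsilon}$ gives the quadratic-plus-constant recursion $a_{i+1}\le \tfrac{M}{2\Phi}a_i^2 + \bar{\epsilon}$. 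This already exhibits the two mechanisms that must survive into the final bound: a contractive quadratic term producing the Newton rate, and an additive floor $\bar{\epsilon}$ that cannot be contracted away.

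This crude recursion is, however, too lossy to yield the stated rate: iterating $a_{i+1}\le c\,a_i^2$ only gives $c^{-1}(c\,a_0)^{2^i}$, which has neither the factor $2^{-i}$ nor the step length $r_0$ in place of the a priori distance $\|V^{(0)}-V^*\|_{\Omega}$. To recover the sharp doubly-exponential factor I would instead run the Newton--Kantorovich majorant argument, modified to carry the per-step error. Concretely, I would introduce the normalized scalar majorant $q(t) = \tfrac{L_0}{2}t^2 - t + r_0$ (so that $q(0)=r_0$, $q'(0)=-1$, and the first majorant step equals $r_0$) and track the increments $\|\hat{V}^{(i+1)}-\hat{V}^{(i)}\|_{\Omega}$ rather than $a_i$ directly. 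The Banach/Neumann-series lemma controls $\|[G'(\hat{V}^{(i)})]^{-1}\|_{\Omega}$ through its value at $V^{(0)}$ as long as the iterates remain in $B$, condition (ii) ($r_0 L_0\le\tfrac12$) is exactly the real-root condition that makes the unperturbed majorant converge at the Kantorovich rate, and the perturbation enters as a constant forcing $\bar{\epsilon}$ injected at every majorant step. Note that the homogeneous part naturally uses the sharper fixed-base constant $L_0$, whereas the accumulated error is bounded with the global constants $M,\Phi$ — which is precisely why the final bound mixes the two families of constants.

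Solving the perturbed majorant then proceeds in two independent pieces. For the floor, I would verify that $\delta := \frac{2\bar{\epsilon}\Phi}{\Phi-Mr_0}$ is a super-solution of the scalar recursion, i.e. $\tfrac{M}{2\Phi}\delta^2 + \bar{\epsilon}\le\delta$; a short calculation shows this reduces to $2M\bar{\epsilon}\Phi \le \Phi^2 - M^2 r_0^2$, which is guaranteed by condition (iv), $\Phi - Mr_0 > \sqrt{2M\bar{\epsilon}\Phi}$. The homogeneous part contributes the unchanged Kantorovich estimate $\frac{2^{-i}(2r_0L_0)^{2^i}}{L_0}$, and adding the two contributions produces \eqref{eq.error bound in each PEV}. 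Finally, condition (v), $d>\delta$, supplies exactly the slack needed to verify inductively that every $\hat{V}^{(i)}$ stays in $B$, so that the global constants and the nonsingularity of $G'$ remain valid throughout the iteration.

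The main obstacle is the coupling between the perturbation and the Kantorovich contraction: one must combine the additive errors so that they accumulate to exactly the closed-form floor $\delta$ while the homogeneous part retains the sharp $2^{-i}(2r_0L_0)^{2^i}$ decay, and must do so while simultaneously proving that the iterates never leave $B$ — a seemingly circular dependency, since the constants $\Phi,M$ and the Banach lemma are only valid on $B$. I expect this to be resolved by a single induction that advances the majorant increment bound and the set-membership $\hat{V}^{(i)}\in B$ together, with condition (iv) being precisely what keeps the perturbed majorant equipped with the required real fixed point.
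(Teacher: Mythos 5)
Your setup is sound (the Taylor expansion around $\hat{V}^{(i)}$, the quadratic-plus-constant recursion, the identification of $\delta=\tfrac{2\bar{\epsilon}\Phi}{\Phi-Mr_0}$ as an invariant level whose existence is guaranteed by the discriminant condition in (iv), and the role of (v) in keeping iterates in $B$), but the step you flag as ``the main obstacle'' is a genuine gap, and the route you propose does not close it. For a nonlinear recursion such as $a_{i+1}\le \tfrac{M}{2\Phi}a_i^2+\bar{\epsilon}$ you cannot superpose a constant super-solution with the homogeneous Kantorovich solution: writing $a_i=\delta+c_i$ produces the cross term $\tfrac{M}{\Phi}\delta\,c_i$, and since $c_i=\tfrac{2^{-i}(2r_0L_0)^{2^i}}{L_0}$ decays doubly exponentially while $\tfrac{M}{\Phi}\delta$ is a fixed positive constant, the induction $c_{i+1}\ge \tfrac{M}{\Phi}\delta c_i+\tfrac{M}{2\Phi}c_i^2$ fails for all large $i$. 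The increment-tracking perturbed majorant has the same defect, compounded by the fact that the perturbed sequence need not be Cauchy, so tail-sum arguments do not directly bound $\|\hat{V}^{(i)}-V^*\|_{\Omega}$.

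The paper's resolution is structurally different and is the missing idea: it never bounds $\|\hat{V}^{(i)}-V^*\|_{\Omega}$ by a single recursion. Instead it introduces the auxiliary \emph{exact} Newton sequence $V^{(i)}$ started from the same $V^{(0)}$, derives an identity for $V^{(i+1)}-\hat{V}^{(i+1)}$ (Proposition~\ref{prop.Iteration Error for Newton's Method}), and proves by induction that $\|V^{(i)}-\hat{V}^{(i)}\|_{\Omega}\le\delta$ uniformly (Lemma~\ref{lemma.Iteration Error Bound for Newton's Method}). That deviation recursion contains a linear term $Mr_0\|V^{(i)}-\hat{V}^{(i)}\|_{\Omega}$ coming from the exact Newton step $h_i$ with $\|h_i\|_{\Omega}<r_0$ --- which is exactly why (iv) reads $\Phi-Mr_0>\sqrt{2M\bar{\epsilon}\Phi}$ rather than your weaker discriminant condition --- but it only needs the constant $\delta$ to be invariant, not to decay. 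The doubly exponential factor is carried entirely by the unperturbed sequence via the classical Kantorovich estimate (Lemma~\ref{lemma.Convergence of the Standard Newton's Method}), and the final bound \eqref{eq.error bound in each PEV} is just the triangle inequality $\|\hat{V}^{(i)}-V^*\|_{\Omega}\le\|\hat{V}^{(i)}-V^{(i)}\|_{\Omega}+\|V^{(i)}-V^*\|_{\Omega}$. To repair your proof, replace the direct majorization of $a_i$ with this two-sequence comparison.
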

The proof can be found in Appendix \ref{appendix.Proof of Convergence of PI}. Theorem \ref{theorem.Convergence of PI} clarifies that the convergence behavior of Newton's method, when augmented with an extra bounded error term, hinges on two key elements: the cumulative effect of the bounded error term over the iterations and
the inherent convergence properties of the standard PI algorithm. 
It should be noted that the convergence property described in Theorem \ref{theorem.Convergence of PI} is local and applies when the underlying operator $G(V)$ is twice continuously differentiable. This iteration commences with an estimate of the constant $d$ to ascertain the value of $B$. If condition (iv) of the theorem is not met, we are likely encountering a pathological situation, which necessitates choosing a smaller $d$ until the condition (iv) is satisfied or (iv) and (v) are found to be incompatible in which case our analysis breaks down.
This theorem extends existing insights from earlier work on the error analysis of Newton's method \cite{urabe1956convergence}, which assumes a strict Lipschitz condition for the operator $G$. However, this Lipschitz condition can be difficult to verify for utility functions $l$ that do not include a time discount factor. A comprehensive representation of the convergence analysis conducted in this subsection is provided in Figure \ref{fig.convergence analysis}.

\subsection{Computational Error Quantification}\label{subsection.Computational Error Quantification}

\begin{figure}
    \centering
    \includegraphics[width=0.80\linewidth]{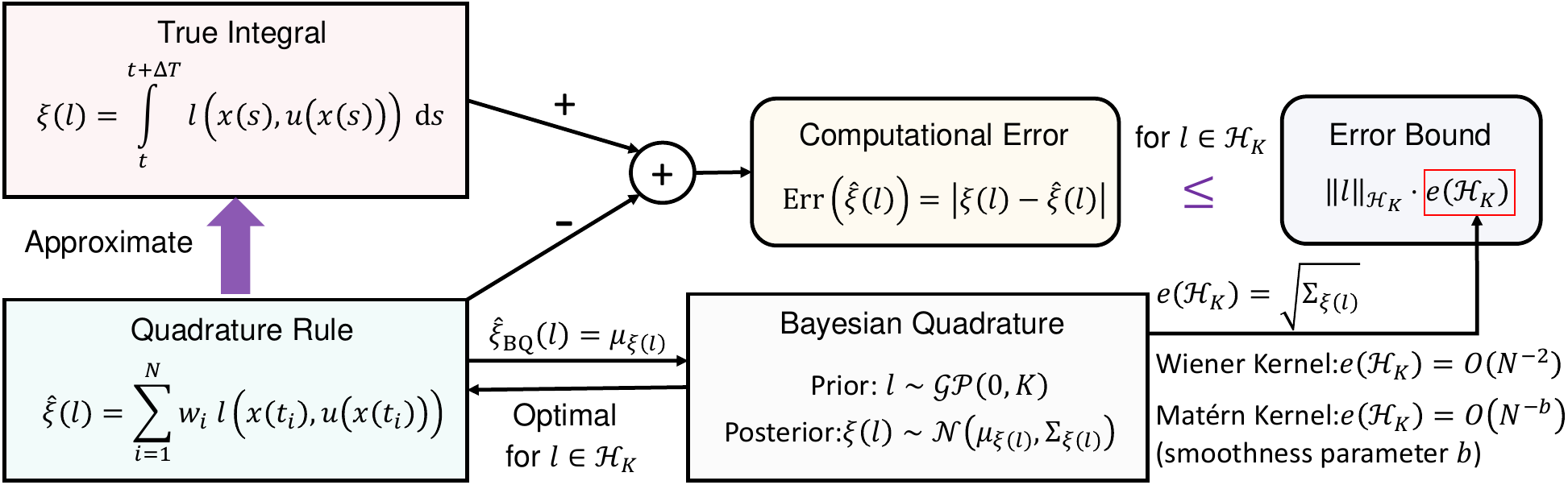}
    \caption{Flowchart illustrating the quantification of computational error.   
    The computational error is defined as the absolute difference between the true integral and its approximation derived from the quadrature rule. The computational error is bounded by the product of the integrand’s norm in the RKHS, and the worst-case error. When employed as the quadrature rule, BQ minimizes the worst-case error, which coincides precisely with BQ’s posterior covariance.
}
    \label{fig:enter-label}
\end{figure}

After analyzing the convergence property, our subsequent focus is quantifying the computational error inherent to each PEV step of IntRL. As discussed in Section \ref{sec.intro}, when internal dynamics is unknown, the integral of the utility function $\xi(l)$ \eqref{eq.interval Bellman equation} in the PEV step must be approximated using state samples in discrete time. 
This approximation will inevitably introduce computational integration error, especially when dealing with sparse samples, such as those collected from sensors on real-world autonomous systems.
A tight upper bound of the computational error, given a specific quadrature rule $\hat{\xi}(l)$, can be determined by the Cauchy-Schwarz inequality, provided that $l$ belongs to the RKHS $\mathcal{H}_K$ \cite{kanagawa2016convergence,kanagawa2020convergence,briol2019probabilistic}:
\begin{equation}\label{eq.error bound of Quadrature}
\text{Err}\left(\hat{\xi}(l)\right) \leq \| l \|_{\gH_K} \cdot e(\mathcal{\gH_K}), \quad e(\mathcal{\gH_K}) :=
\sup_{l\in \gH_K, \| l \|_{\gH_K} \leq 1} \text{Err}\left(\hat{\xi}(l)\right).
\end{equation}
Here, $\gH_K$ represents the RKHS induced by the kernel function $K$ with
$\|\cdot \|_{\mathcal{H}_K}$ being its corresponding norm, and $e(\mathcal{\gH_K})$ is the worst-case error. From \eqref{eq.error bound of Quadrature}, it is clear that if a quadrature rule satisfies $e(\mathcal{\gH_K}) = O(N^{-b})$ with $b > 0$,
then the computational error satisfies $\text{Err}\left(\hat{\xi}(l)\right) = O(N^{-b})$. Next, we will discuss which quadrature rule minimizes the worst-case error.

\textbf{Minimum of the worst case error -- BQ:}
The optimal quadrature rule that minimizes the worst-case error $e(\mathcal{H}_K)$ is the maximum a posteriori (MAP) estimate of BQ \cite{briol2019probabilistic}. Specifically, the minimization can be achieved when the kernel function of BQ is set as the kernel function $K$ that induces RKHS $\mathcal{H}_K$. BQ aims to approximate the integral $\xi(l)$ by specifying a Gaussian process prior to the integrand (in this case, the integrand is the utility function $l$) denoted as $l\sim \gG\gP(0, K)$ and conditioning this prior on the integrand evaluated at the discrete time $\left\{l(x(t_i), u(x(t_i)))\right\}_{i=1}^N$ to obtain the Gaussian posterior $\xi(l) \sim \gN \left(\mu_{\xi(l)}, \Sigma_{\xi(l)}\right)$ \cite{o1991bayes,karvonen2017classical, cockayne2019bayesian, briol2019probabilistic, hennig2022probabilistic}:
\begin{equation}\nonumber
\begin{aligned} 
\mu_{\xi(l)} &= \int_{t}^{t+\Delta T} K(s, T)^{-1} \, \mathrm{d} s \, K_{TT}^{-1} \, l(T), 
\\
\Sigma_{\xi(l)} &= \int_{t}^{t+\Delta T} \int_{t}^{t+\Delta T} \left [ K(s, s') - K(s, T) K_{TT}^{-1} K(T, s') \right ] \, \mathrm{d}s \, \mathrm{d}s'.   
\end{aligned} 
\end{equation}
Here, $T = \{ t_i \}_{i=1}^{N}$ represents the set of time instants, and $l(T)$ corresponds to the evaluated values of the utility function at these sample points, with $\left(l(T)\right)_i = l(x(t_i), u(x(t_i)))$ being the evaluation at the $i^{\text{th}}$ sample point. In addition, $\left(K(\cdot, T)\right)_i := K(\cdot, t_i)$ and $\left(K_{TT}\right)_{ij} := K(t_i, t_j)$. Note that the posterior mean $\mu_{\xi(l)}$ takes the form of a quadrature rule, with weights $w_i$ being the elements of the vector $\int_{t}^{t+\Delta T} K(s, T)^{-1} \, \mathrm{d} s \, K_{TT}^{-1}$. The BQ estimate of $\xi(l)$ can be obtained using the MAP estimate, i.e., the posterior mean $\hat{\xi}_{\text{BQ}}(l) = \mu_{\xi(l)}$. In this case, the worst-case error equals the square root of the posterior variance $e(\gH_K) = \sqrt{\Sigma_{\xi(l)}}$ \cite{kanagawa2016convergence,kanagawa2020convergence,briol2019probabilistic}.

\textbf{BQ with Wiener Kernel (Trapezoidal Rule)}
When the kernel function of the BQ is chosen as the Wiener process (we call as Wiener kernel for simplicity), i.e., $K_{\text{Wiener}}(s, s') := \max\left\{ s, s'\right\} - s''$ with $s'' < t$, the posterior mean and covariance of BQ are calculated as \cite{hennig2022probabilistic}:
\begin{equation}\nonumber
\begin{aligned}
\mu_{\xi(l)} = \sum_{i=1}^{N-1} \frac{l(x(t_i), u(x(t_i))) + l(x(t_{i+1}), u(x(t_{i+1})))}{2} \delta_i, 
\quad
\Sigma_{\xi(l)} &= \frac{1}{12}\sum_{i=1}^{N-1}\delta_i^3.  
\end{aligned}
\end{equation}
Here, $\delta_i := t_{i+1} - t_i$. The BQ estimate is equal to the integral approximated by the trapezoidal rule \cite{sul1959wiener,hennig2022probabilistic}. For data points evenly spaced at the interval, i.e., $\delta_i = \frac{\Delta{T}}{N-1}$, the posterior variance of BQ becomes $\Sigma_{\xi(l)} = \frac{\Delta{T}^3}{12(N-1)^2}$, implying a linear convergence rate $O(N^{-1})$ of the computational error bound when $l$ belongs to the RKHS induced by Wiener kernel. However, this might not represent a tight upper bound, since the convergence rate of the trapezoidal rule can achieve $O(N^{-2})$ when the integrand function is twice differentiable \cite{atkinson1991introduction}.

\textbf{BQ with Matérn kernel (In Sobolev Space)}
The Sobolev space $W^b_2$ of order $b \in \sN$ is defined by $W^b_2 := \{\gF \in L_2: D^{\alpha} \gF \in L_2 \; \text{exists} \; \forall |\alpha|\leq b \}$, where $\alpha := (\alpha_1, \alpha_2, ..., \alpha_d)$ with $\alpha_k \geq 0$ is a multi-index and $|\alpha| = \sum_{k=1}^d \alpha_k$ \cite{adams2003sobolev}. Besides, $D^{\alpha}\gF$ is the $\alpha^\text{th}$ weak derivative of $\gF$ and its norm is defined by $\small \|\gF\|_{W^b_2} = \left( \sum_{|\alpha| \leq b} \|D^{\alpha} \gF\|^2_{L_2} \right)^{\frac{1}{2}}$. The Sobolev space $W^b_2$ is the RKHS with the reproducing kernel $K$ being the Matérn kernel with smoothness parameter $b$ \cite{seeger2004gaussian,matern2013spatial,kanagawa2020convergence,pagliana2020interpolation}. More details about the Matérn kernel can be found in Appendix \ref{appendix.Matern kernel}. In addition, the computational error for the optimal quadrature rule, i.e., BQ with Matérn kernel, achieves $O(N^{-b})$ if $l \in W^b_2$ 
\cite{novak2006deterministic,kanagawa2020convergence}. The illustration of BQ can be found in Appendix \ref{appendix.BQ illustration} and the technical flowchart of computational error quantification can be summarized in Figure \ref{fig:enter-label}.

\subsection{Convergence rate of IntRL For Different Quadrature Rules
}\label{subsection.main result}
After illustrating how to quantify the computational error, we will delve deeper into understanding its implications on the solution of PEV \eqref{eq.PEV}. And, more importantly, we aim to uncover its impact on the convergence of PI. 
Consider the case where the value function is approximated by the linear combination of basis functions \cite{vrabie2009neural,vamvoudakis2014online}:
\begin{equation}\label{eq.approximation}
{V}^{(i)}(x) = {\omega^{(i)}}^{\top}\phi(x), \quad \phi(x):= \begin{bmatrix}
\phi_1(x), \phi_2(x), ..., \phi_{n_{\phi}}(x)    
\end{bmatrix}^{\top}.
\end{equation}
Here, $\phi(x)$ represent the basis functions with $\phi_k(0) = 0, \forall k = 1, 2, ..., n_{\phi}$, and $\omega^{(i)} \in \sR^{n_{\phi}}$  denotes the parameters determined at the $i^{\text{th}}$ iteration. A typical assumption of value function and the basis function is expressed as \cite{vrabie2009neural,vamvoudakis2014online}:
\begin{assumption}[Continuity, Differentiability and Linear Independence]\label{assump.C1}
The value function obtained at each PEV iteration \eqref{eq.PEV}, is continuous and differentiable across $\Omega$, denoted by $V^{(i)} \in \sC^{1}(\Omega)$. Besides, $\phi_k(x) \in \sC^{1}(\Omega)$, $\forall k = 1,2, ..., n_{\phi}$, and the set $\{ \phi_k(x) \}_{1}^{n_{\phi}}$ are linearly independent. 
\end{assumption}

Such an assumption is commonly encountered in optimal control \cite{lewis1998neural,abu2005nearly,vrabie2009neural,vrabie2009adaptive}, suggesting that the value function can be uniformly approximated by $\phi$ when its dimension $n_{\phi}$ approaches infinity. In our analysis, we focus primarily on the impact of computational error. To isolate this effect, we make the assumption that the learning errors introduced by the approximation in \eqref{eq.approximation} can be neglected. This assumption is based on the premise that such errors can be effectively minimized or rendered negligible through adequate training duration and optimal hyperparameter tuning, among other techniques. With this approximation \eqref{eq.approximation}, the parameter $\omega^{(i)}$
  is found via the linear matrix equation:
\begin{equation}\label{eq.LS}
\Theta^{(i)}  \omega^{(i)} = \Xi^{(i)}, \quad  \Xi^{(i)} := \begin{bmatrix}
\xi(T_1, T_2, u^{(i)}) \\
\xi(T_2, T_3, u^{(i)}) \\
...\\
\xi(T_{m}, T_{m+1}, u^{(i)})
\end{bmatrix}, \quad
\Theta^{(i)} := 
\begin{bmatrix}
\phi^{\top}(x(T_2)) - \phi^{\top}(x(T_1)) \\
\phi^{\top}(x(T_3)) - \phi^{\top}(x(T_2)) \\
...\\
\phi^{\top}(x(T_{m+1})) - \phi^{\top}(x(T_{m}))
\end{bmatrix}.
\end{equation}
Here, $\Theta^{(i)}$ has full column rank given approximate selections of  $T_1, T_2, ... , T_{m+1}$. This condition can be seen as the persistence condition \cite{wallace2023continuous}, ensuring that equation \eqref{eq.LS} yields a unique solution. For more discussions, see Appendix \ref{appendix.PE condition}. Besides, $\xi(T_{k}, T_{k+1}, u^{(i)}):= \int_{T_{k}}^{T_{k+1}} l(x(s), u^{(i)}(x(s))) \, \mathrm{d}s$ represents the integral over the time interval from $T_k$ to $T_{k+1}$ of the utility function $l$. As previously discussed, achieving the exact value of this integral is not feasible. It is instead approximated using a quadrature rule, with its computational error bounded by \eqref{eq.error bound of Quadrature}. We symbolize the computational error's upper bound for $\xi(T_k, T_{k+1}, u^{(i)})$ as $\delta \xi(T_k, T_{k+1}, u^{(i)})$. Therefore, the integral value for $\xi(T_{k}, T_{k+1}, u^{(i)})$ lies  within the interval $[\hat{\xi}(T_{k}, T_{k+1}, u^{(i)}) - \delta{\xi}(T_{k}, T_{k+1}, u^{(i)}), \hat{\xi}(T_{k}, T_{k+1}, u^{(i)}) + \delta{\xi}(T_{k}, T_{k+1}, u^{(i)})]$. Here, $\hat{\xi}(T_{k}, T_{k+1}, u^{(i)})$ signifies the integral's estimate using the quadrature rule. For simplicity, this can be formulated in matrix form as:
$\Xi^{(i)} \in [\hat{\Xi}^{(i)} - \delta{\Xi}^{(i)}, \hat{\Xi}^{(i)} + \delta{\Xi}^{(i)}]$. Consequently, due to the computational error, the actual linear matrix equation we solve in the PEV step becomes \eqref{eq.LS approximation} instead of \eqref{eq.LS}:
\begin{equation}\label{eq.LS approximation}
\hat{\Theta}^{(i)}  \hat{\omega}^{(i)} = \hat{\Xi}^{(i)}.
\end{equation}
Here, $\hat{\Theta}^{(i)}$ is the data matrix constructed by the states generated by the control policy $\hat{u}^{(i)} := -\frac{1}{2}R^{-1}g(x)^{\top}\nabla_x {\hat{V}^{(i-1)}}$. Incorporating the convergence analysis of PI in Section \ref{subsection.Convergence Analysis} with the computational error quantification for each PI step in Section \ref{subsection.Computational Error Quantification}, we can present our main results.

\begin{theorem}[Convergence Rate of IntRL Concerning Computational Error]\label{theorem.convergence rate}
Assume that $\Omega$ is a bounded set, and the conditions outlined in Theorem \ref{theorem.Convergence of PI} hold. Under Assumption \ref{assump.C1}, we have:
\begin{equation}\label{eq.convergence of IntRL}
|\hat{V}^{(i)}(x) - V^*(x)| \leq \frac{2\Phi \Bar{\epsilon}}{\Phi - Mr_0} + \frac{2^{-i}(2r_0L_0)^{2^i}}{L_0}, \quad \forall x\in \Omega.
\end{equation}
Here, $\Bar{\epsilon} =  \| \phi \|_{\infty} \cdot \sup_i \left\{\| (\hat{\Theta}^{(i)\top} \hat{\Theta}^{(i)} )^{-1} \hat{\Theta}^{(i)\top}\|_2
\|\delta{\Xi}^{(i)}\|_2
\right \}$ is the upper bound of the extra error term $E^{(i)}$ in \eqref{eq.Newton's method with extra error}, satisfying $E^{(i)} \leq \Bar{\epsilon}$, where $\| \phi \|_{\infty} := \max_{1 \leq k \leq n_{\phi}} \sup_{x\in\Omega}{| \phi_k(x)}|$. Besides, the definitions of $\Phi, M, r_0, L_0$ are consistent with Theorem \ref{theorem.Convergence of PI}.
\end{theorem}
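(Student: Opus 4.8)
The plan is to read Theorem~\ref{theorem.convergence rate} as the concrete instantiation of Theorem~\ref{theorem.Convergence of PI} in the parametrized setting of \eqref{eq.approximation}. Once the Banach-space norm $\|\cdot\|_{\Omega}$ is identified with the supremum norm $\|V\|_{\Omega} := \sup_{x\in\Omega}|V(x)|$, the abstract estimate \eqref{eq.error bound in each PEV} becomes the pointwise bound \eqref{eq.convergence of IntRL} verbatim, and the boundedness of $\Omega$ together with $V^{(i)}\in\sC^1(\Omega)$ (Assumption~\ref{assump.C1}) guarantees this supremum is finite. The only genuinely new work is to verify that the extra error term $E^{(i)}$ injected into the perturbed Newton iteration \eqref{eq.Newton's method with extra error} admits the claimed bound $\|E^{(i)}\|_{\Omega}\leq\bar{\epsilon}$; I would therefore organize the proof entirely around this inequality and then quote Theorem~\ref{theorem.Convergence of PI}.

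First I would identify $E^{(i)}$ explicitly. Under \eqref{eq.approximation} with negligible learning error, the exact Newton step from $\hat{V}^{(i)}$ and the computation-corrupted step both use the \emph{same} data matrix $\hat{\Theta}^{(i)}$ built from the trajectories of $\hat{u}^{(i)}$, and differ only in their right-hand side: the exact integral vector $\Xi^{(i)}$ versus its quadrature approximation $\hat{\Xi}^{(i)}$ of \eqref{eq.LS approximation}. Writing $\omega^{(i)}$ and $\hat{\omega}^{(i)}$ for the corresponding least-squares solutions, the value-function discrepancy $E^{(i)}(x) = (\hat{\omega}^{(i)} - \omega^{(i)})^{\top}\phi(x)$ is exactly the extra term of \eqref{eq.Newton's method with extra error}. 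A secondary point to nail down here is that the perturbation enters only through the right-hand side and not through $\hat{\Theta}^{(i)}$; this rests on the standing assumption that the approximation error of \eqref{eq.approximation} is negligible, so the quadrature is the sole error channel.

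The core estimate then factorizes. Since $\hat{\Theta}^{(i)}$ has full column rank (the persistence condition), both solutions are given by the normal equations, so $\hat{\omega}^{(i)} - \omega^{(i)} = (\hat{\Theta}^{(i)\top}\hat{\Theta}^{(i)})^{-1}\hat{\Theta}^{(i)\top}(\hat{\Xi}^{(i)} - \Xi^{(i)})$. The componentwise containment $\Xi^{(i)}\in[\hat{\Xi}^{(i)}-\delta\Xi^{(i)},\,\hat{\Xi}^{(i)}+\delta\Xi^{(i)}]$ from Section~\ref{subsection.Computational Error Quantification} gives $\|\hat{\Xi}^{(i)}-\Xi^{(i)}\|_2\leq\|\delta\Xi^{(i)}\|_2$, hence $\|\hat{\omega}^{(i)} - \omega^{(i)}\|_2 \leq \|(\hat{\Theta}^{(i)\top}\hat{\Theta}^{(i)})^{-1}\hat{\Theta}^{(i)\top}\|_2\,\|\delta\Xi^{(i)}\|_2$. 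Passing from parameter space to the function supremum norm via Cauchy--Schwarz, $|E^{(i)}(x)|=|(\hat{\omega}^{(i)} - \omega^{(i)})^{\top}\phi(x)|\leq\|\phi\|_{\infty}\,\|\hat{\omega}^{(i)} - \omega^{(i)}\|_2$ uniformly in $x$, and taking the supremum over $i$ reproduces precisely the definition of $\bar{\epsilon}$, so $\|E^{(i)}\|_{\Omega}\leq\bar{\epsilon}$.

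I expect the main obstacle to be exactly this last norm-translation step: $\|\phi\|_{\infty}$ bounds individual basis components in the scalar sense, whereas $\|\hat{\omega}^{(i)} - \omega^{(i)}\|_2$ is Euclidean, so reconciling the two cleanly --- whether by absorbing a dimension-dependent constant, or by reading $\|\phi\|_{\infty}$ as the uniform bound $\sup_{x\in\Omega}\|\phi(x)\|_2$ --- is where care is genuinely needed, and it must be done \emph{uniformly over all iterations} $i$ so that the single constant $\bar{\epsilon}$ legitimately dominates every $E^{(i)}$. Once $\|E^{(i)}\|_{\Omega}\leq\bar{\epsilon}$ is established, conditions (i)--(v) of Theorem~\ref{theorem.Convergence of PI} hold by hypothesis and its conclusion \eqref{eq.error bound in each PEV} yields \eqref{eq.convergence of IntRL} immediately, the pointwise form following from the supremum-norm identification.
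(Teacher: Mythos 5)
Your proposal follows essentially the same route as the paper's own proof: a solution-error bound for the least-squares parameters via the normal equations (the paper's Proposition~\ref{prop.error bound}), then passage to the supremum norm by Cauchy--Schwarz to bound $\|E^{(i)}\|_{\Omega}$ by $\bar{\epsilon}$, and finally a direct invocation of Theorem~\ref{theorem.Convergence of PI}. The norm-translation subtlety you flag is real --- the paper silently replaces $\sup_{x\in\Omega}\|\phi(x)\|_2$ by $\|\phi\|_{\infty}$, which strictly costs a factor of $\sqrt{n_{\phi}}$ --- but this only affects the constant, not the structure of the argument or the stated rate.
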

The proof is given in appendix \ref{appendix.Proof of theorem.convergence rate}. In \eqref{eq.convergence of IntRL},
$\delta{\Xi}^{(i)}$ is a vector with its $k^{\text{th}}$ row defined as $\delta{\xi}(T_k, T_{k+1}, \hat{u}^{(i)})$, serving as an upper bound for the computational error of the integral ${\xi}(T_k, T_{k+1}, \hat{u}^{(i)}) = \int_{T_k}^{T_{k+1}} l(x(s), \hat{u}^{(i)}(x(s))) \mathrm{d}s$. By invoking Theorem \ref{theorem.convergence rate}, we can relate the computational error $\delta\xi(T_k, T_{k+1},\hat{u}^{(i)})$ to the upper bound of the extra error term $\Bar{\epsilon}$ in the modified Newton's method \eqref{eq.Newton's method with extra error}. This relationship establishes the convergence rate of IntRL. 
Combining with prior discussions of the computational error for the trapezoidal rule and the BQ with Matérn kernel,
we can end our analysis with the subsequent corollary:
\begin{corollary}[Convergence Rate of IntRL for Trapezoidal Rule and BQ with Matern Kernel]\label{corollary.convergence rate}
Assuming that the sample points are uniformly distributed within each time interval $[T_k, T_{k+1}]$ for $k=1, 2,  ..., m$, and that the IntRL algorithm has run for sufficient iterations (i.e., $i \to \infty$). Under the conditions in Theorem \ref{theorem.convergence rate}, if $l$ belongs to the RKHS induced by the Wiener kernel and the integrals in the PEV step are calculated by the trapezoidal rule, we have $
|\hat{V}^{(\infty)}(x) - V^*(x)| = O(N^{-2})$. Here, $\hat{V}^{(\infty)}(x)$ is defined as $\hat{V}^{(\infty)}(x):= \lim_{i\to\infty} \hat{V}^{(i)}(x)$. If $l$ belongs to the RKHS $W_2^b$ induced by the Matérn kernel and the integrals in the PEV step are calculated by the BQ with Matérn kernel (having smoothness parameter $b$), we have $
|\hat{V}^{(\infty)}(x) - V^*(x)| = O(N^{-b})$.
\end{corollary}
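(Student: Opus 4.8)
The plan is to pass to the limit $i \to \infty$ in the bound of Theorem~\ref{theorem.convergence rate} and then track how the surviving term depends on the per-interval sample size $N$. First I would argue that the second summand $2^{-i}(2r_0L_0)^{2^i}/L_0$ in \eqref{eq.convergence of IntRL} vanishes as $i \to \infty$: condition (ii) of Theorem~\ref{theorem.Convergence of PI} gives $2r_0L_0 \le 1$, so $(2r_0L_0)^{2^i}$ stays bounded (by $1$) while $2^{-i} \to 0$, and the product disappears in the limit. This leaves
\[
|\hat{V}^{(\infty)}(x) - V^*(x)| \le \frac{2\Phi\bar{\epsilon}}{\Phi - Mr_0}, \qquad \forall x \in \Omega,
\]
so the entire $N$-dependence of the final rate is carried by $\bar{\epsilon}$, and it suffices to show $\bar{\epsilon} = O(N^{-2})$ for the trapezoidal rule and $\bar{\epsilon} = O(N^{-b})$ for BQ with the Matérn kernel.

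Next I would decompose $\bar{\epsilon} = \| \phi \|_{\infty} \cdot \sup_i\{ \|(\hat{\Theta}^{(i)\top}\hat{\Theta}^{(i)})^{-1}\hat{\Theta}^{(i)\top}\|_2 \, \|\delta{\Xi}^{(i)}\|_2\}$ into factors and isolate which depend on $N$. The factor $\| \phi \|_{\infty}$ is finite and independent of $N$, since $\Omega$ is bounded and each $\phi_k \in \sC^{1}(\Omega)$ by Assumption~\ref{assump.C1}. The pseudo-inverse norm $\|(\hat{\Theta}^{(i)\top}\hat{\Theta}^{(i)})^{-1}\hat{\Theta}^{(i)\top}\|_2$ depends only on the fixed number $m$ of interval endpoints $T_1,\dots,T_{m+1}$ and on the states they generate, not on the intra-interval sample count $N$; the full-column-rank (persistence) assumption on $\hat{\Theta}^{(i)}$ makes it finite for each $i$. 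Thus the only genuinely $N$-dependent factor is $\|\delta{\Xi}^{(i)}\|_2$, the Euclidean norm of the length-$m$ vector whose $k$th entry bounds the quadrature error on $[T_k,T_{k+1}]$.

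I would then bound each entry of $\delta{\Xi}^{(i)}$ using the RKHS worst-case estimate \eqref{eq.error bound of Quadrature}, namely $\delta\xi(T_k,T_{k+1},\hat{u}^{(i)}) \le \|l\|_{\gH_K}\, e(\gH_K)$, and invoke the per-rule rates from Section~\ref{subsection.Computational Error Quantification}: $e(\gH_K) = O(N^{-2})$ for the trapezoidal rule with $l$ in the Wiener RKHS, and $e(\gH_K) = O(N^{-b})$ for BQ with the Matérn kernel and $l \in W_2^b$. Every entry then inherits the same order, and because $\delta{\Xi}^{(i)}$ has only $m$ entries (a fixed count independent of $N$), taking the Euclidean norm alters only the implicit constant, giving $\|\delta{\Xi}^{(i)}\|_2 = O(N^{-2})$ or $O(N^{-b})$ respectively. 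Multiplying by the $N$-independent factors and substituting into the limiting bound yields the claimed $O(N^{-2})$ and $O(N^{-b})$ rates.

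The main obstacle I anticipate is the supremum over iterations $i$ in the definition of $\bar{\epsilon}$: the $O(N^{-2})$ and $O(N^{-b})$ estimates must hold \emph{uniformly} in $i$, which requires two uniform bounds — that the pseudo-inverse norm stays bounded along the iteration, and that the RKHS norm $\|l(\cdot,\hat{u}^{(i)})\|_{\gH_K}$ of the time-parametrised integrand does not blow up as the policy is updated. Both should follow from Theorem~\ref{theorem.Convergence of PI}, which confines every iterate $\hat{V}^{(i)}$ to the bounded ball $B$: on a bounded $\Omega$ with $\sC^{1}$ value functions this keeps the policies $\hat{u}^{(i)}$, the trajectories they generate, and hence the smoothness of $s \mapsto l(x(s),\hat{u}^{(i)}(x(s)))$ controlled uniformly, so the $\sup_i$ can be absorbed into the implicit constant. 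Verifying that this integrand genuinely possesses the required smoothness uniformly across iterations (twice differentiability for the trapezoidal $O(N^{-2})$ rate, Sobolev regularity of order $b$ for the Matérn case) is the delicate point on which the uniformity of the constant ultimately rests.
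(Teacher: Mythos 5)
Your overall strategy is exactly the paper's (the paper gives no separate proof of the corollary; it is obtained by letting $i\to\infty$ in Theorem~\ref{theorem.convergence rate} and substituting the per-interval quadrature rates from Section~\ref{subsection.Computational Error Quantification} into $\|\delta\Xi^{(i)}\|_2$), and your treatment of the vanishing term $2^{-i}(2r_0L_0)^{2^i}/L_0$, the decomposition of $\bar{\epsilon}$, the fixed length $m$ of $\delta\Xi^{(i)}$, and the need for uniformity of the constants over $i$ are all sound and match the intended argument. The Matérn half is correct as written: $e(\gH_K)=O(N^{-b})$ for BQ with the Matérn kernel when $l\in W_2^b$, and the rate propagates through $\bar{\epsilon}$ exactly as you describe.

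There is, however, one concrete gap in the trapezoidal half. You assert that the worst-case error satisfies $e(\gH_K)=O(N^{-2})$ ``for the trapezoidal rule with $l$ in the Wiener RKHS,'' but that is not what Section~\ref{subsection.Computational Error Quantification} establishes: the posterior variance there is $\Sigma_{\xi(l)}=\Delta T^3/(12(N-1)^2)$, so the worst-case error $e(\gH_K)=\sqrt{\Sigma_{\xi(l)}}$ is only $O(N^{-1})$. The Wiener RKHS (the Cameron--Martin space) contains functions with just one weak derivative in $L_2$, and the RKHS/Cauchy--Schwarz route \eqref{eq.error bound of Quadrature} cannot deliver better than $O(N^{-1})$ under that hypothesis alone. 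The $O(N^{-2})$ rate claimed in the corollary instead rests on the classical error bound for the trapezoidal rule applied to twice-differentiable integrands (the paper's citation of \cite{atkinson1991introduction}), which is a different and strictly stronger regularity assumption on $s\mapsto l(x(s),\hat{u}^{(i)}(x(s)))$ than membership in the Wiener RKHS. To close the argument you must either invoke that classical bound explicitly (and verify, uniformly in $i$, the $\sC^2$ regularity of the integrand along the trajectories), or accept that the purely RKHS-based chain of inequalities yields only $|\hat{V}^{(\infty)}(x)-V^*(x)|=O(N^{-1})$ in the trapezoidal case.
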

    

\section{Experimental Results}\label{sec.simulations}

In this section \footnote{The code is available at
\href{https://github.com/anonymity678/Computation-Impacts-Control.git}{https://github.com/anonymity678/Computation-Impacts-Control.git}.}, we validate the convergence rate of the IntRL for the trapzoidal rule and BQ with Matérn kernel as demonstrated in Corollary \ref{corollary.convergence rate}.


\textbf{Example 1 (Linear System):} First, we will consider the canonical linear-quadratic regulator problem defined by system dynamics $\Dot{x} = Ax + Bu$ and the utility function $l(x, u) = x^{\top}Qx + u^{\top}Ru$. The associated PI can be obtained by setting $\phi(x) = x \otimes x$. The parameter of value function  $\omega^{(i)}$ equals $\mathrm{vec}(P_i)$, where $P_i$ is the solution to the Lyapunov equation 
\cite{vrabie2009adaptive}. In this case, the optimal parameter of the value function $\omega^* = \mathrm{vec}(P^*)$ can be determined by solving the Ricatti equation \cite{vrabie2009adaptive}. Consider the third-order system \cite{jiang2017robust}:
\begin{equation}\nonumber
\begin{aligned}
A = \begin{bmatrix}
0 & 1 & 0 \\
0 & 0 & 1 \\
-0.1 & -0.5 & -0.7 \\
\end{bmatrix}
, \; B = \begin{bmatrix}
0 \\
0 \\
1 
\end{bmatrix},    
\end{aligned} \;
Q = I_{3\times3}, \; R = 1.  
\end{equation}
The solution of the Ricatti equation is $\small P^* = \begin{bmatrix}
2.36 & 2.24 & 0.90 \\
2.24 & 4.24 & 1.89 \\
0.90 & 1.89 & 1.60 \\
\end{bmatrix}$.
The initial policy of the PI is chosen as an admissible policy $u = -K_0x$ with $K_0 = [0, 0, 0]$. We use the trapezoidal rule and the BQ with Matérn kernel (smoothness parameter $b=4$) for evenly spaced samples with size $5 \leq N \leq 15$ to compute the PEV step of IntRL. The $l_2$-norm of the difference of the parameters $\|\hat{\omega}^{(\infty)} - \omega^* \|_2$ concerning the sample size can be shown in Figure \ref{fig.Example 1}, where $ \hat{\omega}^{(\infty)} := \lim_{i\to\infty} \hat{\omega}^{(i)}$ is the learned parameter after sufficient iterations and $\omega^*$ is the optimal parameter.

\begin{figure}[h]
    \centering
    \begin{subfigure}[h]{0.40\linewidth}
\includegraphics[width=\linewidth]{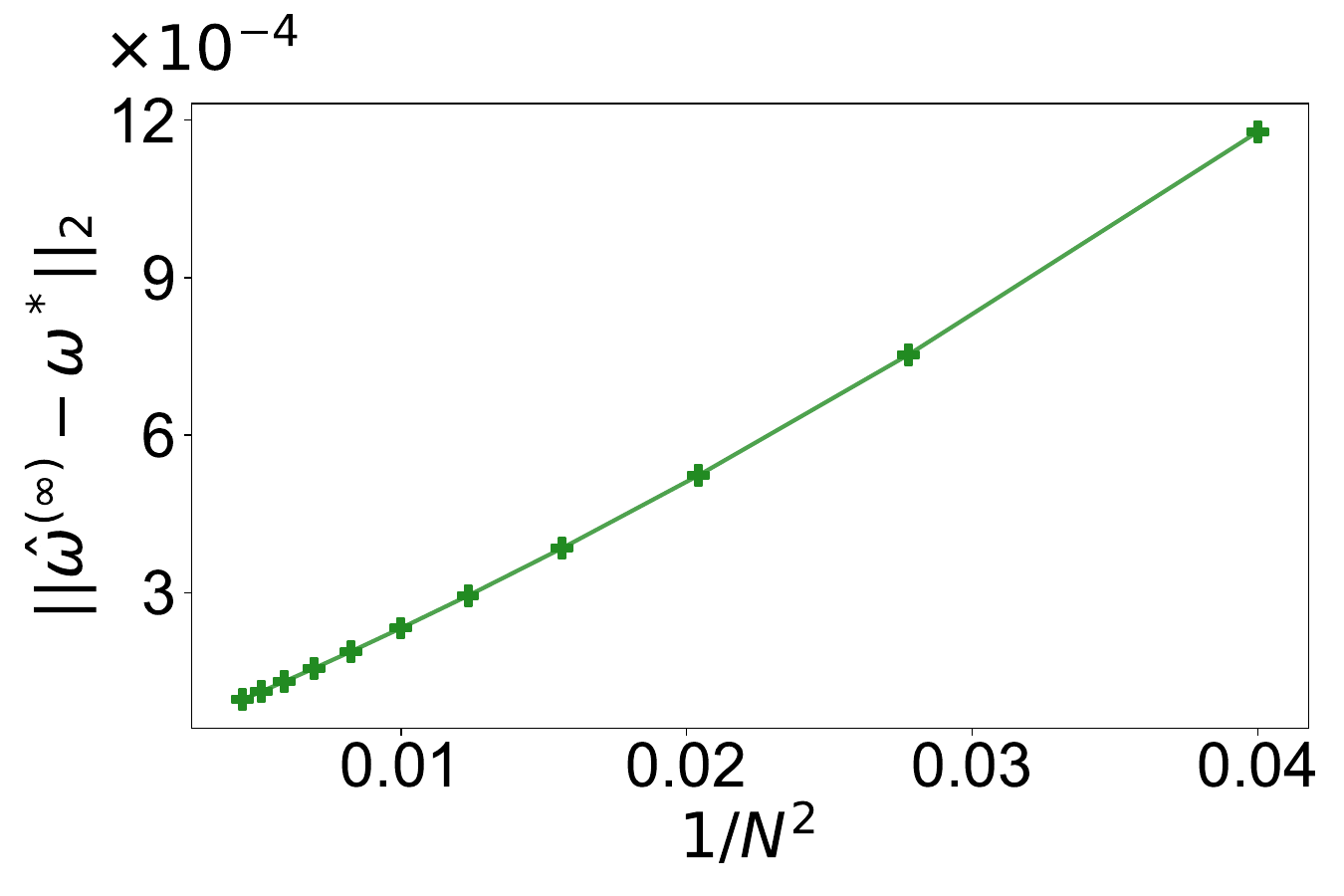}        \caption{Trapezoidal Rule}
    \end{subfigure}
    \hfill
    \begin{subfigure}[h]{0.42\linewidth}
\includegraphics[width=\linewidth]{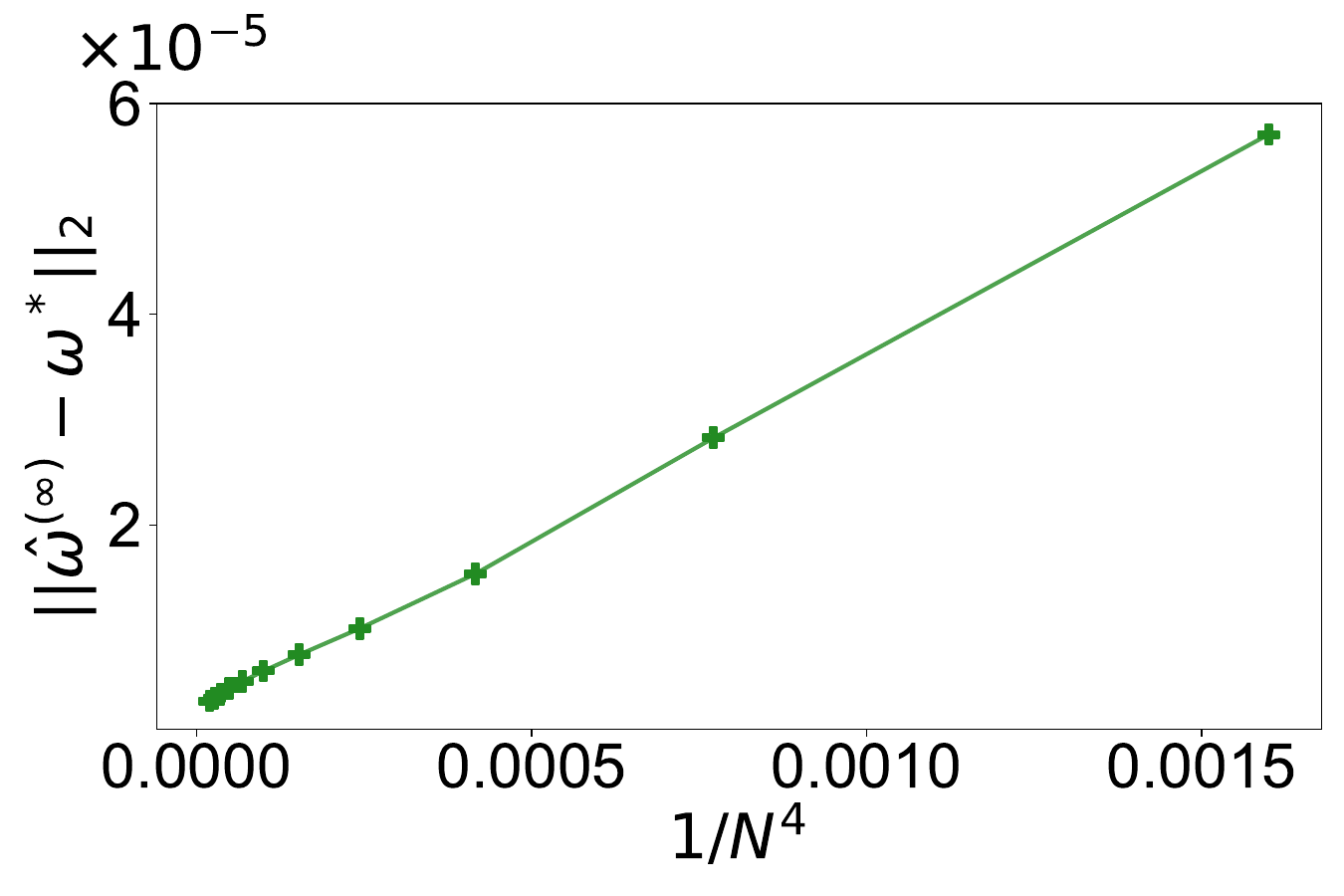}
    \caption{BQ with Matérn Kernel}
    \end{subfigure}   \caption{Simulations for Example 1. The convergence rates of the learned parameters $\hat{\omega}^{(\infty)}$ solved by the trapezoidal rule and BQ with Matérn Kernel ($b=4$) are shown to be $O(N^{-2})$ and $O(N^{-4})$.}\label{fig.Example 1}
\end{figure}

We demonstrate that the convergence rate of the parameter \( \hat{\omega}^{(\infty)} \) aligns with the derived upper bound for the value function as outlined in Corollary \ref{corollary.convergence rate}. This observation is sufficient to validate Corollary \ref{corollary.convergence rate} since the convergence rate of \( \hat{\omega}^{(\infty)} \) inherently serves as an upper bound for the convergence rate of the value function \( \hat{V}^{(\infty)} \). Specifically, we have:
\[|V^{(\infty)}(x) - V^{*}(x)| = \|(\hat{\omega}^{(\infty)} - \omega^*)\phi(x) \|_2  \leq \|\hat{\omega}^{(\infty)} - \omega^* \|_2 \|\phi(x) \|_2, \quad \forall x \in \Omega.\]

\textbf{Example 2 (Nonlinear System):} Then we consider a canonical nonlinear system in the original paper of IntRL (Example 1 in \cite{vrabie2009neural}):
\begin{equation}\nonumber
\begin{aligned}
\Dot{x}_1 &= -x_1 + x_2, \quad
\Dot{x}_2 &= -0.5(x_1 + x_2) + 0.5x_2 \sin^2(x_1) + \sin(x_1)u,
\end{aligned}
\end{equation}
with the utility function defined as $l(x, u) = x_1^2 + x_2^2 + u^2$. The optimal value function for this system is $V^*(x) = 0.5 x_1^2 + x_2^2$ and the optimal controller is $u^*(x) = -\sin(x_1) x_2$. Due to the simple formulation of the value function, we set the basis function to be $\phi(x) := x \otimes x$ as in \cite{vrabie2009neural}. The initial policy is chosen as $u^{(0)} = -1.5 x_1 \sin(x_1) (x_1 + x_2)$ and we use the trapezoidal rule and the BQ with the Matérn kernel (smoothness parameter $b=4$) for evenly spaced samples with size $5 \leq N \leq 15$ to compute the PEV step. The convergence rate of the learned parameter for Example 2 can be shown in Figure \ref{fig.Example 2} and more simulation results regarding the convergence rate of the controller and the average accumulated cost can be found in Appendix \ref{appendix.simulation results}.

\begin{figure}[h]
    \centering
    \begin{subfigure}[h]{0.40\linewidth}
\includegraphics[width=\linewidth]{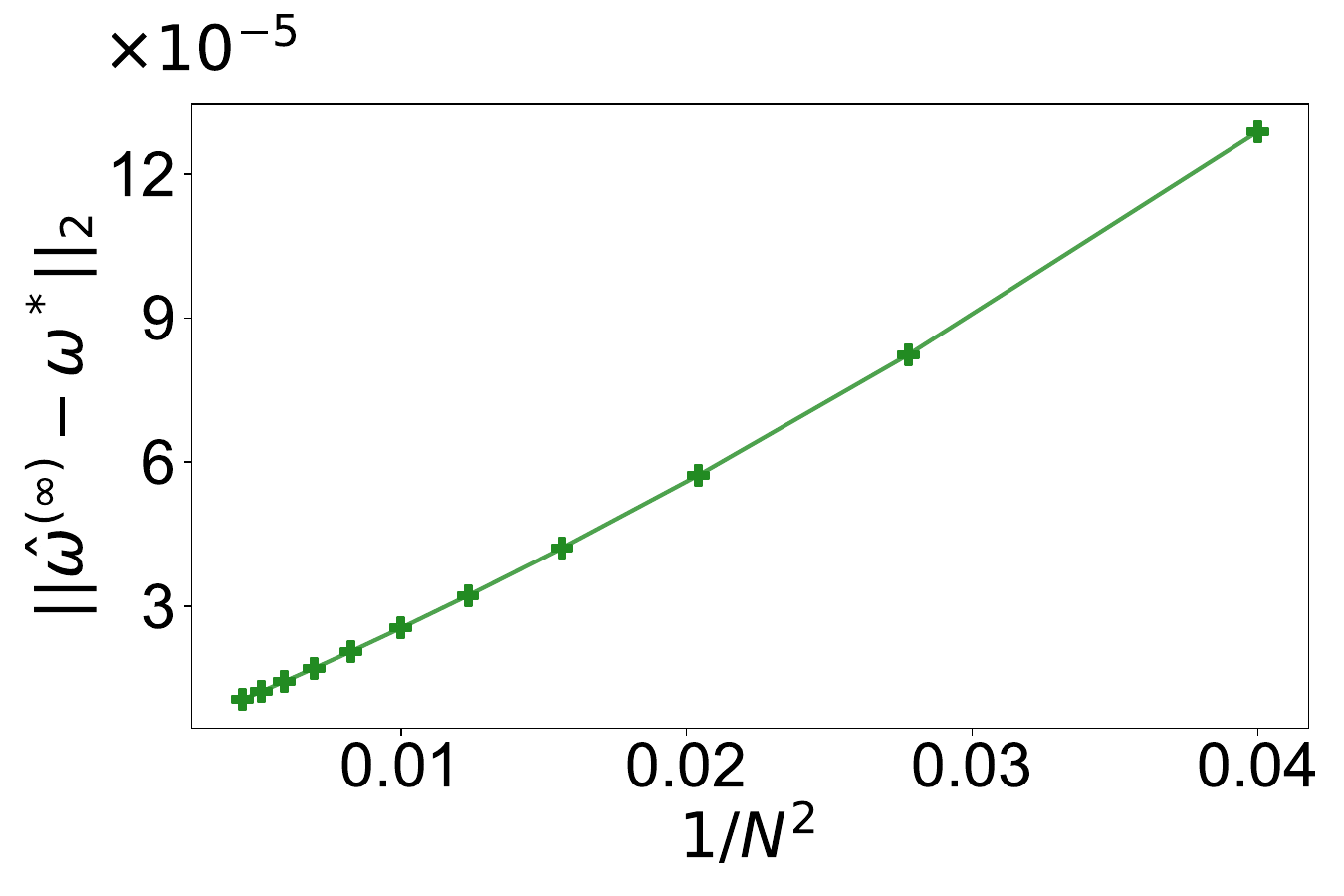}        \caption{Trapezoidal Rule}
    \end{subfigure}
    \hfill
    \begin{subfigure}[h]{0.42\linewidth}
\includegraphics[width=\linewidth]{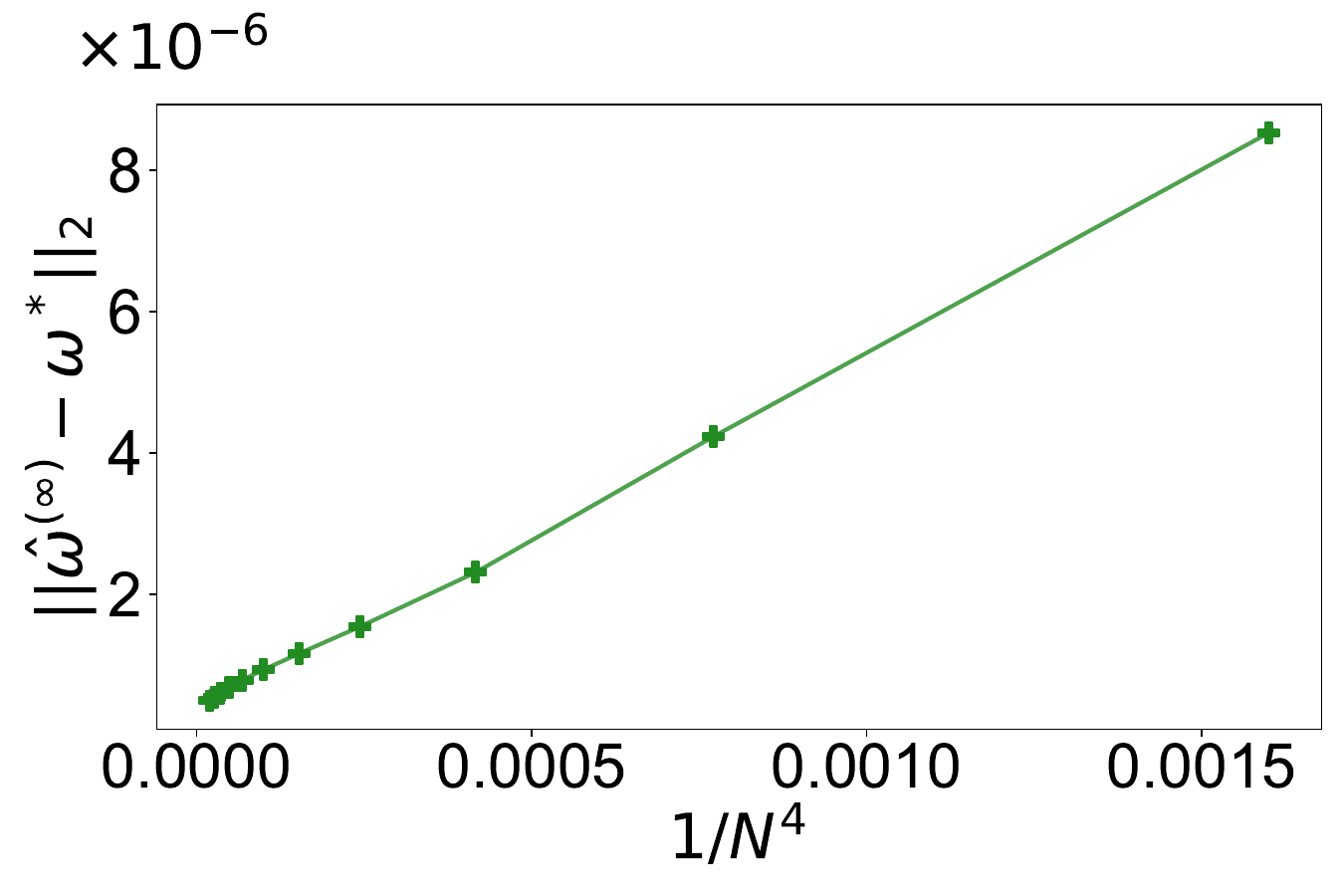}
    \caption{BQ with Matérn Kernel}
    \end{subfigure}
    \caption{Simulations for Example 2. The convergence rates of the learned parameters $\hat{\omega}^{(\infty)}$ solved by the trapezoidal rule and BQ with Matérn Kernel ($b=4$) are shown to be $O(N^{-2})$ and $O(N^{-4})$.}\label{fig.Example 2}
\end{figure}

\section{Conclusion and Discussion}

This paper primarily identifies the phenomenon termed ``computation impact control". Yet, it is possible to harness computational uncertainty, specifically the posterior variance in the BQ, to devise a ``computation-aware" control policy. One approach is to view computational uncertainty with pessimism. In situations where computational uncertainty poses significant challenges, robust control methodologies can be employed to derive a conservative policy. Alternatively, a more optimistic perspective would embrace computational uncertainty as an opportunity for exploration.

\subsection*{Ethics Statement}
Our research strictly adheres to ethical guidelines and does not present any of the issues related to human subjects, data set releases, harmful insights, conflicts of interest, discrimination, privacy, legal compliance, or research integrity. All experiments and methodologies were conducted in a simulation environment, ensuring no ethical or fairness concerns arise.

\subsection*{Reproducibility Statement}
The source code associated with this paper is available for download through an anonymous GitHub repository at \href{https://github.com/anonymity678/Computation-Impacts-Control.git}{https://github.com/anonymity678/Computation-Impacts-Control.git}. All simulation results have been uploaded to the repository, ensuring that readers and researchers can reproduce the findings presented in this paper with ease. Besides, all the theoretical proofs for the theorems and lemmas are provided in the appendix.



\bibliography{iclr2024_conference}
\bibliographystyle{iclr2024_conference}

\newpage
\appendix

\section*{\LARGE Appendix}
\section{Motivating Example for Known Internal Dynamics}\label{appendix.motivating example}
The phenomenon ``computation impacts control performance'' also exists for the CTRL algorithm with known internal dynamics \cite{yildiz2021continuous}. 
In Figure \ref{fig.model_based_motivation}, we evaluate the impact of different numerical ordinary differential equation (ODE) solvers on the accumulated costs when applying the CTRL algorithm with known internal dynamics. Specifically, we consider the fixed-step Euler method with varying step sizes and the adaptive Runge-Kutta (4,5) method. Our results from the Cartpole task show that the adaptive ODE solver, in this case the Runge-Kutta (4,5) method, consistently outperforms the Euler method, yielding lower accumulated costs.  Additionally, a smaller step size is associated with improved performance within the Euler method.
\begin{figure}[h]
    \centering
    \includegraphics[width=0.5\linewidth]{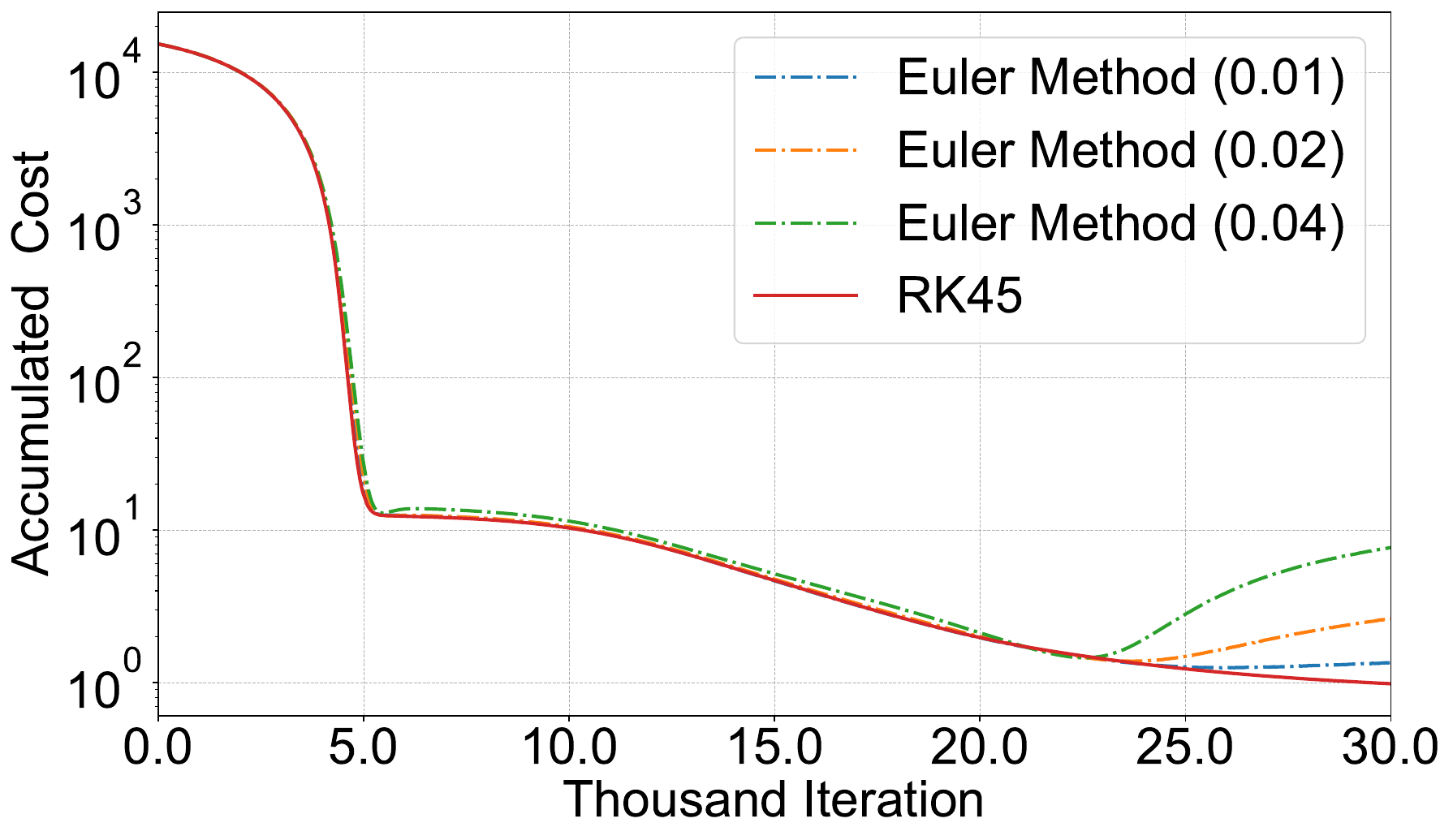}
    \caption{The illustration of the phenomenon ``computation impacts control performance" when internal dynamics is known. This figure shows the accumulated costs for the Euler method with different step sizes (indicated in parentheses) and the Runge-Kutta (4,5) method in the Cartpole task with known internal dynamics.}
    \label{fig.model_based_motivation}
\end{figure}

\section{Definition of Admissible Policies}\label{appendix.Definition of Admissible Policies}
\begin{definition} [Admissible Policies \cite{vrabie2009neural}] A control policy $u$ is admissible with respect to the cost, denoted as $u \in A(\Omega)$, if $u$ is continuous on $\Omega$, $u(0) = 0$, $u$ stabilizes the system \eqref{eq.system equation} on $\Omega$, and $J(x_0, u)$ in \eqref{eq.cost function} is finite for all $x_0 \in \Omega$.
\end{definition}

\section{CT Bellman Equation for Known Internal Dynamics}\label{appendix.CT Bellman equation with Known internal dynamics}

When the internal dynamics is considered to be \textit{known}, which can be derived from fundamental principles such as Lagrangian or Hamiltonian mechanics, or inferred using deep learning models \cite{chen2018neural}, \cite{lutter2018deep}, and \cite{zhong2019symplectic}.  In this case, the CT Bellman equation can be addressed by formulating an augmented ODE:
\begin{equation}\label{eq.augumented ODE}\nonumber
\begin{aligned}
\begin{bmatrix}
\dot{x}(t)\\
\dot{V}(x(t))
\end{bmatrix} =
\begin{bmatrix}
f(x(t)) + g(x)u(x(t))\\
l(x(t),u(x(t)))
\end{bmatrix}.
\end{aligned}
\end{equation}
Here, both the state $x$ and value function $V$ act as independent variables of the augmented ODE. This can be solved by numerical ODE solvers ranging from fixed-step solvers like the Euler method to adaptive solvers such as the Runge-Kutta series \cite{yildiz2021continuous}. The computational error for solving the CT bellman equation can be sufficiently small by choosing the adaptive ODE solvers.

\section{Proof of Lemma \ref{lemma.Newton’s method}}\label{appendix.Proof of Lemma Newton’s method}

\begin{proof}
For $\forall V \in \mathbb{V}$ and $W \in \mathbb{V}_n \subset \mathbb{V}$, where $\mathbb{V}_n$ denotes the neighborhood of $V$. By the definition of $G(V)$ in \eqref{eq.HJB equation}, we have
\begin{equation}\nonumber
\begin{aligned}
&G(V+sW) - G(V) \\
&=  \frac{\partial{(V+sW)}}{\partial{x^{\top}}}f -\frac{1}{4}\frac{\partial{(V+sW)}}{\partial{x^{\top}}}gR^{-1}g^{\top}\frac{\partial{(V+sW)}}{\partial{x}}- \frac{\partial{V}}{\partial{x^{\top}}}f  +\frac{1}{4}\frac{\partial{V}}{\partial{x^{\top}}}gR^{-1}g^{\top}\frac{\partial{V}}{\partial{x}}\\
&=s\frac{\partial{W}}{\partial{x^{\top}}}f -\frac{1}{4}s^2\frac{\partial{W}}{\partial{x^{\top}}}gR^{-1}g^{\top}\frac{\partial{W}}{\partial{x}} -\frac{1}{2}s\frac{\partial{W}}{\partial{x^{\top}}}gR^{-1}g^{\top}\frac{\partial{V}}{\partial{x}}.
\end{aligned}
\end{equation}
Thus, the Gâteaux differential \cite{kantorovich2016functional} at $V$ is 
\begin{equation}\nonumber
\begin{aligned}
L(W) &= 
G'(V)W 
\\
&= \lim_{s \to 0} \frac{G(V+sW) - G(V)}{s}\\
&= \frac{\partial{W}}{\partial{x^{\top}}}f - \frac{1}{2}\frac{\partial{W}}{\partial{x^{\top}}}gR^{-1}g^{\top}\frac{\partial{V}}{\partial{x}}.    
\end{aligned}
\end{equation}
Then we will prove that $L(W)$ is continuous on $\mathbb{V}_n$. For $\forall W_0 \in \mathbb{V}_n$, we have
\begin{equation}\nonumber
\begin{aligned}
L(W) - L(W_0) & = \| G'(V)W - G'(V)W_0 \|_{\Omega} \\
&= \left\| \frac{\partial{(W-W_0)}}{\partial{x^{\top}}}f - \frac{1}{2}\frac{\partial{(W-W_0)}}{\partial{x^{\top}}}gR^{-1}g^{\top}\frac{\partial{V}}{\partial{x}} \right\|_{\Omega} \\
&\leq\left( \| f \|_{\Omega} + \left\|\frac{1}{2}gR^{-1}g^{\top}\frac{\partial{V}}{\partial{x}} \right\|_{\Omega} \right) \cdot
\left\|  \frac{\partial{(W-W_0)}}{\partial{x}}\right\|_{\Omega} \\
&\leq \left( \| f \|_{\Omega} + \left\|\frac{1}{2}gR^{-1}g^{\top}\frac{\partial{V}}{\partial{x}} \right\|_{\Omega} \right)\cdot m_1 \cdot \|W-W_0\|_{\Omega},
\end{aligned}
\end{equation}
where $m_1 > 0$. Then, for $\forall \epsilon>0$, there exists $\delta = \frac{\epsilon}{m_1\left( \| f \|_{\Omega} + \|\frac{1}{2}gR^{-1}g^{\top}\frac{\partial{V}}{\partial{x}} \|_{\Omega} \right)}$ satisfying $\| L(W) - L(W_0) \|_{\Omega} < \epsilon$ when $\|W - W_0 \|_{\Omega} < \delta$. Due to the continuity, $L$ is also the Fréchet derivative at $V$ \cite{kantorovich2016functional}.

Then, we prove PI defined in \eqref{eq.PEV}\eqref{eq.PIM} equals \eqref{eq.Newton's method}. From \eqref{eq.Newton's method}, we obtain
\begin{equation}\label{eq.Newton's method 1}
G'(V^{(i)})V^{(i+1)} = G'(V^{(i)})V^{(i)} - G(V^{(i)}).
\end{equation}
According to \eqref{eq.G'(V)} and \eqref{eq.PIM}, we have
\begin{subequations}
\label{eq.proof Newton's Method}
\begin{align}
    \label{eq.proof Newton's Method 1}
G'(V^{(i)})V^{(i+1)} &= \frac{\partial{V^{(i+1)}}}{\partial{x^{\top}}}f - \frac{1}{2}\frac{\partial{V^{(i+1)}}}{\partial{x^{\top}}}gR^{-1}g^{\top}\frac{\partial{V^{(i)}}}{\partial{x}} \\ \nonumber
&= \frac{\partial{V^{(i+1)}}}{\partial{x^{\top}}}\left( f + g{u^{(i+1)}}\right)
,   
     \\
    \label{eq.proof Newton's Method 2}
G'(V^{(i)})V^{(i)} &= \frac{\partial{V^{(i)}}}{\partial{x^{\top}}}f - \frac{1}{2}\frac{\partial{V^{(i)}}}{\partial{x^{\top}}}gR^{-1}g^{\top}\frac{\partial{V^{(i)}}}{\partial{x}} \\ \nonumber
&= \frac{\partial{V^{(i)}}}{\partial{x^{\top}}}f - 2{u^{(i+1)}}^{\top}Ru^{(i+1)},
\\
\label{eq.proof Newton's Method 3}
G(V^{(i)}) &= \frac{\partial{V^{(i)}}}{\partial{x^{\top}}}f + q -\frac{1}{4}\frac{\partial{V^{(i)}}}{\partial{x^{\top}}}gR^{-1}g^{\top}\frac{\partial{V^{(i)}}}{\partial{x}} \\ \nonumber
&= \frac{\partial{V^{(i)}}}{\partial{x^{\top}}}f + q - {u^{(i+1)}}^{\top}Ru^{(i+1)}. 
\end{align}
\end{subequations}
Pluging \eqref{eq.proof Newton's Method} into \eqref{eq.Newton's method 1}, we obtain 
\begin{equation}\nonumber
\frac{\partial{V^{(i+1)}}}{\partial{x^{\top}}}\left( f + g{u^{(i+1)}}\right) = -q - {u^{(i+1)}}^{\top}Ru^{(i+1)}.
\end{equation}
Which means
\begin{equation}\label{eq.CT Bellman Equation i+1}
\Dot{V}^{(i+1)} = -q - {u^{(i+1)}}^{\top}Ru^{(i+1)}.
\end{equation}
Finally,
\eqref{eq.PEV} can be obtained by integrating \eqref{eq.CT Bellman Equation i+1}.
\end{proof}
\section{Proof of Theorem \ref{theorem.Convergence of PI}}\label{appendix.Proof of Convergence of PI}

First, we introduce a useful lemma for the convergence of the standard Newton's method (without computational error). 

\begin{lemma}[Convergence of the Standard Newton's Method \cite{ostrowski2016solution,lancaster1966error}] \label{lemma.Convergence of the Standard Newton's Method}
Suppose these exists an $V^{(0)} \in \mathbb{V}$ for which $\left[G'(V^{(0)})\right]^{-1}$ exists and $G$ is twice Fréchet differentiable on $B_0$, where
$B_0 := \{V \in \mathbb{V}: \| V - V^{(0)} \|_{\Omega} \leq r_0\}$. Define $r_0 := \| V^{(1)} - V^{(0)} \|_{\Omega} = \| \left[G'(V^{(0)})\right]^{-1} G(V^{(0)}) \|_{\Omega}$ and $L_0 := \sup_{V \in B_0} \|\left[G'(V^{(0)})\right]^{-1} G''(V) \|_{\Omega}$, if we have $r_0L_0 \leq \frac{1}{2}$, then the sequence $V^{(i)} \in B_0$ generated by Newton's method converges to the unique solution of $G(V) = 0$, i.e., $V^*$ in $B_0$. Moreover, we have $\| V^{(i)} - V^* \|_{\Omega} \leq \frac{2^{-i}(2r_0L_0)^{2^i}}{L_0}$.
\end{lemma}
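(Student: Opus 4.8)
The plan is to prove this by the classical majorant (Newton--Kantorovich) method, reducing the abstract iteration to a scalar quadratic recursion whose convergence is explicit. First I would normalize the problem. Since the iteration $V^{(i+1)} = V^{(i)} - [G'(V^{(i)})]^{-1}G(V^{(i)})$ is unchanged when $G$ is replaced by $H := [G'(V^{(0)})]^{-1}G$ (left-multiplication by the fixed invertible operator $[G'(V^{(0)})]^{-1}$ cancels inside the Newton correction), I may assume $H'(V^{(0)}) = I$. In these coordinates the hypotheses read $\|H(V^{(0)})\|_{\Omega} = r_0$ and $\sup_{V\in B_0}\|H''(V)\|_{\Omega} \le L_0$, so $r_0$ and $L_0$ are exactly the first-step length and the relative Lipschitz constant of $H'$.

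Next I would introduce the scalar majorant $p(t) = \frac{L_0}{2}t^2 - t + r_0$ with the scalar Newton sequence $t_0 = 0$, $t_{i+1} = t_i - p(t_i)/p'(t_i)$. The condition $r_0 L_0 \le \frac12$ makes the discriminant nonnegative, so $p$ has real roots $t^* \le t^{**}$, and a direct check shows $\{t_i\}$ increases monotonically to the smaller root $t^*$. I would also record the Banach perturbation estimate: since $\|H'(V) - I\|_{\Omega} \le L_0\|V - V^{(0)}\|_{\Omega}$ (integrate $H''$ along the segment), $H'(V)$ is invertible with $\|[H'(V)]^{-1}\|_{\Omega} \le (1 - L_0\|V-V^{(0)}\|_{\Omega})^{-1}$ whenever $\|V - V^{(0)}\|_{\Omega} < 1/L_0$; the majorant guarantees $t_i < t^* \le 1/L_0$, so this applies at every iterate.

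The core is the inductive majorization $\|V^{(i+1)} - V^{(i)}\|_{\Omega} \le t_{i+1} - t_i$ together with $V^{(i)} \in B_0$. The inductive step uses Taylor's theorem with integral remainder for the Fréchet-differentiable $H$: because the Newton correction annihilates the zeroth- and first-order terms, one is left with $\|H(V^{(i+1)})\|_{\Omega} \le \frac{L_0}{2}\|V^{(i+1)}-V^{(i)}\|_{\Omega}^2$, which is dominated termwise by the exact scalar identity $p(t_{i+1}) = \frac{L_0}{2}(t_{i+1}-t_i)^2$; combining this with the inverse bound above reproduces the scalar recursion at the operator level. Monotone convergence of $\{t_i\}$ then forces $\{V^{(i)}\}$ to be Cauchy, hence convergent to some $V^* \in B_0$ with $H(V^*) = 0$, i.e. $G(V^*) = 0$; uniqueness of the zero in $B_0$ follows from the same quadratic majorant. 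This already yields the a priori estimate $\|V^{(i)} - V^*\|_{\Omega} \le t^* - t_i$.

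The remaining and most delicate step is to convert $t^* - t_i$ into the stated closed form. Here I would use the root-ratio substitution $u_i := (t_i - t^*)/(t_i - t^{**})$, which for Newton's method on a quadratic satisfies the exact doubling relation $u_{i+1} = u_i^2$, hence $u_i = \theta^{2^i}$ with $\theta := t^*/t^{**} = \frac{1-\sqrt{1-2r_0L_0}}{1+\sqrt{1-2r_0L_0}}$. Solving for the error gives $t^* - t_i = (t^{**}-t^*)\,\theta^{2^i}/(1-\theta^{2^i})$. The main obstacle is the final elementary-but-fiddly estimate: using $\theta \le 2r_0L_0$ and controlling the denominator $1-\theta^{2^i}$, one over-estimates this expression by $\frac{2^{-i}(2r_0L_0)^{2^i}}{L_0}$, which coincides with the classical Kantorovich--Ostrowski form $2r_0\,(2r_0L_0)^{2^i-1}/2^i$. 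I expect the constant tracking in this last inequality to require the most care, being tight in the degenerate case $r_0L_0 = \frac12$ (where $\theta = 1$ and the quotient must be read as a limit), whereas the majorization and convergence arguments are entirely standard.
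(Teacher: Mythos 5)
The paper does not actually prove this lemma: it is quoted verbatim as a known result and attributed to Ostrowski and Lancaster, so there is no in-paper argument to compare against. Your proposal supplies the standard Newton--Kantorovich majorant proof, which is essentially what those references contain: the normalization $H = [G'(V^{(0)})]^{-1}G$, the quadratic majorant $p(t)=\tfrac{L_0}{2}t^2 - t + r_0$, the termwise domination $\|V^{(i+1)}-V^{(i)}\|_{\Omega}\le t_{i+1}-t_i$, and the root-ratio substitution $u_i=(t_i-t^*)/(t_i-t^{**})$ with exact doubling $u_{i+1}=u_i^2$ are all correct, and the final elementary estimate you defer does go through (one uses $(1+s)^{2^i}-(1-s)^{2^i}\ge 2^{i+1}s$ with $s=\sqrt{1-2r_0L_0}$ to control the denominator $1-\theta^{2^i}$, recovering $\tfrac{2^{-i}(2r_0L_0)^{2^i}}{L_0}$, with the double-root case $r_0L_0=\tfrac12$ handled by the limit $t^*-t_i=t^*/2^i$ as you note).

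One point deserves an explicit flag rather than silent acceptance. Your majorant confines the iterates to the ball of radius $t^*=\frac{1-\sqrt{1-2r_0L_0}}{L_0}$ about $V^{(0)}$, and $t^*$ lies in $[r_0,\,2r_0]$, reaching $2r_0$ when $r_0L_0=\tfrac12$; already $t_2>r_0$ whenever $r_0L_0>0$. So the induction step, which applies the bound $\sup_{V\in B_0}\|H''(V)\|_{\Omega}\le L_0$ along the segment from $V^{(i)}$ to $V^{(i+1)}$, is only valid if the differentiability hypothesis and the definition of $L_0$ are taken over the ball of radius $t^*$ (or $2r_0$), not over the radius-$r_0$ ball $B_0$ as the lemma literally states. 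This is a defect of the quoted statement rather than of your argument --- the classical theorem is stated on the larger ball --- but as written your proof asserts $V^{(i)}\in B_0$ while actually establishing membership in the larger ball, and the two claims should not be conflated. With the radius corrected, the proposal is sound.
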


Then we will analyze the convergence property of Newton's method for solving $G(V) = 0$ with an extra error term $E^{(i)} \leq \Bar{\epsilon}$ in \eqref{eq.Newton's method with extra error}. We first study the iteration error between the value function for standard Newton's method $V^{(i)}$ and the value function for Newton's method incorporated with an extra error term $\hat{V}^{(i)}$.  Specifically, we have the subsequent proposition:

\begin{proposition}[Iteration Error for Newton's Method] \label{prop.Iteration Error for Newton's Method}
If $\hat{V}^{(i)}\in B$, $G''$ exists and continuous in $B$, 
where $B := \{V \in \mathbb{V}: \| V - V^{(0)} \|_{\Omega} \leq r_0 + d \}$ with $d \geq 0$. If $G'(\hat{V}^{(i)})$
has an inverse then
\begin{equation}\label{eq.V - V_hat}
V^{(i+1)} - \hat{V}^{(i+1)} = \left[G'(\hat{V}^{(i)})\right]^{-1}\left\{\int_{V^{(i)}}^{\hat{V}^{(i)}}  {G''(V^{(i)}-V,\cdot) \, \mathrm{d}V} +  \int_{V^{(i)}}^{\hat{V}^{(i)}}{G''(h_i, \cdot) \, \mathrm{d}V}\right\} - E^{(i)}, 
\end{equation}
where $h_i = -\left[G'({V}^{(i)})\right]^{-1}G({V}^{(i)})$.
\end{proposition}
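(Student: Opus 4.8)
The plan is to expand both iterations explicitly and reorganize their difference into the stated integral form, the key tools being Taylor's theorem with integral remainder together with the symmetry of the second Fréchet derivative $G''$. First I would abbreviate $A:=G'(V^{(i)})$ and $\hat{A}:=G'(\hat{V}^{(i)})$, and recall that the exact Newton step obeys $h_i=-A^{-1}G(V^{(i)})$, so that $V^{(i+1)}=V^{(i)}+h_i$ and $G(V^{(i)})=-Ah_i$. Subtracting the perturbed update \eqref{eq.Newton's method with extra error} from the standard one \eqref{eq.Newton's method} gives
\[
V^{(i+1)}-\hat{V}^{(i+1)}=(V^{(i)}-\hat{V}^{(i)})+h_i+\hat{A}^{-1}G(\hat{V}^{(i)})-E^{(i)},
\]
and I would factor $\hat{A}^{-1}$ out of the first three terms, rewriting them as $\hat{A}^{-1}\big[\hat{A}(V^{(i)}-\hat{V}^{(i)})+\hat{A}h_i+G(\hat{V}^{(i)})\big]$. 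Closing the identity then amounts to showing the bracketed quantity equals the sum of the two line integrals appearing in \eqref{eq.V - V_hat}.

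The core algebraic step is to introduce the second-order Taylor remainder $R:=G(V^{(i)})-G(\hat{V}^{(i)})-\hat{A}(V^{(i)}-\hat{V}^{(i)})$, so that $\hat{A}(V^{(i)}-\hat{V}^{(i)})+G(\hat{V}^{(i)})=G(V^{(i)})-R$. Substituting $G(V^{(i)})=-Ah_i$ into the bracket, it collapses to
\[
\hat{A}(V^{(i)}-\hat{V}^{(i)})+\hat{A}h_i+G(\hat{V}^{(i)})=(\hat{A}-A)h_i-R.
\]
It then remains to match $(\hat{A}-A)h_i$ with the second integral and $-R$ with the first.

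For the second integral I would parametrize the segment $V(t)=V^{(i)}+t(\hat{V}^{(i)}-V^{(i)})$, $t\in[0,1]$, so that $\mathrm{d}V=(\hat{V}^{(i)}-V^{(i)})\,\mathrm{d}t$; the fundamental theorem of calculus applied to $G'$ yields $\int_0^1 G''(V(t))(\hat{V}^{(i)}-V^{(i)},\cdot)\,\mathrm{d}t=\hat{A}-A$, and the symmetry $G''(u,v)=G''(v,u)$ lets me read off $\int_{V^{(i)}}^{\hat{V}^{(i)}}G''(h_i,\cdot)\,\mathrm{d}V=(\hat{A}-A)h_i$. For the first integral, along the same parametrization the first slot $V^{(i)}-V$ equals $-t(\hat{V}^{(i)}-V^{(i)})$, so the integral equals $-\int_0^1 t\,G''(V(t))(\hat{V}^{(i)}-V^{(i)},\hat{V}^{(i)}-V^{(i)})\,\mathrm{d}t$; comparing this with the integral form of the remainder, $R=\int_0^1(1-t)G''(W(t))(V^{(i)}-\hat{V}^{(i)},V^{(i)}-\hat{V}^{(i)})\,\mathrm{d}t$ with $W(t)=\hat{V}^{(i)}+t(V^{(i)}-\hat{V}^{(i)})$, and applying the substitution $s=1-t$ (under which $W(t)=V(s)$) shows the two agree up to sign, i.e. the first integral equals $-R$. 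Throughout I would invoke convexity of the ball $B$ so that the segment joining $V^{(i)}$ and $\hat{V}^{(i)}$ stays inside $B$ where $G''$ is continuous, noting $V^{(i)}\in B_0\subseteq B$ by Lemma \ref{lemma.Convergence of the Standard Newton's Method}. The main obstacle is purely the bookkeeping in the line-integral notation: lining up the orientation and the varying first argument with the $(1-t)$-weighted Taylor remainder, which the substitution $s=1-t$ resolves.
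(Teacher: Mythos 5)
Your proof is correct and follows essentially the same route as the paper's: subtract the perturbed update from the exact one, factor out $\left[G'(\hat{V}^{(i)})\right]^{-1}$, and identify the remaining bracket with the two line integrals via second-order Taylor expansion with integral remainder. The only cosmetic difference is that you expand $G(V^{(i)})$ about $\hat{V}^{(i)}$ (introducing the remainder $R$) and make the parametrization, orientation, and symmetry of $G''$ explicit, whereas the paper expands in the opposite direction and simply states the three Taylor identities before substituting.
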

\begin{proof}
Subtract \eqref{eq.Newton's method} with \eqref{eq.Newton's method with extra error}, we have
\begin{equation}\label{eq.V - V_hat 1}
\begin{aligned}
V^{(i+1)} - \hat{V}^{(i+1)} &= V^{(i)} - \hat{V}^{(i)} - \left[G'({V}^{(i)})\right]^{-1}G({V}^{(i)})  + \left[G'(\hat{V}^{(i)})\right]^{-1}G(\hat{V}^{(i)}) - E^{(i)} \\
& = \left[G'(\hat{V}^{(i)})\right]^{-1} \left[G(\hat{V}^{(i)}) + G'(\hat{V}^{(i)})(V^{(i)} - \hat{V}^{(i)}) + G'(\hat{V}^{(i)}) h_i\right] - E^{(i)}.
\end{aligned}
\end{equation}
Besides, we notice that 
\begin{equation}\label{eq.3 expressions}
\begin{aligned}
G(\hat{V}^{(i)}) &= G({V}^{(i)}) + G'({V}^{(i)})(\hat{V}^{(i)} - {V}^{(i)}) +  \int_{{V}^{(i)}}^{\hat{V}^{(i)}} G''(\hat{V}^{(i)} - V, \cdot) \, \mathrm{d}V, 
\\
G'(\hat{V}^{(i)}) &= G'({V}^{(i)}) + \int_{{V}^{(i)}}^{\hat{V}^{(i)}} G''(V) \, \mathrm{d}V,
\\
 G'(\hat{V}^{(i)}) h_i&= -G({V}^{(i)}) +  \int_{{V}^{(i)}}^{\hat{V}^{(i)}} G''(h_i, \cdot) \, \mathrm{d}V.
\end{aligned}
\end{equation}
By substituting \eqref{eq.3 expressions} into \eqref{eq.V - V_hat 1}, we can obtain \eqref{eq.V - V_hat}.
\end{proof}

Then, we will use Proposition \ref{prop.Iteration Error for Newton's Method} to achieve the bound of the iteration error  $\|V^{(i)} - \hat{V}^{(i)}\|_{\Omega}$, as summarized in the subsequent lemma.

\begin{lemma}[Iteration Error Bound for Newton's Method]\label{lemma.Iteration Error Bound for Newton's Method}
Under the Assumption in Lemma \ref{lemma.Convergence of the Standard Newton's Method} and additionally assume that $G''$ exists and is continuous in $B$ and $G'(V)$ is nonsingular for $\forall V \in B$. If there exists a number $\Phi, M>0$, such that $\frac{1}{\Phi} \geq \sup_{V\in B}\| [G'(V)]^{-1} \|_{\Omega}$, $M \geq \sup_{V\in B} \|G''(V) \|_{\Omega}$ and $\Phi - Mr_0 > \sqrt{2M\Bar{\epsilon}\Phi}$ and $d > \frac{2\bar{\epsilon}\Phi}{\Phi - Mr_0}$. We have $\hat{V}^{(i)} \in B$ and
\begin{equation}\label{eq.iteration error bound}
\|{V}^{(i)} - \hat{V}^{(i)}\|_{\Omega} \leq \frac{2\Bar{\epsilon}\Phi}{\Phi - Mr_0}.   
\end{equation}
\end{lemma}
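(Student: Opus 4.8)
The plan is to combine the exact error identity of Proposition~\ref{prop.Iteration Error for Newton's Method} with term-by-term norm bounds to collapse the Banach-space recursion for $e_i := \|V^{(i)} - \hat{V}^{(i)}\|_{\Omega}$ into a scalar quadratic recursion, and then to control that recursion through a fixed-point analysis. I would run a \emph{simultaneous} induction on the two assertions $\hat{V}^{(i)} \in B$ and $e_i \leq c_-$, where $c_-$ is the smaller root of the scalar equation introduced below. The base case is immediate: since $\hat{V}^{(0)} = V^{(0)}$ we have $e_0 = 0$ and $V^{(0)} \in B$.

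For the inductive step, assume $\hat{V}^{(i)} \in B$. By Lemma~\ref{lemma.Convergence of the Standard Newton's Method} the standard iterate satisfies $V^{(i)} \in B_0 \subseteq B$, so the whole segment $[V^{(i)}, \hat{V}^{(i)}]$ lies in the convex ball $B$, on which $\|[G'(\cdot)]^{-1}\|_{\Omega} \leq 1/\Phi$ and $\|G''(\cdot)\|_{\Omega} \leq M$ by condition (iv). Taking norms in the identity of Proposition~\ref{prop.Iteration Error for Newton's Method}, the prefactor $[G'(\hat{V}^{(i)})]^{-1}$ yields $1/\Phi$; the first integral, whose argument $V^{(i)} - V$ grows linearly in norm from $0$ to $e_i$ along the segment, is bounded by $\frac{1}{2} M e_i^2$; the second integral, carrying the constant argument $h_i = V^{(i+1)} - V^{(i)}$, is bounded by $M \|h_i\|_{\Omega} e_i$; and the residual by $\bar{\epsilon}$. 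The one auxiliary fact needed here is $\|h_i\|_{\Omega} \leq r_0$, i.e. that the standard Newton corrections do not grow: this follows from the hypothesis $r_0 L_0 \leq \frac{1}{2}$ (which forces the corrections to decrease), together with $\|h_0\|_{\Omega} = r_0$ by the definition of $r_0$. Assembling these gives the scalar recursion $e_{i+1} \leq \psi(e_i)$ with $\psi(e) = \frac{M}{2\Phi} e^2 + \frac{M r_0}{\Phi} e + \bar{\epsilon}$.

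Next I would analyze $\psi$. Its fixed-point equation $\psi(e) = e$ rearranges to $e^2 - \frac{2(\Phi - M r_0)}{M} e + \frac{2\Phi\bar{\epsilon}}{M} = 0$, whose discriminant is positive \emph{precisely} under condition (iv), $\Phi - M r_0 > \sqrt{2 M \bar{\epsilon}\Phi}$; denote its roots $c_- \leq c_+$. Since $\psi$ is increasing and convex on $[0,\infty)$ with $\psi(0) = \bar{\epsilon} > 0$, the hypothesis $e_i \leq c_-$ gives $e_{i+1} \leq \psi(c_-) = c_-$, which closes the bound half of the induction. For the membership half, $\|\hat{V}^{(i+1)} - V^{(0)}\|_{\Omega} \leq e_{i+1} + \|V^{(i+1)} - V^{(0)}\|_{\Omega} \leq c_- + r_0$, and condition (v), $d > \frac{2\bar{\epsilon}\Phi}{\Phi - M r_0} \geq c_-$, guarantees $\hat{V}^{(i+1)} \in B$. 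Finally, rationalizing the smaller root, $c_- = \frac{2\Phi\bar{\epsilon}}{(\Phi - M r_0) + \sqrt{(\Phi - M r_0)^2 - 2 M \Phi\bar{\epsilon}}} \leq \frac{2\Phi\bar{\epsilon}}{\Phi - M r_0}$, which upgrades $e_i \leq c_-$ to the claimed bound \eqref{eq.iteration error bound}.

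The \textbf{main obstacle} is obtaining the linear coefficient $M r_0$ rather than a looser $2 M r_0$ (a naive diameter bound on the segment would give the latter and break the constant in conditions (iv)--(v)); this rests entirely on the monotone-decrease estimate $\|h_i\|_{\Omega} \leq r_0$, which I would justify carefully via Lemma~\ref{lemma.Convergence of the Standard Newton's Method}. The remaining delicate points are purely the discriminant bookkeeping and the conjugate-rationalization step, which is exactly where conditions (iv) and (v) are consumed.
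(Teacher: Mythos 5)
Your proposal is correct and takes essentially the same route as the paper's proof: induction on $i$, the exact identity of Proposition~\ref{prop.Iteration Error for Newton's Method}, the term-by-term bounds $\frac{1}{\Phi}$, $\frac{1}{2}M e_i^2$, $M\|h_i\|_{\Omega}e_i$ and $\bar{\epsilon}$ with $\|h_i\|_{\Omega}\leq r_0$ imported from Lemma~\ref{lemma.Convergence of the Standard Newton's Method}, and membership $\hat{V}^{(i)}\in B$ secured by condition (v). The only difference is in how the scalar recursion is closed: the paper substitutes $\delta=\frac{2\bar{\epsilon}\Phi}{\Phi-Mr_0}$ directly and verifies $\psi(\delta)\leq\delta$ via the identity $2\Phi\bar{\epsilon}=\delta(\Phi-Mr_0)$ and the implication $\Phi-Mr_0>\sqrt{2M\bar{\epsilon}\Phi}\Rightarrow M\delta+Mr_0<\Phi$, whereas you induct on the slightly tighter fixed point $c_-$ of $\psi$ and then rationalize $c_-\leq\frac{2\Phi\bar{\epsilon}}{\Phi-Mr_0}$ --- an equivalent piece of bookkeeping that consumes conditions (iv) and (v) in the same way.
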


\begin{proof}
The Lemma is proved by induction. Recall that $V^{(0)} = \hat{V}^{(0)}$, we obtain
\begin{equation}\nonumber
\begin{aligned}
\|V^{(1)} - \hat{V}^{(1)}\|_{\Omega} &= \|E^{(0)}\|_{\Omega} 
\\
&\leq \Bar{\epsilon}
\\
& \leq \frac{\Bar{\epsilon} \Phi}{\Phi - Mr_0}  \\
& \leq \frac{2\Bar{\epsilon} \Phi}{\Phi - Mr_0}.    
\end{aligned} 
\end{equation}
Thus \eqref{eq.iteration error bound} holds for $i = 1$.
Now suppose \eqref{eq.iteration error bound} is proved for $i$.
Because $d>\frac{2\Bar{\epsilon}\Phi}{\Phi - Mr_0}$ and $V^{(i)} \in B_0$ according to Lemma \ref{lemma.Convergence of the Standard Newton's Method}, we have $\hat{V}^{(i)} \in B$.
Then by setting $V = V^{(i)} + \tau (\hat{V}^{(i)} - V^{(i)}) \in B$, we get
\begin{equation}\nonumber
\begin{aligned}
\left\| \int_{V^{(i)}}^{\hat{V}^{(i)}}{G''(V^{(i)}-V, \cdot) \, \mathrm{d}V} \right\|_{\Omega} 
&\leq \|V^{(i)} - \hat{V}^{(i)}\|^2_{\Omega} \left\|  \int_{0}^{1}{\tau G''(V^{(i)} + \tau (\hat{V}^{(i)} - V^{(i)})) \, \mathrm{d}\tau} \right\|_{\Omega}
\\
&\leq \frac{1}{2}M \|V^{(i)} - \hat{V}^{(i)}\|^2_{\Omega}.
\end{aligned}
\end{equation}
Denoting $\delta =  \frac{2\Bar{\epsilon} \Phi}{\Phi - Mr_0}$, by \eqref{eq.V - V_hat}, we have
\begin{equation}\label{eq.V - V_hat2}\nonumber
\begin{aligned}
&\| V^{(i+1)} - \hat{V}^{(i+1)} \|_{\Omega}
\\
&= \left\| \left[G'(\hat{V}^{(i)})\right]^{-1}\left\{\int_{V^{(i)}}^{\hat{V}^{(i)}}  {G''(V^{(i)}-V,\cdot) \, \mathrm{d}V} +  \int_{V^{(i)}}^{\hat{V}^{(i)}}{G''(h_i, \cdot) \, \mathrm{d}V}\right\} - E^{(i)} \right\|_{\Omega}
\\
& \leq \left\| \left[G'(\hat{V}^{(i)})\right]^{-1} \right\|_{\Omega} \left\{ \left\| \int_{V^{(i)}}^{\hat{V}^{(i)}}  {G''(V^{(i)}-V,\cdot) \, \mathrm{d}V} \right\|_{\Omega} + \left\| \int_{V^{(i)}}^{\hat{V}^{(i)}}{G''(h_i, \cdot) \, \mathrm{d}V} \right\|_{\Omega} + \Bar{\epsilon} \right\}
\\
& \leq \frac{1}{\Phi} \left\{ \frac{1}{2}M \delta^2 + M\|h_i\|_{\Omega} \delta + \Bar{\epsilon} \right\}.
\end{aligned}
\end{equation}

According to Lemma \ref{lemma.Convergence of the Standard Newton's Method}, $\|h_i\|_{\Omega} < r_0$, $i=1,2,...$, so we have
\begin{equation}\nonumber
\begin{aligned}
\left\|V^{(i+1)} - \hat{V}^{(i+1)}\right\|_{\Omega} &\leq  \frac{M\delta^2 + 2M\|h_i\|_{\Omega}\delta + 2\Phi \Bar{\epsilon}}{2\Phi} 
\\
&\leq \frac{\delta}{2} \frac{M\delta + 2Mr_0 + (\Phi-Mr_0) }{\Phi}
\\
&= \frac{\delta}{2} \frac{M\delta + Mr_0 + \Phi}{\Phi}.
\end{aligned}    
\end{equation}
The condition $\Phi - Mr_0 > \sqrt{2M\Bar{\epsilon}\Phi}$ implies that
\begin{equation}\nonumber
M\delta + Mr_0 < \Phi.   
\end{equation}
Hence
\begin{equation}\nonumber
\begin{aligned}
\|V^{(i+1)} - \hat{V}^{(i+1)}\|_{\Omega} & \leq \delta = \frac{2\Bar{\epsilon} \Phi}{\Phi - Mr_0}.
\end{aligned}
\end{equation}
This completes the proof of Lemma \ref{lemma.Iteration Error Bound for Newton's Method}.
\end{proof}

Using a direct combination of Lemma \ref{lemma.Convergence of the Standard Newton's Method} and Lemma \ref{lemma.Iteration Error Bound for Newton's Method}, and perform triangle inequality in Banach space, we can obtain Theorem \ref{theorem.Convergence of PI}.

\section{Details for Matérn Kernel}\label{appendix.Matern kernel}

We focus on scale-dependent Matérn kernels, which are radial basis kernels with Fourier decay characterized by a smoothness parameter $b > 1/2$ for samples belonging to $\sR$ \cite{wendland2004scattered,pagliana2020interpolation}:
\begin{equation}\nonumber
K_{\text{Matérn}}(x, x') =  \left( \frac{\| x - x' \|}{\rho} \right)^{b-\frac{1}{2}} \beta_{b-\frac{1}{2}} \left( \frac{\| x - x' \|}{\rho} \right),    
\end{equation}
where $\| x - x' \|$ is the Euclidean distance between $x$ and $x'$, $b$ is a parameter that controls the smoothness of the function, $\rho$ is the length scale parameter and $\beta_\alpha$  is the modified Bessel function of the second kind with parameter $\alpha$  \cite{seeger2004gaussian}.
A specific and important property of the Matérn kernel is that for $b > \frac{1}{2}$, the RKHS corresponding to different scales are all equivalent to the Sobolev space $W^b_2$ \cite{seeger2004gaussian,matern2013spatial,kanagawa2020convergence}.

\section{Illustration of BQ }\label{appendix.BQ illustration}

Here, we will take the integral
\begin{equation}\label{eq.illustration integral}
\int_{2}^{10} \left[\frac{t}{10}\sin{(\frac{3\pi}{5}t)} + 2 \right]\,  \mathrm{d}t 
\end{equation}
as an example to illustrate BQ using Wiener kernel and Matérn kernel, see Figure \ref{fig.BQ illustration Wiener} and \ref{fig.BQ illustration Matern} respectively. It can be seen from the figures that BQ can be understood as first fitting the integrand using Gaussian process regression, and then performing integration. The accuracy of BQ is related to the choice of the Gaussian process kernel and the sample size. For evenly spaced samples, as the sample size increases, the accuracy of BQ improves. 
Besides, we also plot the integral values and their computational errors for further illustration; see figure \ref{fig.BQ intgeral}.

\begin{figure}[!h]
    \centering
    \begin{subfigure}{0.49\linewidth}
\includegraphics[width=\linewidth]{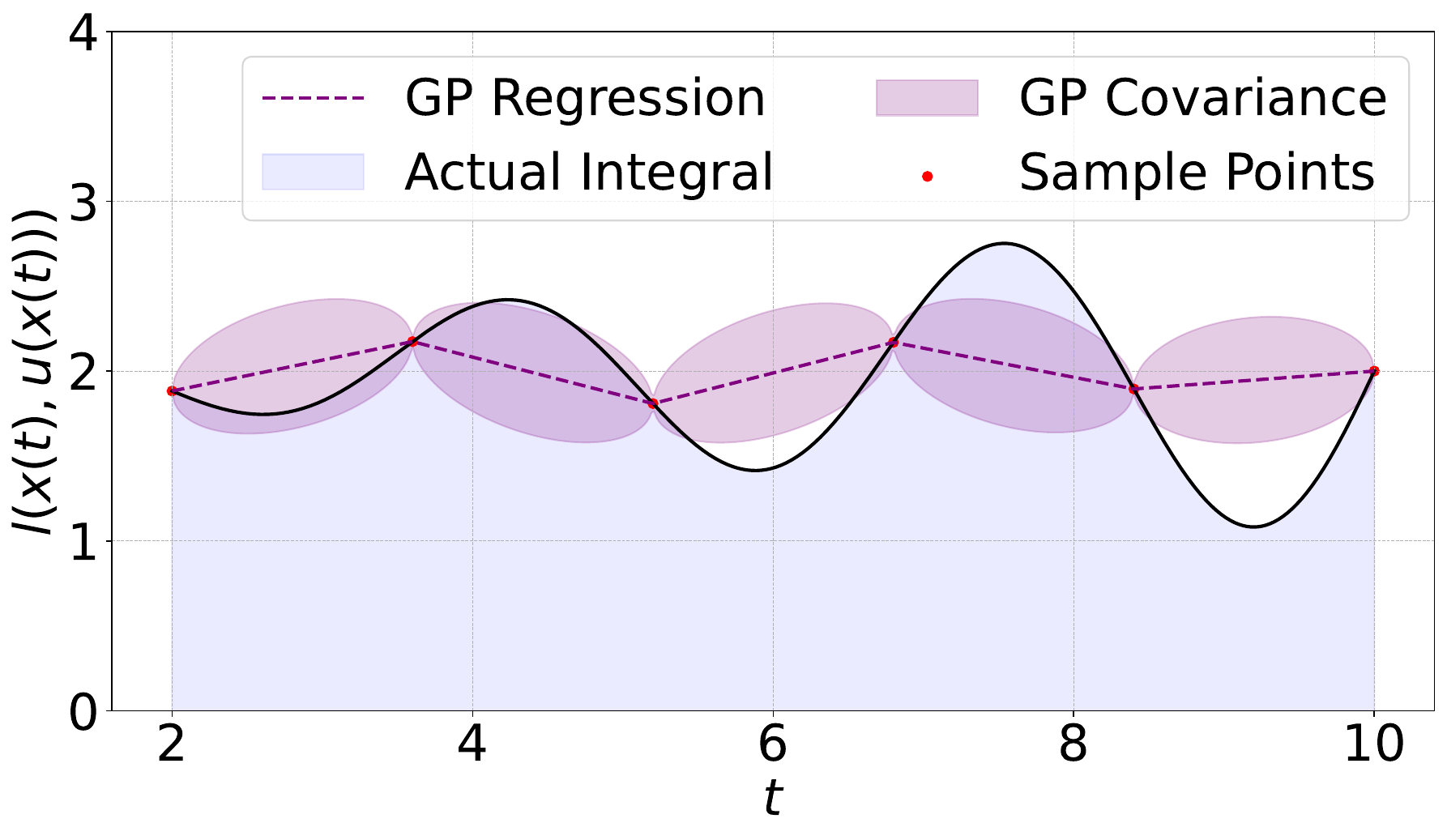}
    \caption{$N=6$}
    \end{subfigure}
    \hfill
    \begin{subfigure}{0.49\linewidth}
\includegraphics[width=\linewidth]{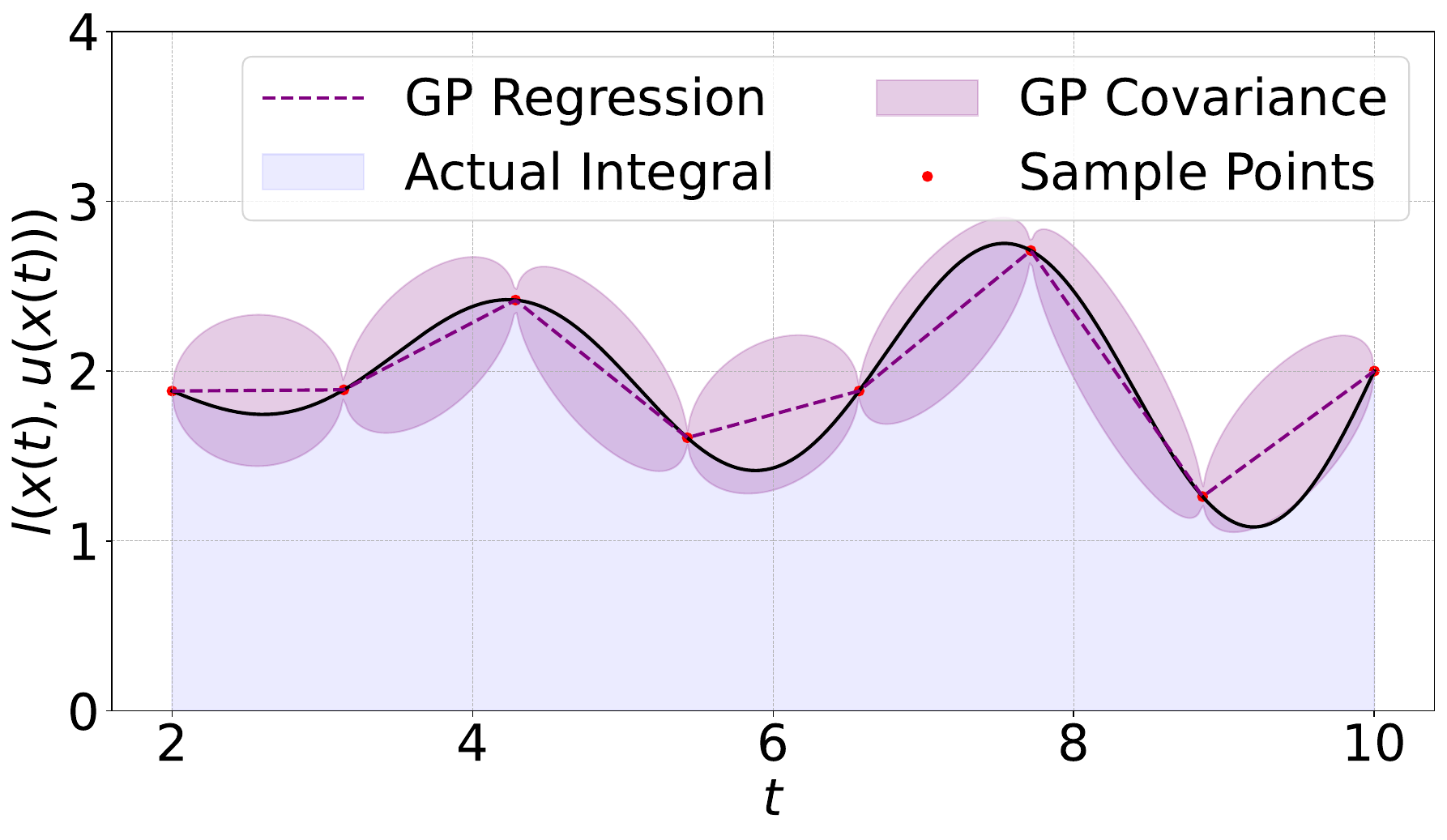}
    \caption{$N=8$}
    \end{subfigure}
    \\
    \begin{subfigure}{0.49\linewidth}
    \includegraphics[width=\linewidth]{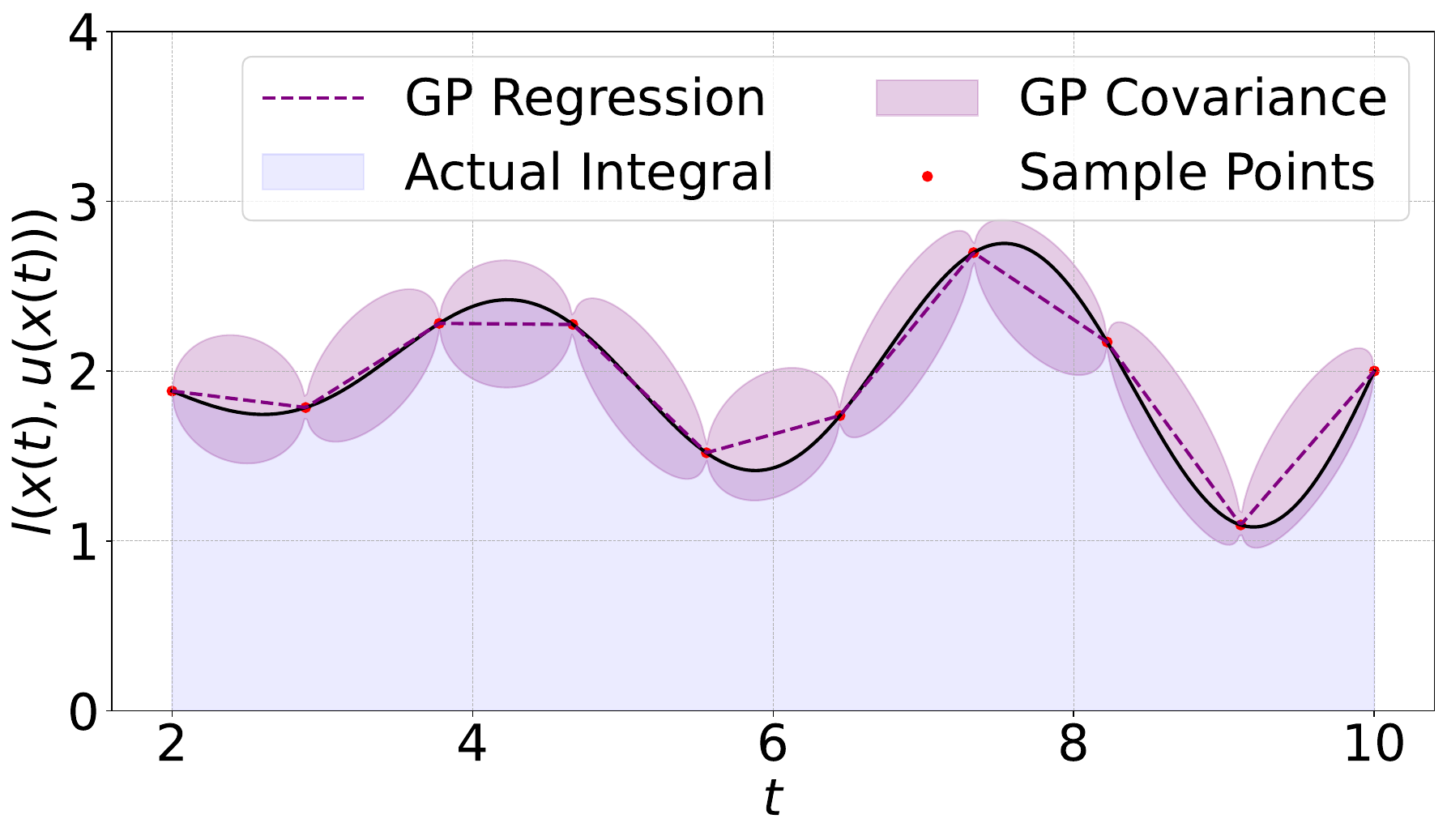}
    \caption{$N=10$}
    \end{subfigure}
    \hfill
    \begin{subfigure}{0.49\linewidth}
\includegraphics[width=\linewidth]{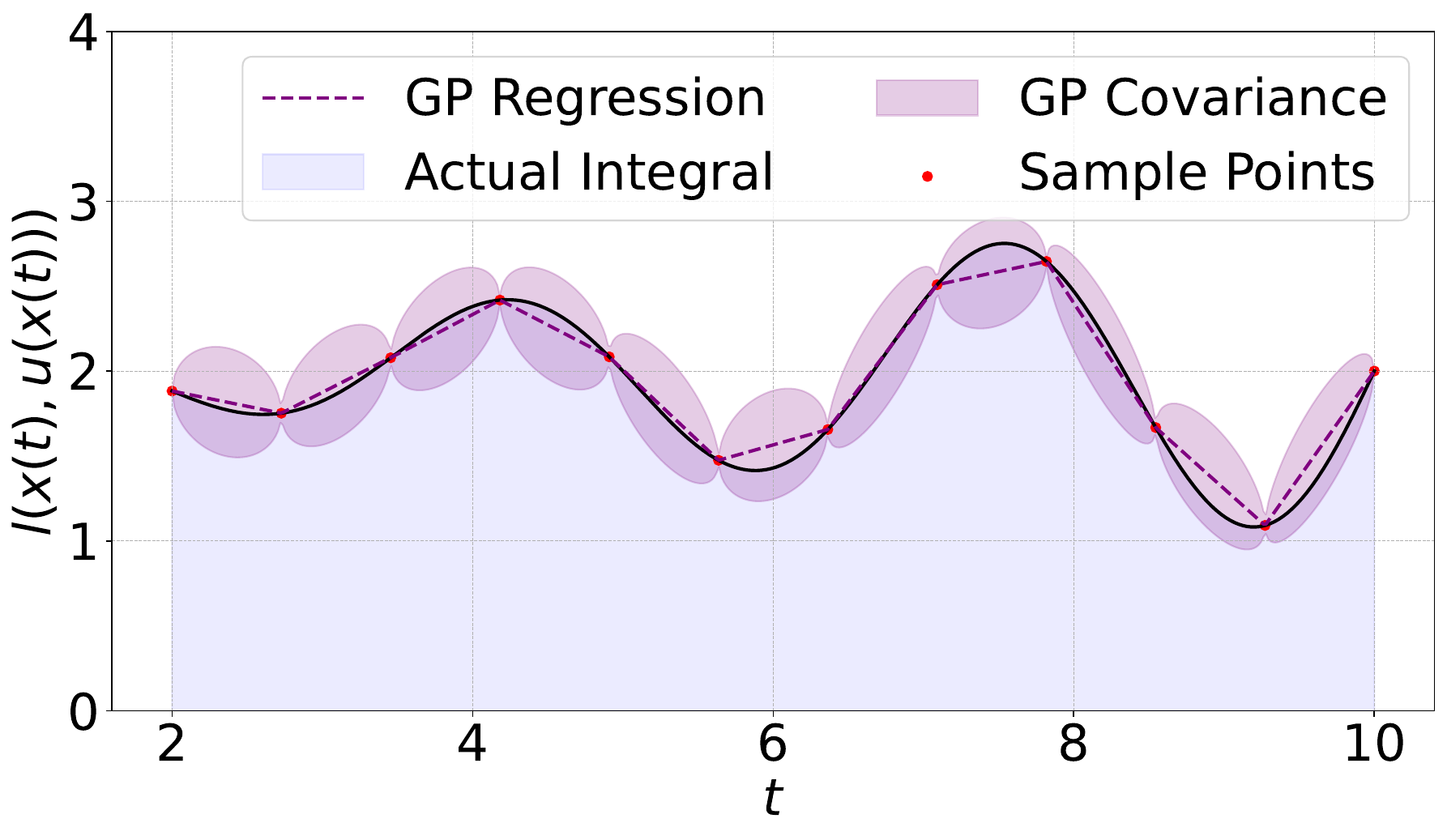}
    \caption{$N=12$}
    \end{subfigure}
    \caption{Illustration of BQ with Wiener kernel (equal to the trapezoidal rule) for the integral in \eqref{eq.illustration integral}.}\label{fig.BQ illustration Wiener}
\end{figure}

\begin{figure}[!h]
    \centering
    \begin{subfigure}{0.49\linewidth}
\includegraphics[width=\linewidth]{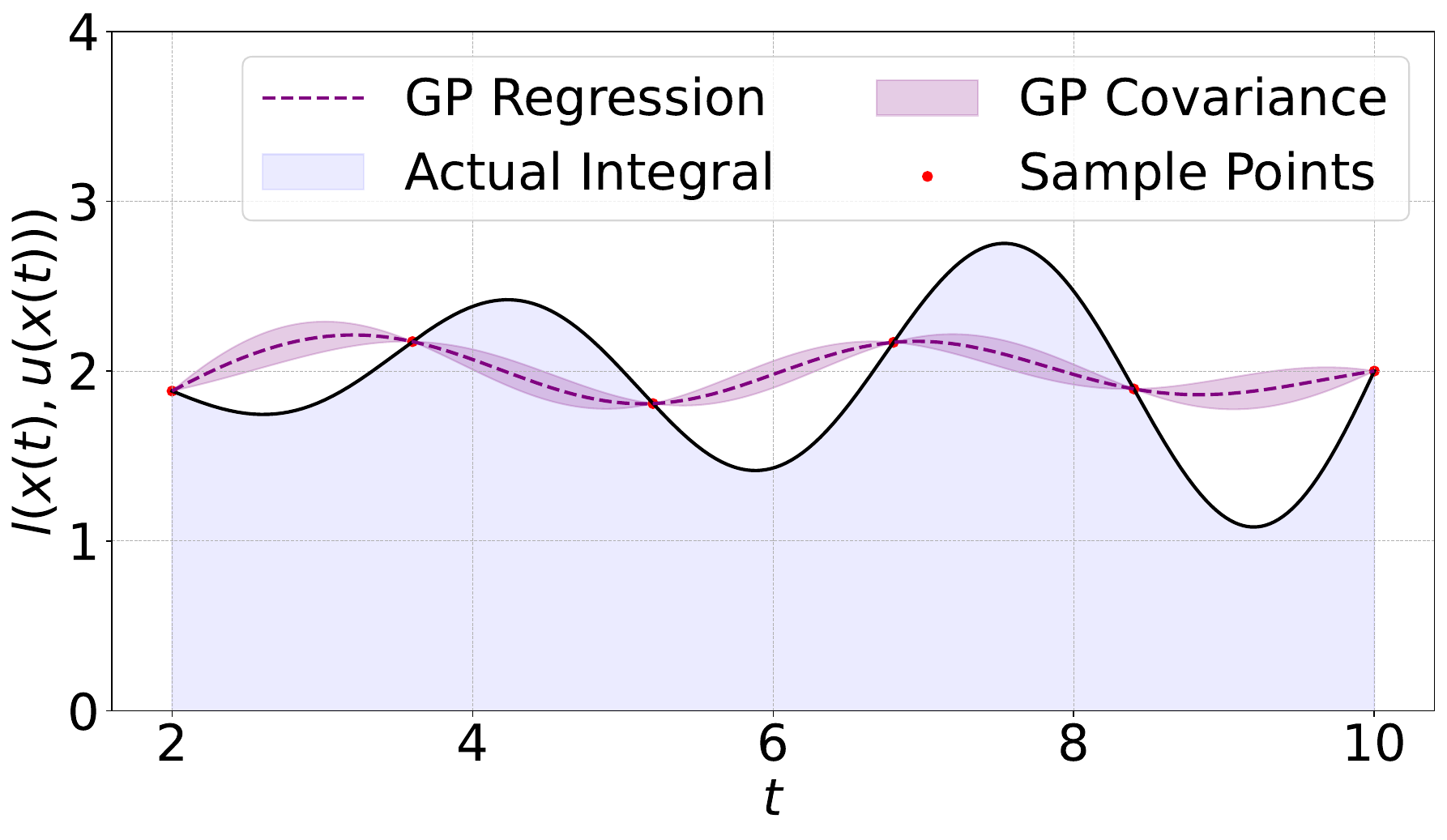}
    \caption{$N=6$}
    \end{subfigure}
    \hfill
    \begin{subfigure}{0.49\linewidth}
\includegraphics[width=\linewidth]{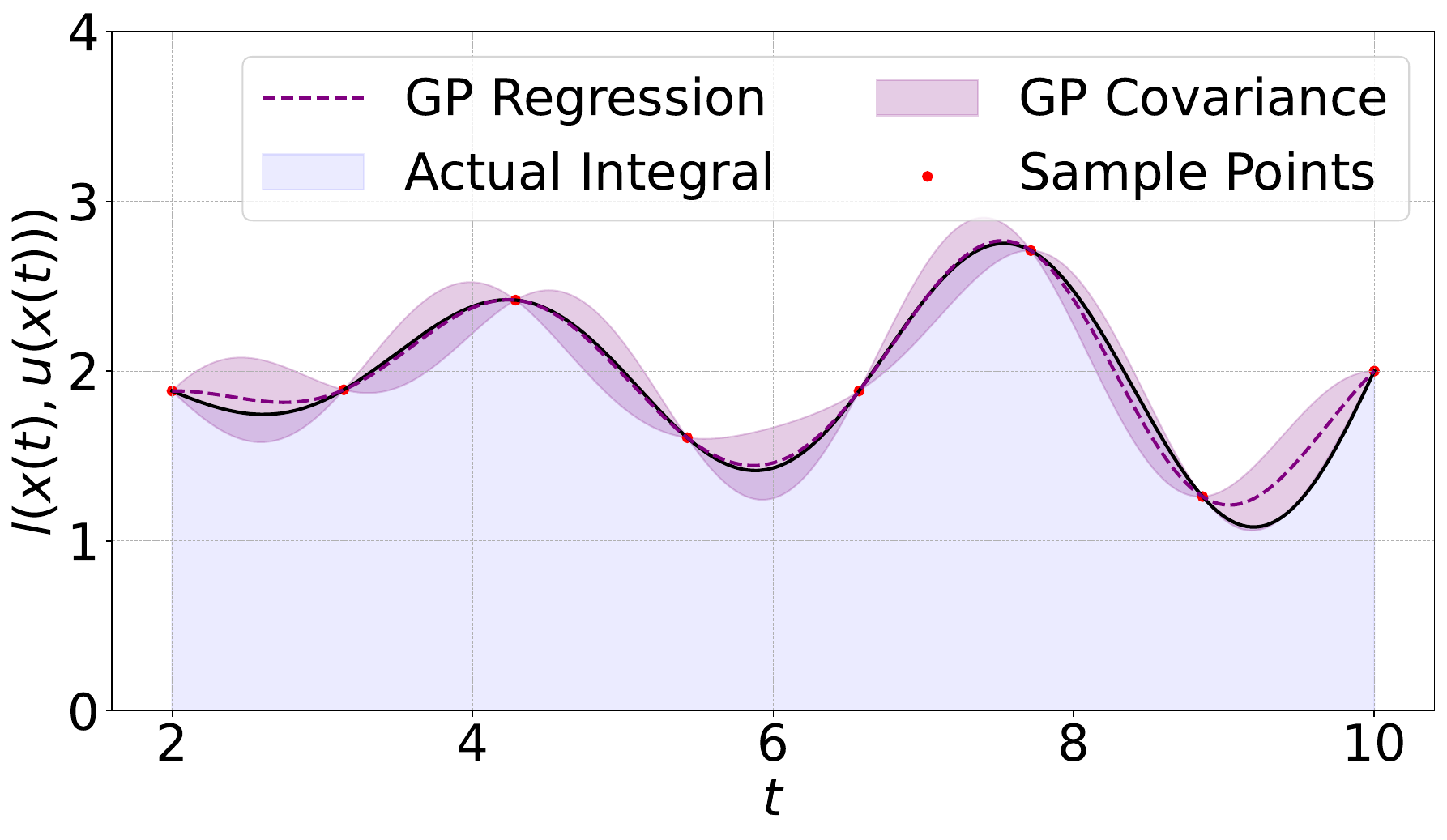}
    \caption{$N=8$}
    \end{subfigure}
    \\
    \begin{subfigure}{0.49\linewidth}
    \includegraphics[width=\linewidth]{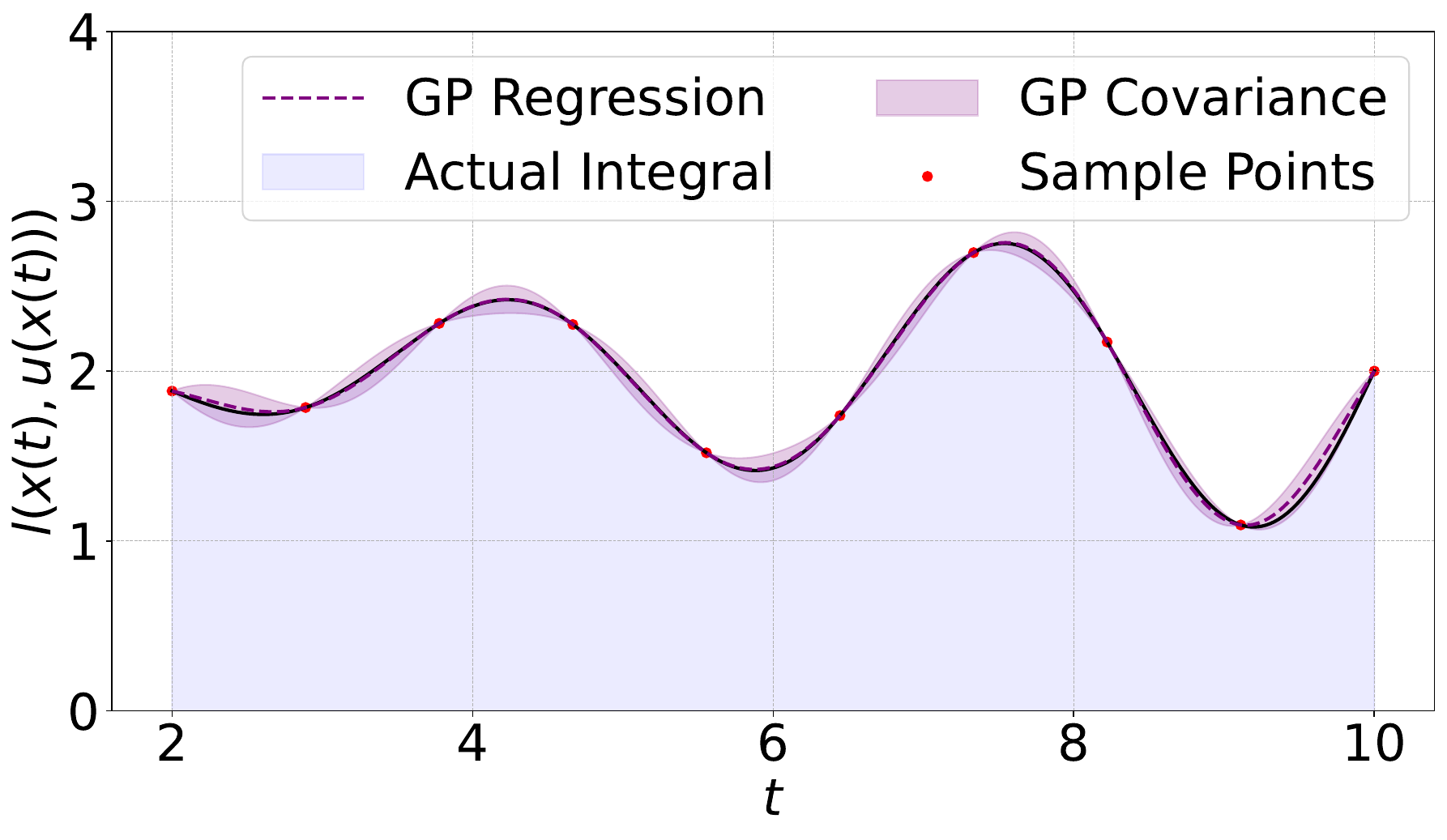}
    \caption{$N=10$}
    \end{subfigure}
    \hfill
    \begin{subfigure}{0.49\linewidth}
\includegraphics[width=\linewidth]{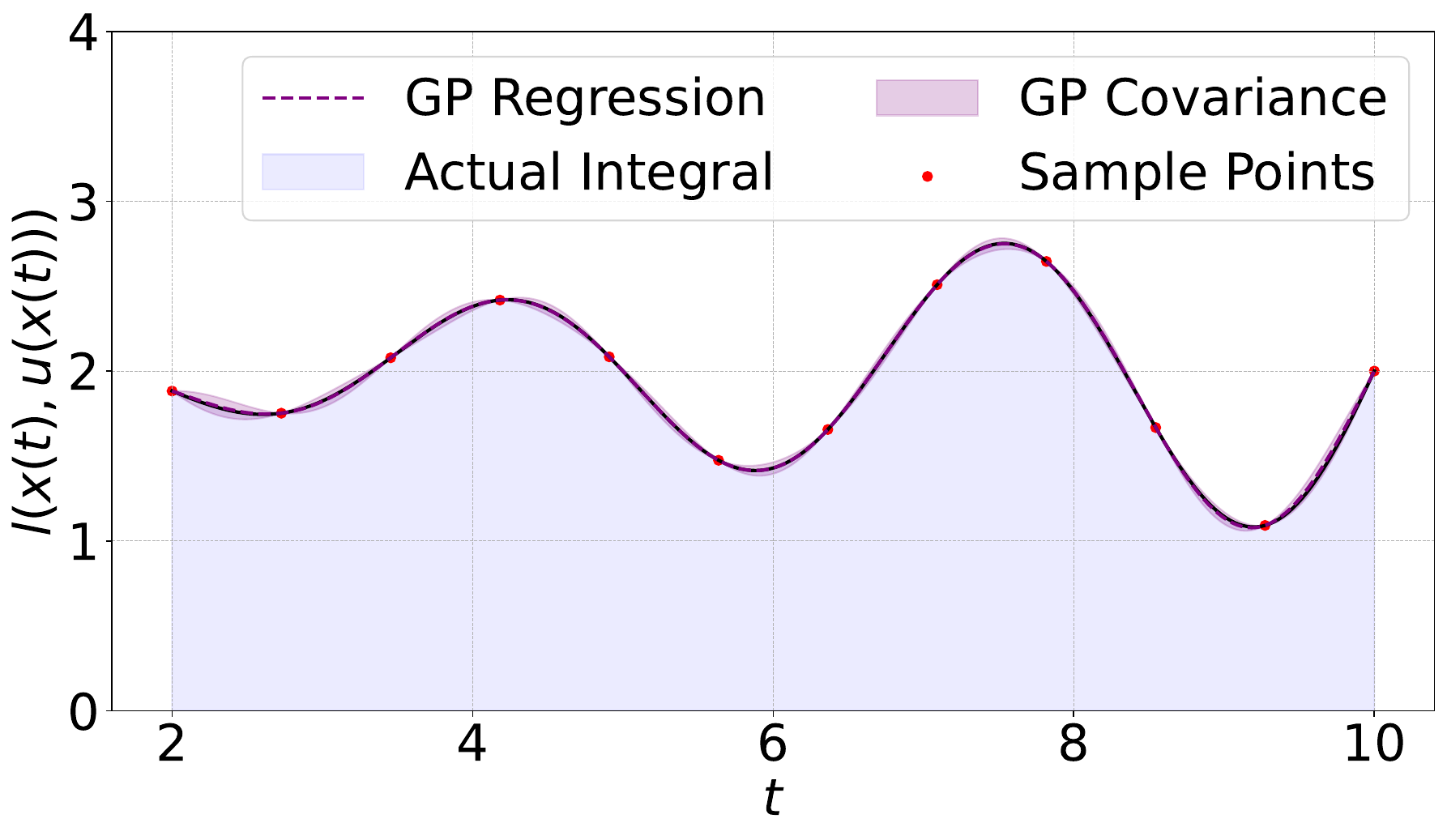}
    \caption{$N=12$}
    \end{subfigure}
    \caption{Illustration of BQ with Matérn kernel for the integral in \eqref{eq.illustration integral}.}
\label{fig.BQ illustration Matern}
\end{figure}

\begin{figure}[!h]
    \centering
        \begin{subfigure}{0.49\linewidth}
\includegraphics[width=\linewidth]{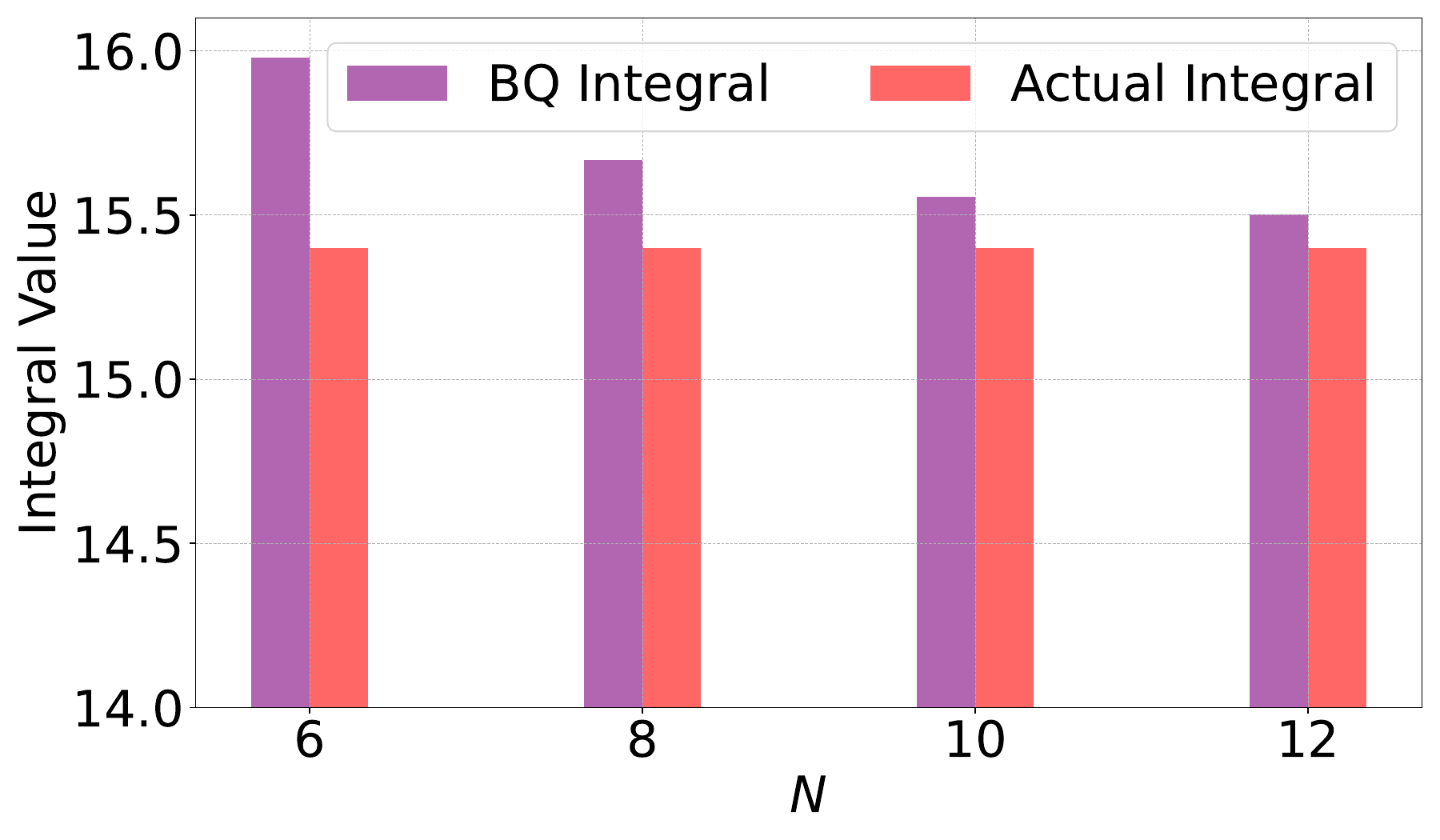}
    \caption{Integral value for Wiener kernel}
    \end{subfigure}
    \hfill
\begin{subfigure}{0.49\linewidth}
\includegraphics[width=\linewidth]{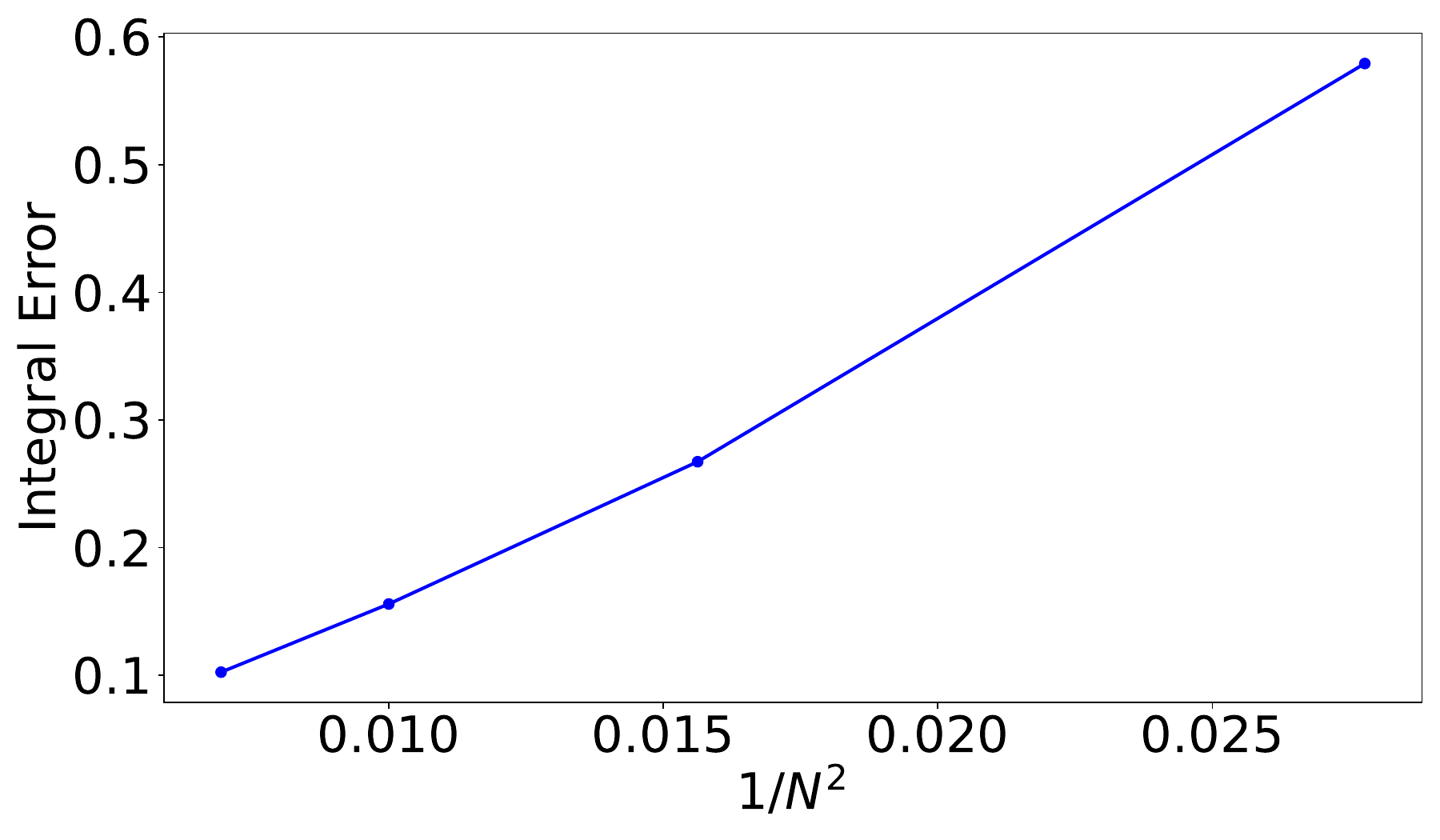}
    \caption{Integral error for Wiener kernel}
    \end{subfigure}
    \\
    \begin{subfigure}{0.49\linewidth}
    \includegraphics[width=\linewidth]{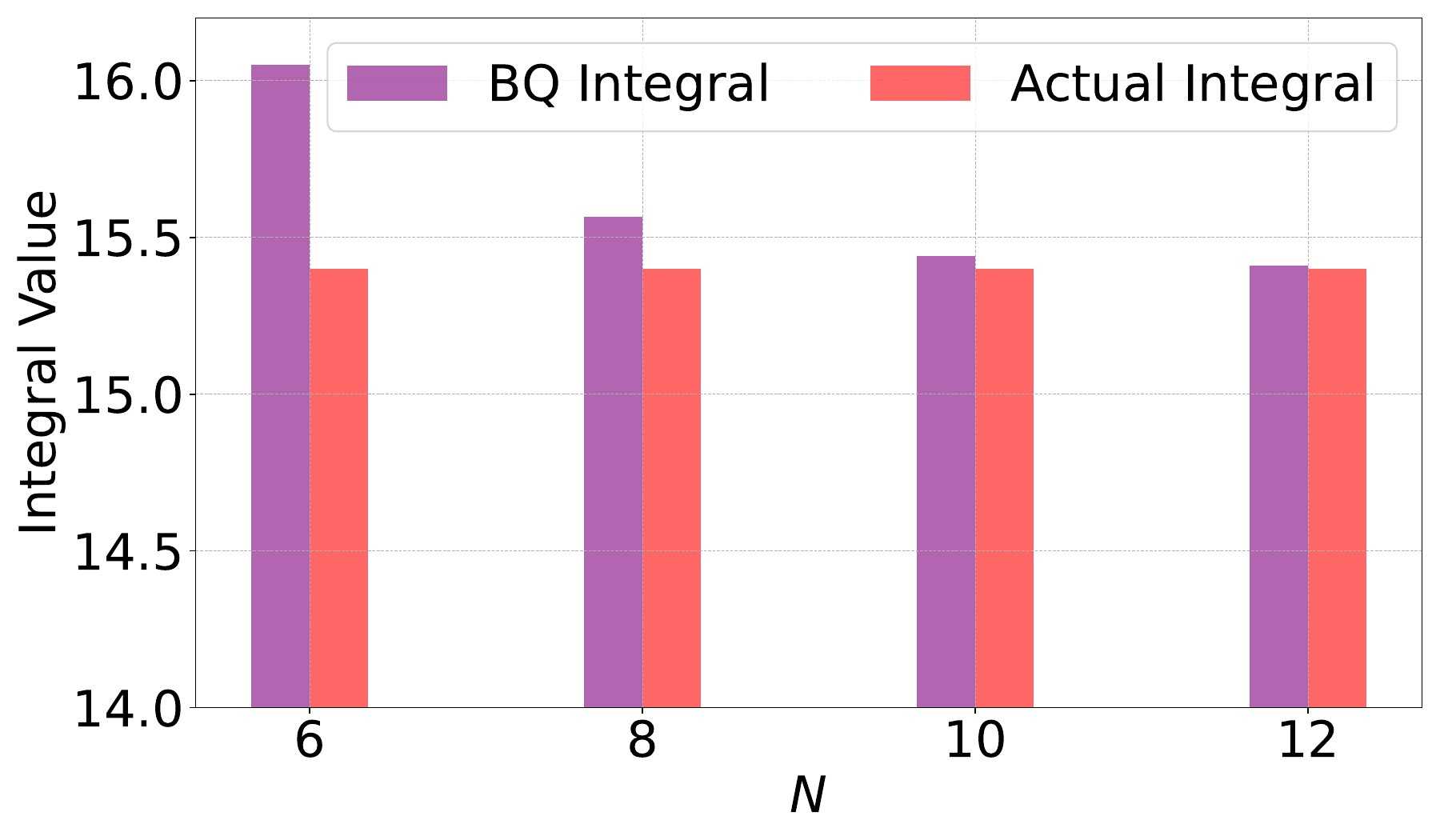}
    \caption{Integral value for Matérn kernel}
    \end{subfigure}
    \hfill
    \begin{subfigure}{0.49\linewidth}
\includegraphics[width=\linewidth]{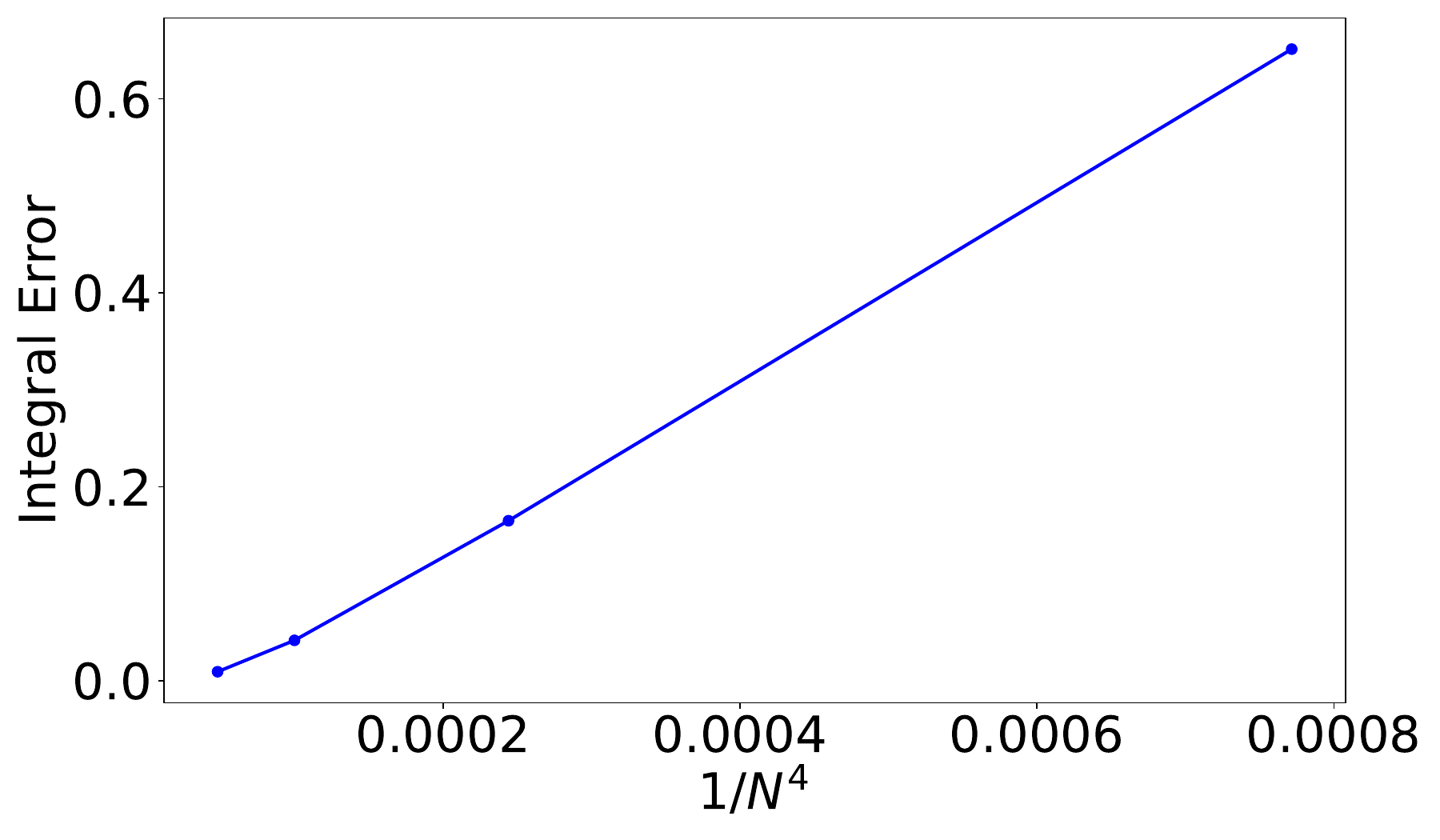}
    \caption{Integral error for Matérn kernel}
    \end{subfigure}
    \caption{Integral values and errors of BQ for the integral in \eqref{eq.illustration integral}.}\label{fig.BQ intgeral}
\end{figure}

\section{Discussion About Full Column Rank Condition}\label{appendix.PE condition}

\begin{lemma}[Full Column Rank of $\Theta^{(i)}$ \cite{vrabie2009neural}]
Suppose $u^{(i)}(x) \in A(\Omega)$, and the set $\{\phi_k\}_{1}^{n_{\phi}}$ is linearly independent. Then, $\exists T>0$ such that for $\forall x(t)$ generated by $u^{(i)}(x)$, $\left\{\phi_k(x(t+T)) - \phi_k(x(t)) \right\}_{1}^{n_{\phi}}$ is also linear independent.
\end{lemma}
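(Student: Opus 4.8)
The plan is to argue by contradiction, turning a hypothetical linear dependence of the difference functions into a nontrivial \emph{flow-invariant} combination of the basis functions, and then exploiting the stabilizing property of the admissible policy to force that combination to vanish identically. First I would fix any $T>0$ and let $\Psi_T$ denote the time-$T$ flow map of the closed-loop system $\dot{x} = f(x) + g(x)u^{(i)}(x)$, which is well defined along trajectories starting in $\Omega$ because $f,g$ are locally Lipschitz, $u^{(i)}$ is continuous, and admissibility guarantees forward-complete solutions on $\Omega$. Suppose, for contradiction, that the set $\{\phi_k(x(t+T)) - \phi_k(x(t))\}_{k=1}^{n_\phi}$ is linearly dependent, i.e.\ there is a nonzero $c\in\mathbb{R}^{n_\phi}$ with $\sum_k c_k[\phi_k(x(t+T)) - \phi_k(x(t))] = 0$ along trajectories. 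Setting $P(x) := \sum_k c_k \phi_k(x)$, this reads $P(x(t+T)) = P(x(t))$, that is $P\circ\Psi_T = P$, so $P$ is invariant under the time-$T$ flow and the scalar signal $t\mapsto P(x(t))$ is $T$-periodic along every trajectory.

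Next I would invoke admissibility of $u^{(i)}$: since $u^{(i)}\in A(\Omega)$ stabilizes \eqref{eq.system equation} and yields finite cost $J(x_0,u^{(i)})=\int_0^\infty q(x)+u^{\top}Ru\,\mathrm{d}s<\infty$ with $q$ positive definite and the system zero-state observable through $q$, every trajectory satisfies $x(t)\to 0$ as $t\to\infty$. Because each $\phi_k\in\mathcal{C}^1(\Omega)$ with $\phi_k(0)=0$, the function $P$ is continuous with $P(0)=0$, whence $P(x(t))\to 0$. A continuous $T$-periodic signal that converges to $0$ must be identically zero, since $P(x(t_0)) = P(x(t_0+nT)) \to 0$ forces $P(x(t_0))=0$ for each $t_0$; hence $P(x(t))\equiv 0$ on $[0,\infty)$ for that trajectory. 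Evaluating at $t=0$ and ranging over all initial conditions $x_0\in\Omega$ (each being the start of some admissible trajectory) gives $P(x_0)=0$ for all $x_0\in\Omega$, i.e.\ $\sum_k c_k\phi_k\equiv 0$ with $c\neq 0$. This contradicts the linear independence of $\{\phi_k\}_{k=1}^{n_\phi}$, so the difference functions are linearly independent; since $T>0$ was arbitrary, the conclusion holds for every $T>0$, which in particular establishes the claimed existence.

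I expect the only delicate points to be (i) justifying $x(t)\to 0$ rigorously from the bare definition of admissibility --- this follows from finiteness of the integral of the positive-definite $q$ along bounded trajectories together with zero-state observability, which rules out any persistent or periodic nonzero motion --- and (ii) the passage from ``$P$ vanishes along every trajectory'' to ``$P$ vanishes on $\Omega$'', which is handled by noting that each point of $\Omega$ is itself an admissible initial state, so evaluation at $t=0$ suffices. The main conceptual obstacle, and the crux of the argument, is recognizing that linear dependence of the differences is \emph{equivalent} to the existence of a flow-invariant nonzero combination $P$; once that equivalence is in hand, the asymptotic stability furnished by admissibility does the rest essentially for free, and no Lipschitz or smoothness hypotheses beyond those already assumed on $\phi_k$ are required.
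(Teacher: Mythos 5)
The paper itself offers no proof of this lemma---it is quoted from \cite{vrabie2009neural} and used as a black box---so your argument can only be compared against the classical proof in that reference. That proof proceeds quite differently: it first establishes (via an auxiliary lemma going back to Beard and Abu-Khalaf--Lewis) that linear independence of $\{\phi_k\}$ together with asymptotic stability of the closed loop implies linear independence of $\{\nabla\phi_k^{\top}(f+gu^{(i)})\}$, and then argues by contradiction that if the difference set were dependent for \emph{every} $T>0$, a normalization-and-limit argument as $T\to 0$ would produce a nonzero $c$ with $c^{\top}\nabla\phi^{\top}(f+gu^{(i)})\equiv 0$. Your route---identifying a dependence relation with a flow-invariant combination $P\circ\Psi_T=P$ and killing $P$ via $x(t)\to 0$ and $P(0)=0$---is more elementary (no auxiliary lemma, no limiting argument) and yields a strictly stronger conclusion, namely that \emph{every} $T>0$ works rather than merely some $T$. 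It is a correct and, in my view, cleaner proof.

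Two caveats are worth making explicit. First, your argument hinges on reading the linear independence globally: the dependence relation $\sum_k c_k[\phi_k(x(t+T))-\phi_k(x(t))]=0$ must be assumed to hold with a single $c$ for all trajectories (equivalently, for all current states $x(t)\in\Omega$), since that is what licenses the step from ``$P$ vanishes along trajectories'' to ``$P\equiv 0$ on $\Omega$.'' Under the alternative per-trajectory reading the lemma is actually false (the trajectory from $x_0=0$ makes every difference identically zero), so the global reading---which matches \cite{vrabie2009neural}---is the only tenable one; you flag this correctly, but the write-up should state it as a definition rather than as a remark. Second, $x(t)\to 0$ is most safely taken directly from the stabilization clause in the definition of admissibility; deriving it from finiteness of $\int_0^{\infty}q(x(s))\,\mathrm{d}s$ alone requires boundedness and uniform continuity of the trajectory (a Barbalat-type argument), so the shortcut you sketch in point (i) needs either that extra justification or the direct appeal to stability.
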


Based on this lemma, there exist values of $T_1, T_2, ..., T_{m+1}$ such that $\Theta^{(i)}$ is invertible and thus \eqref{eq.LS} has unique solution.

\section{Proof of Theorem \ref{theorem.convergence rate}}\label{appendix.Proof of theorem.convergence rate}

First, we will give a proposition which presents the upper bound of the error between $\omega^{(i)}$ and $\hat{\omega}^{(i)}$ if the same control policy $\hat{u}^{(i)}$ is utilized.
\begin{proposition}[Solution Error of PEV]\label{prop.error bound}
For the same control policy $\hat{u}^{(i)}$, the difference of the solution of the \eqref{eq.LS} and \eqref{eq.LS approximation} can be bounded by:
\begin{equation}\nonumber
\|\omega^{(i)} - \hat{\omega}^{(i)}\|_2 \leq  \| (\hat{\Theta}^{(i)\top} \hat{\Theta}^{(i)} )^{-1}\hat{\Theta}^{(i)\top}\|_2 \| \delta{\Xi}^{(i)}\|_2. 
\end{equation}

\end{proposition}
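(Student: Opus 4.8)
The plan is to exploit the fact that, for a fixed control policy $\hat{u}^{(i)}$, the two linear systems \eqref{eq.LS} and \eqref{eq.LS approximation} share exactly the same design matrix and differ only in their right-hand sides. The state trajectory that enters $\Theta^{(i)}$ and $\hat{\Theta}^{(i)}$ through $\phi(x(\cdot))$ is generated by the policy alone and is unaffected by how the integrals $\xi$ are evaluated; hence under the same policy $\hat{u}^{(i)}$ we have $\Theta^{(i)} = \hat{\Theta}^{(i)}$. The only discrepancy is that the exact right-hand side $\Xi^{(i)}$, whose $k$-th entry is the true integral $\xi(T_k,T_{k+1},\hat{u}^{(i)})$, is replaced by its quadrature approximation $\hat{\Xi}^{(i)}$ with $k$-th entry $\hat{\xi}(T_k,T_{k+1},\hat{u}^{(i)})$.

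First I would invoke the full-column-rank property of $\hat{\Theta}^{(i)}$ (established in Appendix \ref{appendix.PE condition}) so that the least-squares solution of each system is given uniquely through the left inverse, i.e. $\omega^{(i)} = (\hat{\Theta}^{(i)\top}\hat{\Theta}^{(i)})^{-1}\hat{\Theta}^{(i)\top}\Xi^{(i)}$ and $\hat{\omega}^{(i)} = (\hat{\Theta}^{(i)\top}\hat{\Theta}^{(i)})^{-1}\hat{\Theta}^{(i)\top}\hat{\Xi}^{(i)}$. Subtracting the two expressions and using linearity, the common pseudoinverse factors out, yielding $\omega^{(i)} - \hat{\omega}^{(i)} = (\hat{\Theta}^{(i)\top}\hat{\Theta}^{(i)})^{-1}\hat{\Theta}^{(i)\top}(\Xi^{(i)} - \hat{\Xi}^{(i)})$.

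Next I would take the Euclidean norm of both sides and apply submultiplicativity of the induced matrix $2$-norm to get $\|\omega^{(i)} - \hat{\omega}^{(i)}\|_2 \le \|(\hat{\Theta}^{(i)\top}\hat{\Theta}^{(i)})^{-1}\hat{\Theta}^{(i)\top}\|_2 \, \|\Xi^{(i)} - \hat{\Xi}^{(i)}\|_2$. To close the argument I would control the residual $\Xi^{(i)} - \hat{\Xi}^{(i)}$: each of its components is precisely the signed computational error $\xi(T_k,T_{k+1},\hat{u}^{(i)}) - \hat{\xi}(T_k,T_{k+1},\hat{u}^{(i)})$, whose magnitude is bounded by the entry $\delta\xi(T_k,T_{k+1},\hat{u}^{(i)})$ of $\delta\Xi^{(i)}$ by the quadrature error bound \eqref{eq.error bound of Quadrature}. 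An entrywise absolute-value bound implies the corresponding inequality for Euclidean norms, so $\|\Xi^{(i)} - \hat{\Xi}^{(i)}\|_2 \le \|\delta\Xi^{(i)}\|_2$, and substituting delivers the claimed inequality.

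The argument is essentially a perturbation bound for a least-squares problem with a fixed design matrix, so no serious obstacle arises; the two points requiring care are the observation that the design matrices coincide under the same-policy hypothesis (so that a single pseudoinverse can be factored out, rather than two different ones which would otherwise introduce an extra term involving $\Theta^{(i)} - \hat{\Theta}^{(i)}$), and the passage from the entrywise computational-error bounds to the aggregate $2$-norm bound on $\delta\Xi^{(i)}$.
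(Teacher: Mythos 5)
Your proposal is correct and follows essentially the same route as the paper's proof: identify $\Theta^{(i)}=\hat{\Theta}^{(i)}$ under the same policy, factor the common left inverse out of the difference of solutions, apply submultiplicativity of the $2$-norm, and bound $\|\Xi^{(i)}-\hat{\Xi}^{(i)}\|_2$ by $\|\delta\Xi^{(i)}\|_2$ via the entrywise quadrature error bounds. Your write-up is in fact slightly more careful than the paper's, which leaves the full-column-rank justification and the entrywise-to-$2$-norm step implicit.
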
 
\begin{proof}
For the same control policy $u^{(i)}$, we have $\hat{\Theta}^{(i)} = \Theta^{(i)}$. Consider the linear matrix equation \eqref{eq.LS} and \eqref{eq.LS approximation}, we have
\begin{equation}\nonumber
\begin{aligned}
\|\omega^{(i)} - \hat{\omega}^{(i)}\|_2 
&= \| ({\hat{\Theta}}^{(i)\top}{\hat{\Theta}}^{(i)})^{-1}{\hat{\Theta}}^{(i)\top} (\Xi^{(i)} - \hat{\Xi}^{(i)})\|_2  \\
&\leq  \| (\hat{\Theta}^{(i)\top}\hat{\Theta}^{(i)} )^{-1}\hat{\Theta}^{(i)\top}\|_2 \| \Xi^{(i)} - \hat{\Xi}^{(i)}\|_2 \\
&\leq  \| (\hat{\Theta}^{(i)\top}\hat{\Theta}^{(i)} )^{-1}\hat{\Theta}^{(i)\top}\|_2 \| \delta{\Xi}^{(i)}\|_2.
\end{aligned}
\end{equation}    
In this equation, $\delta{\Xi}^{(i)}$ is a vector with its $k^{\text{th}}$ row defined as $\delta{\xi}(T_k, T_{k+1}, \hat{u}^{(i)})$. Note that $\delta{\xi}(T_k, T_{k+1}, \hat{u}^{(i)})$ serves as an upper bound for the computational error of the integral ${\xi}(T_k, T_{k+1}, \hat{u}^{(i)}) = \int_{T_k}^{T_{k+1}} l(x(s), \hat{u}^{(i)}(x(s))) \, \mathrm{d}s$.
\end{proof}
Then we present the proof of Theorem \ref{theorem.convergence rate}: 
\begin{proof}
We choose the norm in the Banach space $\sV$ as the infinity norm, i.e., $\|V\|_{\Omega} = \|V\|_{\infty} = \sup_{x \in \Omega} |V(x)|$. Because $\Omega$ is bounded and $\phi_k \in \sC^1, \forall k=1,2,...,n_{\phi}$, thus
$\|\phi_k\|_{\infty} = \sup_{x \in \Omega}|\phi_k(x)|$ and $\| \phi \|_{\infty} := \max_{1\leq k \leq n_{\phi}} \left\{\|\phi_k\|_{\infty}\right\}$ exists. From Proposition \ref{prop.error bound} and Cauchy-Schwarz inequality, we have
\begin{equation}\nonumber
\begin{aligned}
\left|E^{(i)}(x) \right| &= \|E^{(i)}(x) \|_{2} 
\\
&\leq 
\|\hat{\omega}^{(i)} - {\omega}^{(i)} \|_{2} \|\phi(x)\|_{2} 
\\
& \leq \| ( \hat{\Theta}^{(i)\top} \hat{\Theta}^{(i)} )^{-1} \hat{\Theta}^{(i)\top} \|_2 \|\delta{\Xi}^{(i)}\|_2 \|\phi (x)\|_{2} \\
& \leq \sup_i \left\{\| ( \hat{\Theta}^{(i)\top} \hat{\Theta}^{(i)} )^{-1} \hat{\Theta}^{(i)\top} \|_2 \|\delta{\Xi}^{(i)}\|_2\right\} 
\|\phi(x) \|_{2}, \quad \forall x \in \Omega.     
\end{aligned}
\end{equation}
Thus, we have
\begin{equation}\label{eq.E_i}
\begin{aligned}
\|E^{(i)} \|_{\infty} &= 
\sup_{x}|E^{(i)}(x)|
\\
&\leq \sup_i \left\{\| ( \hat{\Theta}^{(i)\top} \hat{\Theta}^{(i)} )^{-1} \hat{\Theta}^{(i)\top} \|_2 \|\delta{\Xi}^{(i)}\|_2\right\}  \|\phi \|_{\infty}
\\ &= \Bar{\epsilon}.
\end{aligned}
\end{equation}
Combining \eqref{eq.E_i} with \eqref{eq.error bound in each PEV} in Theorem \ref{theorem.Convergence of PI}, we can obtain
\begin{equation}\nonumber
\|\hat{V}^{(i)} - V^*\|_{\infty} \leq \frac{2\Phi \Bar{\epsilon}}{\Phi - Mr_0} + \frac{2^{-i}(2r_0L_0)^{2^i}}{L_0},
\end{equation}
which directly leads to
\begin{equation}\nonumber
|\hat{V}^{(i)}(x) - V^*(x)| \leq \frac{2\Phi \Bar{\epsilon}}{\Phi - Mr_0} + \frac{2^{-i}(2r_0L_0)^{2^i}}{L_0}, \quad \forall x \in \Omega.
\end{equation}
\end{proof}

\section{Additional Simulation Results } \label{appendix.simulation results}

In this subsection, we show additional simulations results for the convergence rate of the controller and the accumulated costs for Example 1 and Example 2 in Section \ref{sec.simulations}.

\textbf{Example 1:} 

\textbf{Control Gain Matrix:} For linear systems, the learned control policy adheres to $u^{(\infty)}(x) = -\hat{K}^{(\infty)}x$ where $\hat{K}^{(\infty)}$ is the learned control gain matrix which is computed as \cite{jiang2017robust}:
    \begin{equation}\nonumber
    \hat{K}^{(\infty)} = R^{-1}B^{\top}\hat{P}^{(\infty)}.
    \end{equation}
    Here, $\hat{P}^{(\infty)}$ ensures $\hat{\omega}^{(\infty)} = \text{vec}(\hat{P}^{(\infty)})$. The Frobenius norm difference between the learned and optimal control gain matrix $K^* = \begin{bmatrix} 0.90 & 1.89 & 1.60 \end{bmatrix}$, denoted as $\|\hat{K}^{(\infty)} - K^*\|_F$, is depicted in Figure \ref{fig.Example 1-K}.

    \begin{figure}[!h]
        \centering
        \begin{subfigure}[!h]{0.45\linewidth}
            \includegraphics[width=\linewidth]{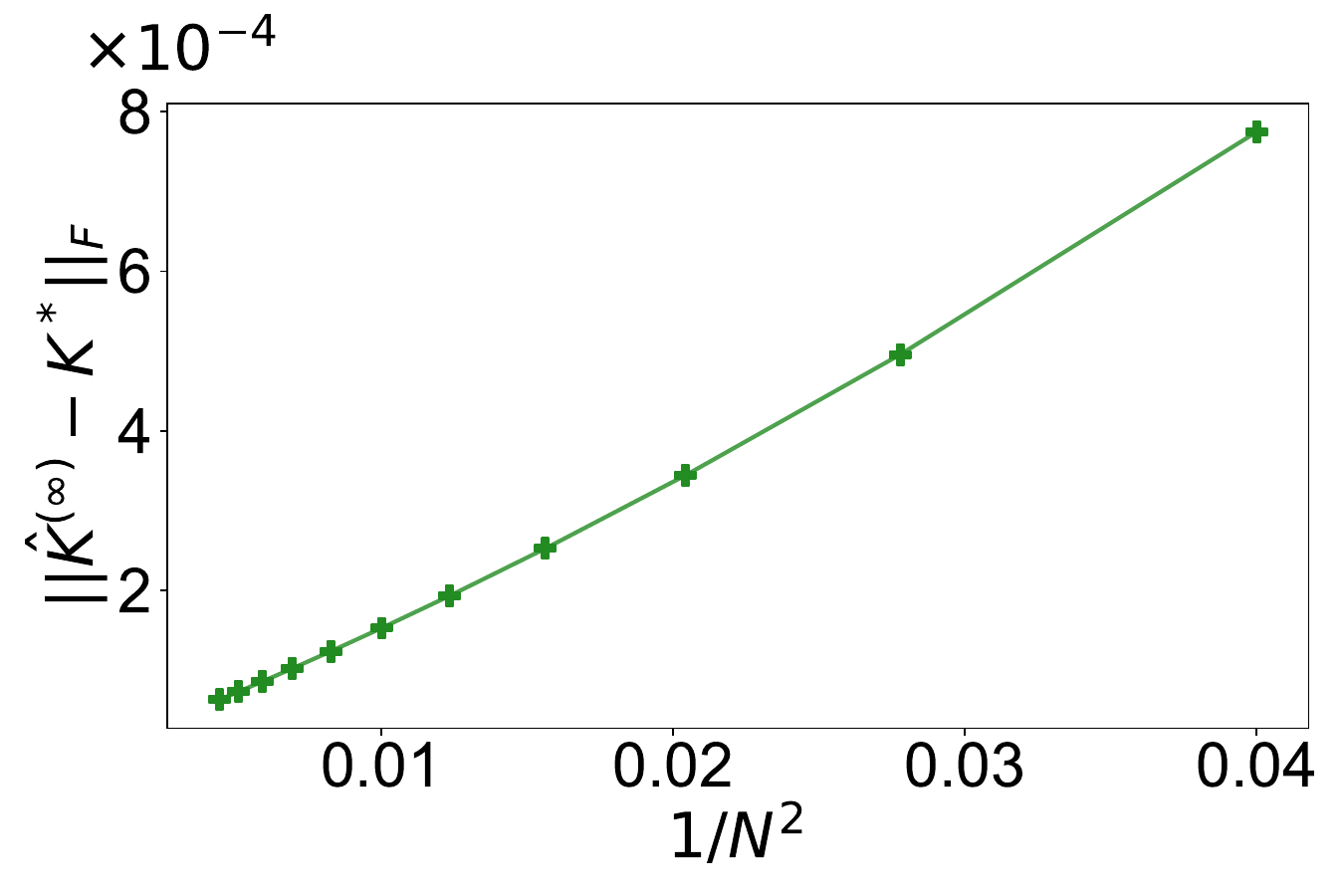}
            \caption{Trapezoidal Rule}
        \end{subfigure}
        \hfill
        \begin{subfigure}[!h]{0.45\linewidth}
            \includegraphics[width=\linewidth]{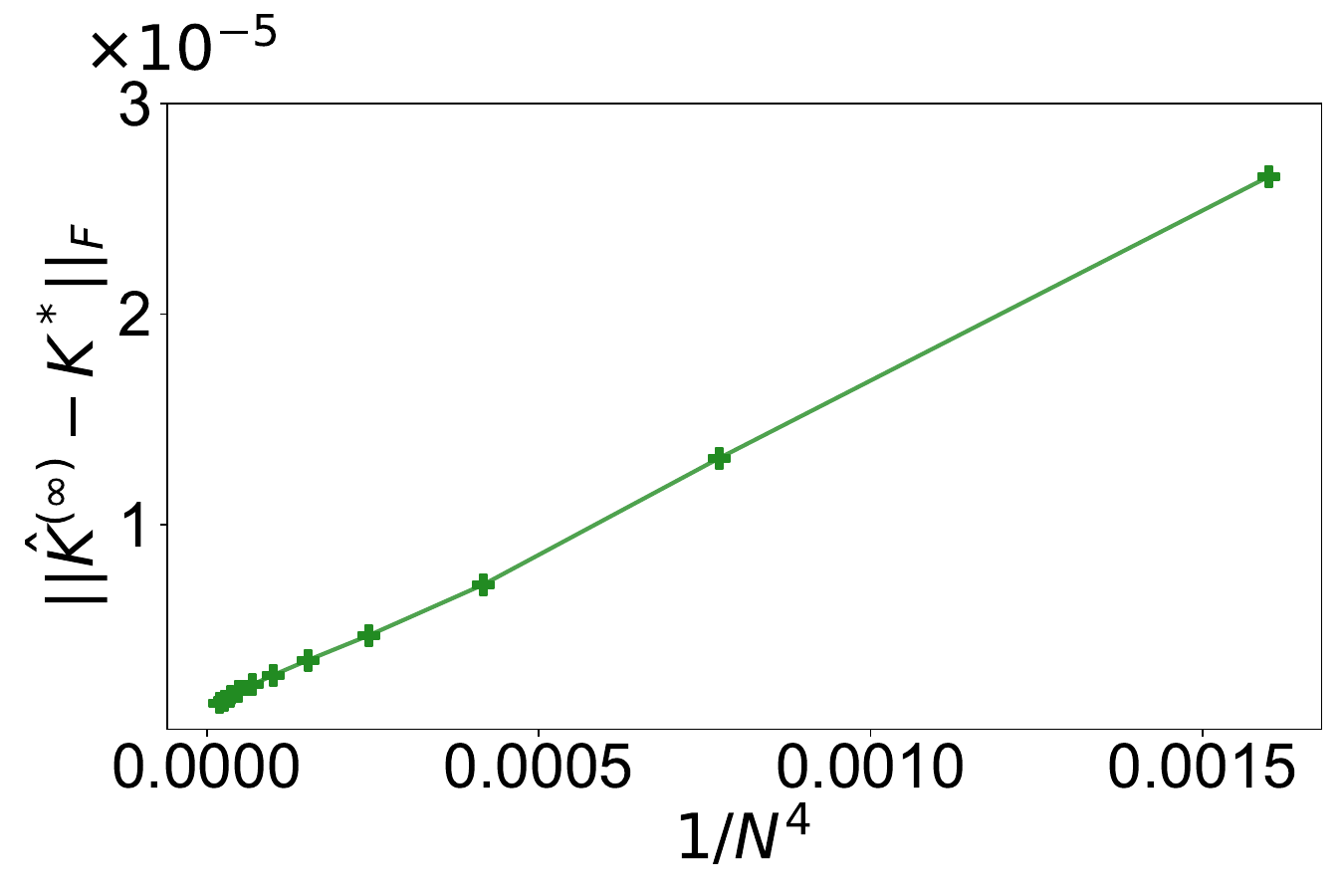}
            \caption{BQ with Matérn Kernel}
        \end{subfigure}
        \caption{Simulations for Example 1 showing convergence rates of $\hat{K}^{(\infty)}$ computed via the trapezoidal rule and BQ with Matérn Kernel ($b=4$) as $O(N^{-2})$ and $O(N^{-4})$ respectively.}\label{fig.Example 1-K}
    \end{figure}

\textbf{Average Accumulated Cost:} The average accumulated costs of the learned and optimal policies, represented as $J$ and $J^*$ respectively, with initial state $x_0 \sim \gN(0, 100^2 \cdot I_{3\times3})$, are defined as:
    \begin{equation}\nonumber
    J = \E_{x_0 \sim \gN(0, 100^2 \cdot I_{3\times3})}\left\{\hat{V}^{(\infty)}(x_0)\right\}, \quad
    J^* = \E_{x_0 \sim \gN(0, 100^2 \cdot I_{3\times3})}\left\{V^{*}(x_0)\right\}.
    \end{equation}
    Their difference concerning the sample size $N$ is shown in Figure \ref{fig.Example 1-J}.

    \begin{figure}[h]
        \centering
        \begin{subfigure}[h]{0.45\linewidth}
            \includegraphics[width=\linewidth]{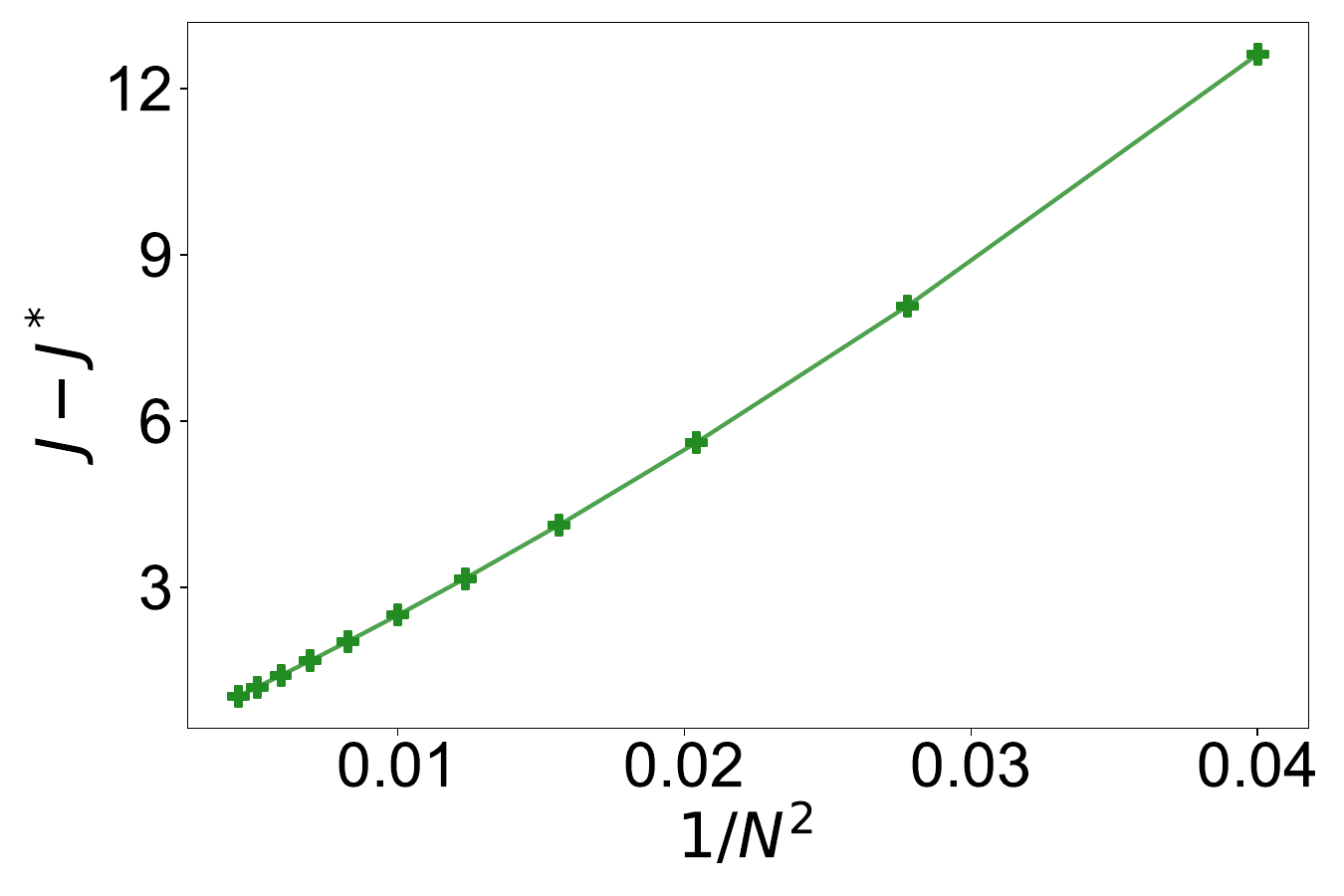}
            \caption{Trapezoidal Rule}
        \end{subfigure}
        \hfill
        \begin{subfigure}[h]{0.45\linewidth}
            \includegraphics[width=\linewidth]{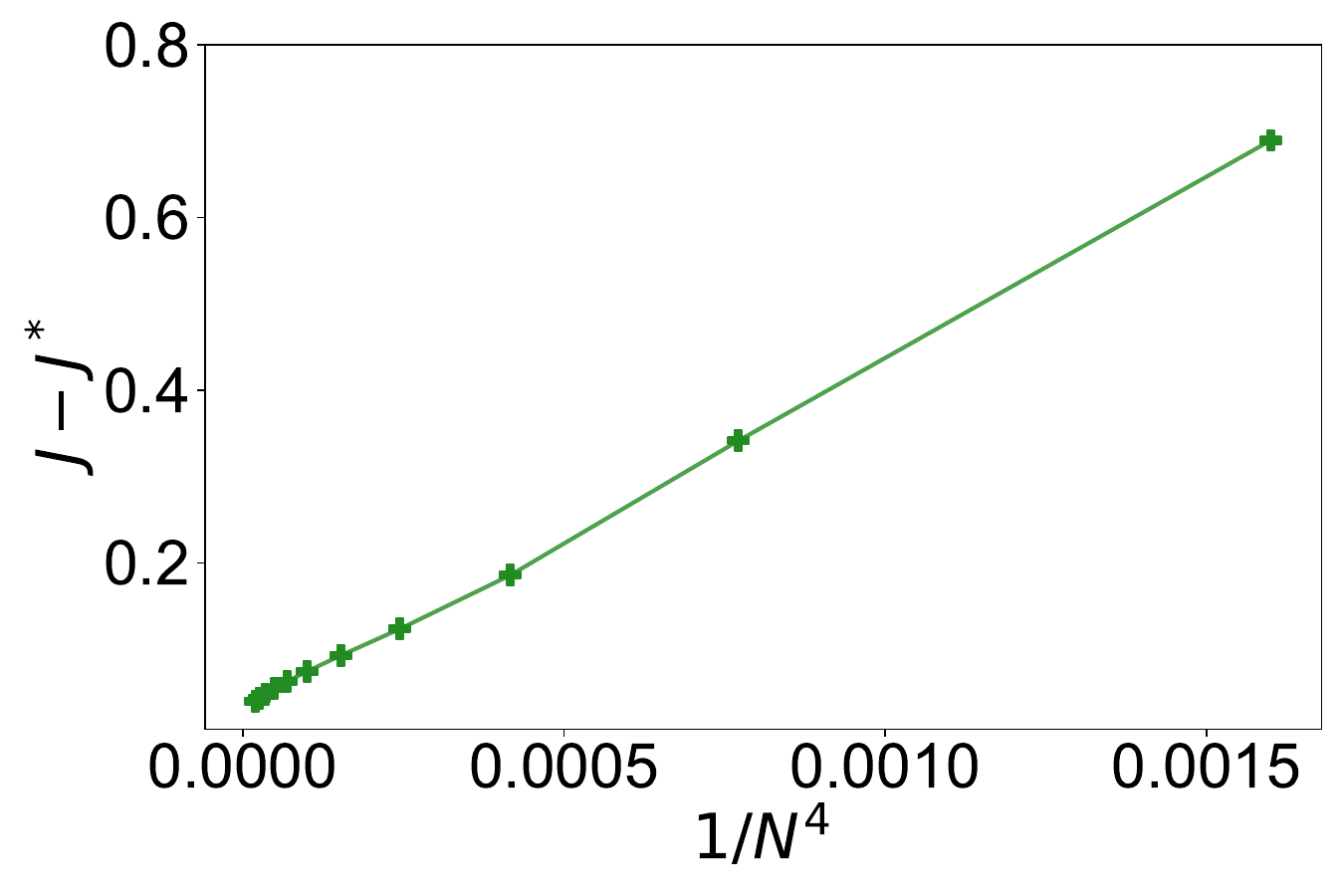}
            \caption{BQ with Matérn Kernel}
        \end{subfigure}
        \caption{Simulations for Example 1 illustrating convergence rates of the average accumulated cost $J$ computed via the trapezoidal rule and BQ with Matérn Kernel ($b=4$) as $O(N^{-2})$ and $O(N^{-4})$ respectively.}\label{fig.Example 1-J}
    \end{figure}

\textbf{Example 2:} 

\textbf{Control Policy:} For nonlinear systems, the control gain matrix does not exist. Instead, the learned control policy is given by 
$$\hat{u}^{(\infty)}(x) = -\frac{1}{2}R^{-1}g(x)^{\top}\nabla_x \hat{V}^{(\infty)}.$$ 
Therefore, we present the average control difference across state $x \sim \gN(0, 100^2 \cdot I_{2\times2})$:
    \begin{equation}\nonumber
    \E_{x\sim \gN(0, 100^2 \cdot I_{2\times2})}\left\{|\hat{u}^{(\infty)}(x) - u^{*}(x)|\right\},
    \end{equation}
    where $u^*(x)$ represents the optimal controller. The average difference between the learned and optimal control policies is presented in Figure \ref{fig.Example 2-u}.

    \begin{figure}[!h]
        \centering
        \begin{subfigure}[!h]{0.45\linewidth}
            \includegraphics[width=\linewidth]{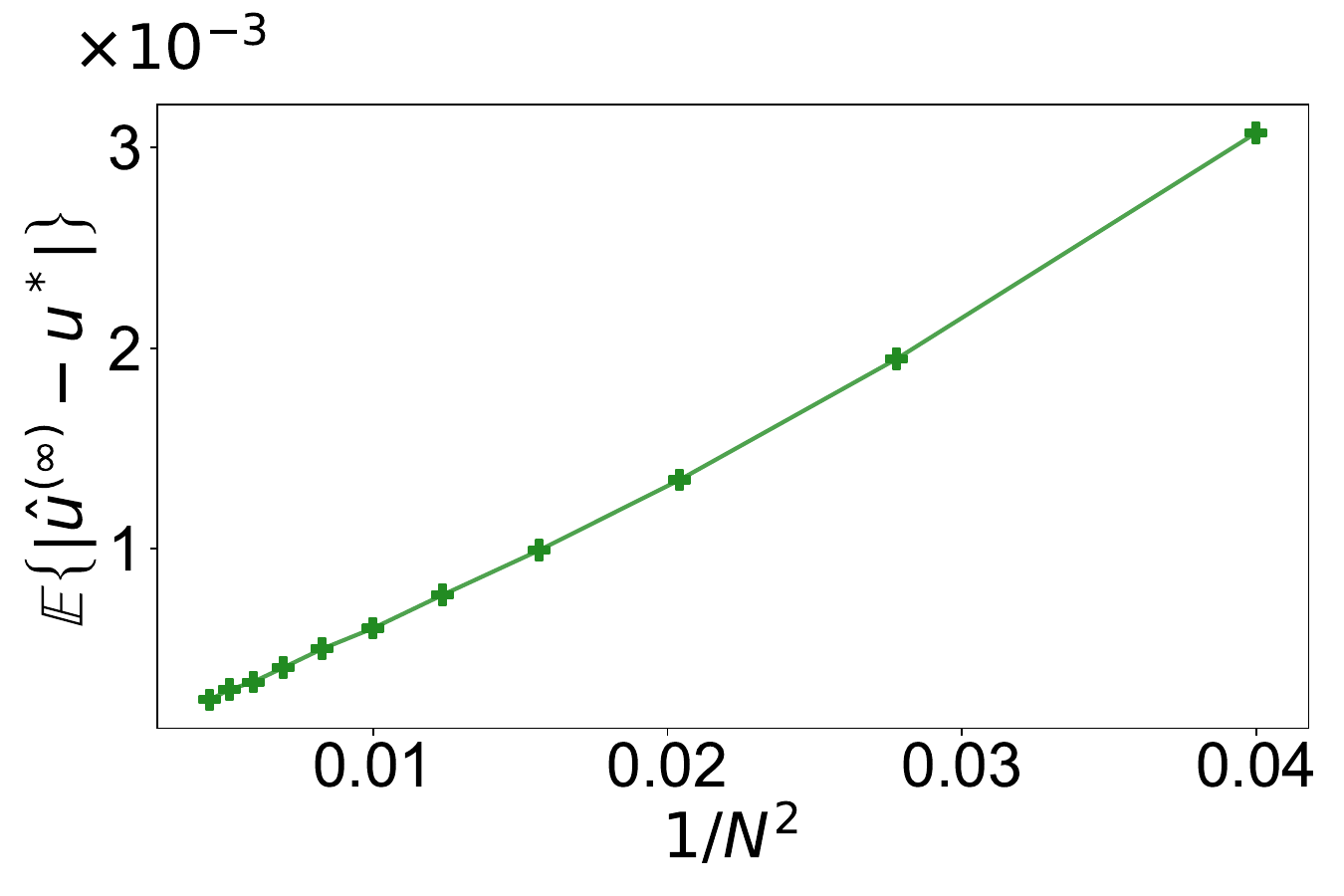}
            \caption{Trapezoidal Rule}
        \end{subfigure}
        \hfill
        \begin{subfigure}[!h]{0.45\linewidth}
            \includegraphics[width=\linewidth]{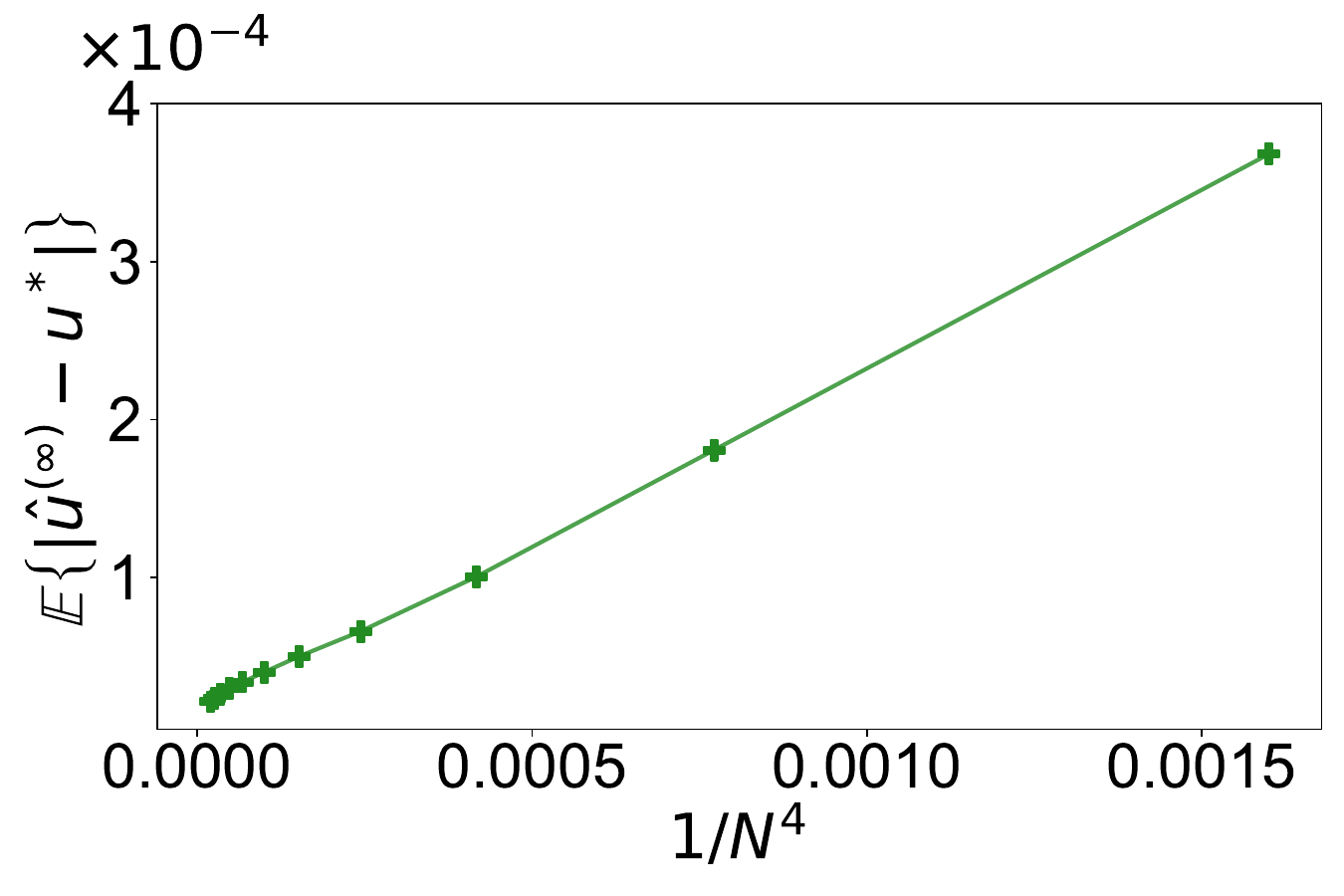}
            \caption{BQ with Matérn Kernel}
        \end{subfigure}
        \caption{Simulations for Example 2 displaying convergence rates of $\hat{u}^{(\infty)}$ via the trapezoidal rule and BQ with Matérn Kernel ($b=4$) as $O(N^{-2})$ and $O(N^{-4})$ respectively.}\label{fig.Example 2-u}
    \end{figure}

\textbf{Average Accumulated Cost:} The average accumulated costs of the learned and optimal policies, represented as $J$ and $J^*$ respectively, with initial state $x_0 \sim \gN(0, 100^2 \cdot I_{2\times2})$, are defined as:
    \begin{equation}\nonumber
    J = \E_{x_0 \sim \gN(0, 100^2 \cdot I_{2\times2})}\left\{\hat{V}^{(\infty)}(x_0)\right\}, \quad
    J^* = \E_{x_0 \sim \gN(0, 100^2 \cdot I_{2\times2})}\left\{V^{*}(x_0)\right\}.
    \end{equation}
    Their difference concerning the sample size $N$ is shown in Figure \ref{fig.Example 2-J}.

    \begin{figure}[!h]
        \centering
        \begin{subfigure}[!h]{0.45\linewidth}
            \includegraphics[width=\linewidth]{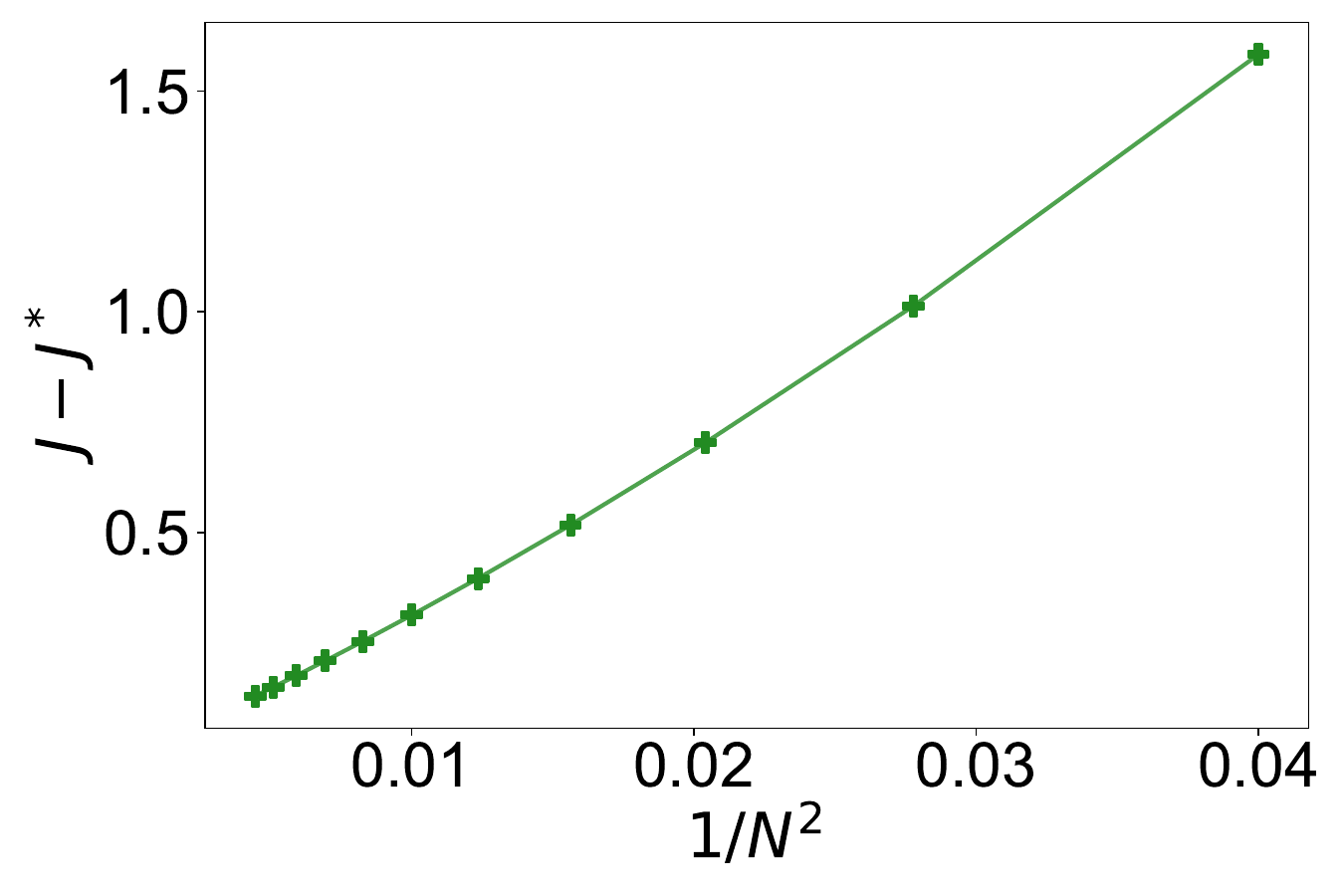}
            \caption{Trapezoidal Rule}
        \end{subfigure}
        \hfill
        \begin{subfigure}[!h]{0.45\linewidth}
            \includegraphics[width=\linewidth]{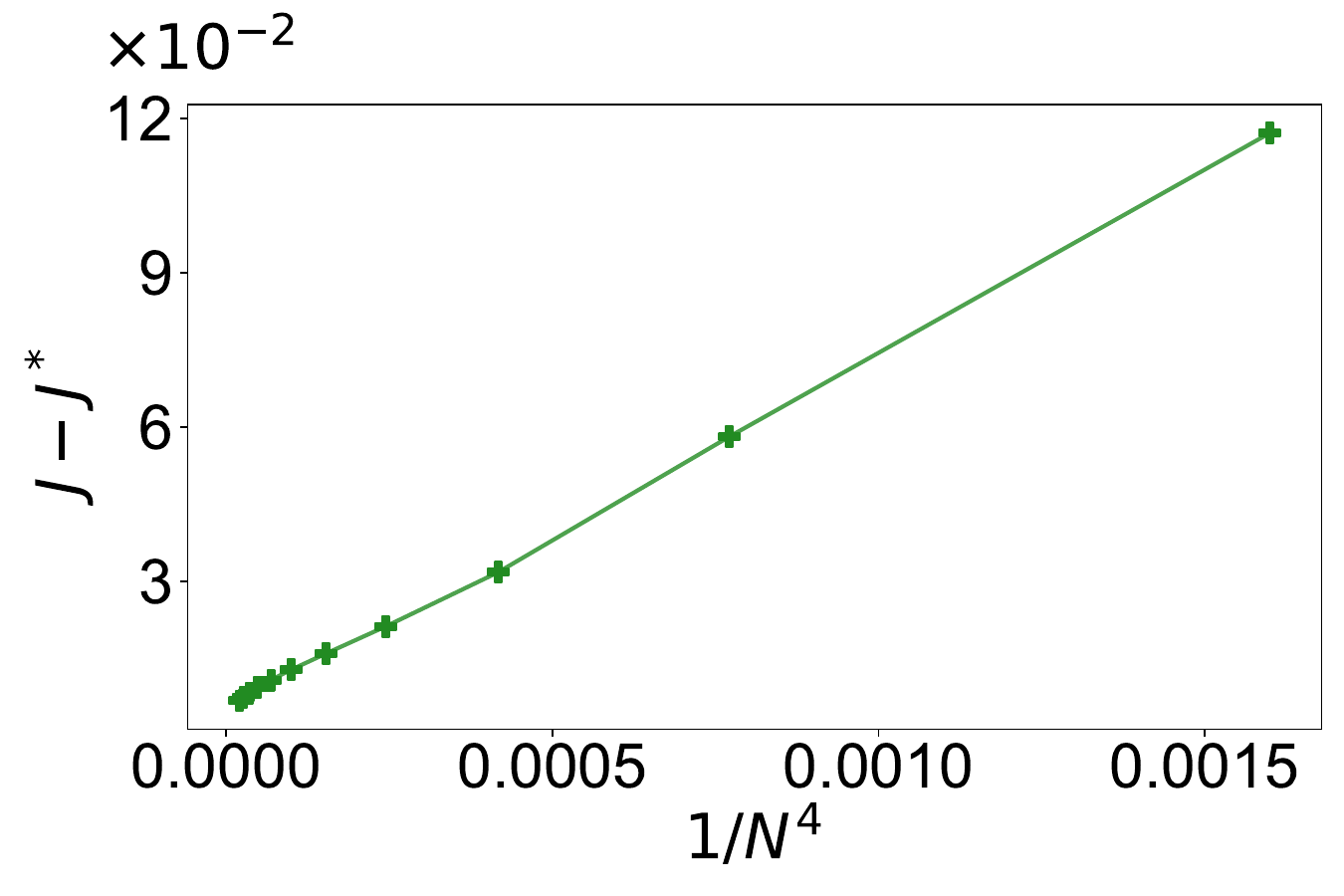}
            \caption{BQ with Matérn Kernel}
        \end{subfigure}
        \caption{Simulations for Example 2 illustrating convergence rates of the average accumulated cost $J$ via the trapezoidal rule and BQ with Matérn Kernel ($b=4$) as $O(N^{-2})$ and $O(N^{-4})$ respectively.}\label{fig.Example 2-J}
    \end{figure}

\section{Discussion about the limitations of IntRL} \label{appendix.limitations}
In this appendix, we want to discuss the limitations of IntRL algorithm. IntRL is a canonical control task in CTRL, just like the Q learning in DTRL. However, IntRL is still underdeveloped for high-dimensional systems compared to Q learning. As shown in \cite{vrabie2009neural,modares2014integral,wallace2023continuous}, IntRL is typically applied to systems with no more than 4 dimensions. This is because IntRL is hard to converge in high-dimensional systems. The difficulties in achieving convergence in high-dimensional systems can be attributed to several factors:
\begin{itemize}
    \item \textbf{Limited Approximation Capability:} IntRL uses a linear combination of basis functions to approximate the value function, which is less powerful than the neural network-based approaches prevalent in DTRL. Besides, in \cite{wallace2023continuous}, the author observes that the condition number of the pseudo-inverted matrices for IntRL will degrade significantly due to an additional basis function. This conditioning issue limits IntRL to leverage complex basis functions to obtain larger approximation capability. It is emphasised in \cite{wallace2023continuous} that severe numerical breakdowns to even small increments in problem dimension.
    \item \textbf{Lack of Effective Exploration Mechanisms:} IntRL struggles with poor data distribution due to inadequate exploration strategies. Common DTRL methods improve exploration by adding noise during data collection, but in continuous systems, this leads to complex stochastic differential equations (SDEs) that are challenging to manage. In \cite{wallace2023continuous}, the author emphasises that IRL’s lack of exploration noise causes data quality degradation especially when the state is regulated to the origin. Besides, it is observed that a lack of exploration noise will result in the phenomenon of "hyperparameter deadlock".   
    \item \textbf{Absence of Advanced Training Techniques:} Techniques like replay buffers, parallel exploration, delayed policy updates, etc., which enhance sample efficiency and training in DTRL, are lacking in IntRL.
    \item \textbf{No Discount Factor:} The absence of a discount factor in IntRL makes it difficult to ensure that policy iteration acts as a contraction mapping, which can affect the stability of algorithm's training process.
    \item \textbf{Non-existence of Q function:} In CT systems, the concept of a Q function is not directly applicable. The absence of Q function impedes the direct translation of many DTRL methodologies and insights to IntRL.
\end{itemize}
Considering these challenges, CTRL algorithms should be investigated and developed for complex and high-dimensional scenarios in future research.

\section{Further discussion about the motivation of CTRL}
Many physical and biological systems are inherently continuous in time, governed by ODEs. Compared to discretising time and then applying DTRL algorithms, directly employing CTRL offers these advantages \cite{jiang2017robust,wallace2023continuous}:
\begin{itemize}
    \item \textbf{Smoother Control Law}: Direct application of CTRL typically results in smoother control outputs. This contrasts with the outcomes of coarse discretization, where control is less smooth and can lead to suboptimal performance \cite{doya2000reinforcement}.
    \item \textbf{Time Partitioning}: When addressing CTRL directly, there's no need to predefine time partitioning. Instead, it is efficiently managed by numerical integration algorithms, which find the appropriate granularity \cite{yildiz2021continuous}. Thus, CTRL is often a better choice when the time interval is uneven.
    \item \textbf{Enhanced Precision with CT Transition Models}: Employing CT transition models in modelling CT systems offers a higher degree of precision compared to DT transition models. As evidenced in \cite{wallace2023continuous}, comparisons between CT and DT trajectories, such as in the CartPole and Acrobat tasks, reveal that CT models align more closely with true solutions, especially in scenarios involving irregularly sampled data.
\end{itemize}
While direct comparisons of performance between DTRL and CTRL in literature are rare, CTRL's prominence in the financial sector, especially in portfolio management, is notable. Many portfolio management models inherently rely on Stochastic Differential Equations (SDEs). A notable example includes the comparison in \cite{wang2020continuous} between the EMV algorithm (a CTRL-based approach) and DDPG (a DTRL approach) \cite{lillicrap2015continuous}. This comparison showed the EMV algorithm's superior performance, highlighting CTRL's advantages in scenarios that utilize SDE-based models.

The distinct benefits of CTRL across various domains, from its enhanced precision in modeling to its adaptability in handling irregular time intervals, establish it as a vital and influential methodology. Its applicability in diverse sectors, notably in complex and dynamic fields like engineering and finance, make CTRL an increasingly relevant and powerful tool.

\end{document}